\tikzset{brace/.style={decorate, decoration={brace}},
  brace mirrored/.style={decorate, decoration={brace,mirror}},
}
\newcolumntype{g}{>{\columncolor{red}}c}
\newcommand\Dominik[1]{{\color{orange}Dominik: ``#1''}}
\let\hat\widehat
\def \iid {\stackrel{\text{i.i.d.}}{\sim}}
\def \mgn {{\textnormal{mgn}}}
\def \relat {{\textnormal{rel}}}
\def \std {{\textnormal{std}}}
\def \stab {{\textnormal{stab}}}
\theoremstyle{plain}
\providecommand{\keywords}[1]
{
  \fontsize{9}{12}\selectfont
  \textbf{\textit{Keywords:}} #1
}
\def\@#1\@{\begin{align}#1\end{align}}
\def\$#1\${\begin{align*}#1\end{align*}}
\newcommand\Ying[1]{{\color{magenta}[Ying: #1]}}  
\newcommand\Naoki[1]{{\color{magenta}[Naoki: #1]}}
\title{
Beyond Reweighting: \\ On the Predictive Role of Covariate Shift in Effect Generalization  
% On the two roles of covariate shift in generalization: \\ Empirical evidence from replication studies
%Re-weighting or bounding?
%From known to unknown: bounding the shift in unobserved factors based on observed factors in replication studies
%To reweight or not to reweight? Beyond re-weighting:
%An empirical investigation of generalization in replication studies
% What is the right (empirical) measure of distributional shifts? % Natural distribution shifts: \\ Evidence from multi-site replication studies
} 
\author[1]{Ying Jin\thanks{Reproduction code for data processing and analysis is available at \href{https://github.com/ying531/predictive-shift}{https://github.com/ying531/predictive-shift}.}}
\affil[1]{\normalsize Data Science Initiative \& Department of Health Care Policy, Harvard University}
\author[2]{Naoki Egami}
\affil[2]{Department of Political Science, Columbia University}
\author[3]{Dominik Rothenh\"ausler}
\affil[3]{Department of Statistics, Stanford University}
\date{\today}
\begin{document}

\maketitle

\begin{abstract}
Many existing approaches to generalizing statistical inference amidst distribution shift operate under the covariate shift assumption, which posits that the conditional distribution of unobserved variables given observable ones is invariant across populations. However, recent empirical investigations have demonstrated that adjusting for shift in observed variables (covariate shift) is often insufficient for generalization. In other words, covariate shift does not typically ``explain away'' the distribution shift between settings. As such, addressing the unknown yet non-negligible shift in the unobserved variables given observed ones (conditional shift) is crucial for generalizable inference. 

In this paper, we present a series of empirical evidence from two large-scale multi-site replication studies to support a new role of covariate shift in ``predicting'' the strength of the unknown conditional shift. 
Analyzing 680 studies across 65 sites, we find that even though the conditional shift is non-negligible, its strength can often be bounded by that of the observable covariate shift. However, this pattern only emerges when the two sources of shifts are quantified by our proposed standardized, ``pivotal'' measures. We then interpret this phenomenon by connecting it to similar patterns that can be theoretically derived from a random distribution shift model. Finally, we demonstrate that exploiting the predictive role of covariate shift leads to reliable and efficient uncertainty quantification for target estimates in generalization tasks with partially observed data. Overall, our empirical and theoretical analyses suggest a new way to approach the problem of distributional shift, generalizability, and external validity.
% If prediction intervals are calibrated according to our recommendation, one obtains intervals that are on average 50\% smaller than naive baselines. \Dominik{is this still true (we had a couple of changes in the evaluation)}
\end{abstract}

\keywords{Generalizability, external validity, distribution shift, replication studies.}
\fontsize{10}{12}\selectfont

% \begin{enumerate}
%     \item Intro two plots: covariate shift bounds $Y|X$ shift; coverage and length
%     \item Problem  
%     \item[] $\star$ Motivating Application (introduce two datasets)
%     \item[] $\star$ Covariate-Shift is not enough: Figure 5
%     \item Predictive Roles of Covariate-Shift 
%     \item[] $\star$ Empirical Evidence 
%     \item[] $\star$ Theoretical Justification 
%     \item Prediction Intervals 
%     \item[] $\star$ Method
%     \item[] $\star$ A lot of Figures \Dominik{haha}
%     \item Discussion
% \end{enumerate}

% !TEX root = main_new.tex

\section{Introduction}

Distribution shift is a central issue in generalizing statistical evidence from an observed (source) population to a new, at most partially observed (target) population, with significant implications in many domains. 
For instance, in the medical and social sciences, researchers/policymakers seek to leverage existing randomized control trials (RCTs) to estimate the treatment effect on a new cohort to guide clinical decisions or policy making~\citep{shadish2002, hotz2005predicting, imai2008misunderstandings, cole2010generalizing, Tipton:2013ew, bareinboim2016fusion, Deaton:2017kg}. However, the challenge lies in whether statistical methods can capture the changes between populations to produce credible predictions of target effects. 
 
To address the generalizability question, many statistical methods operate under assumptions positing that observed variables capture all distributional differences between populations. These assumptions can often be described as \emph{covariate shift}, that is, the distribution of covariates observed in both populations can change, while the conditional distribution of the outcomes (unobserved in the target population) given the observed covariates remains invariant. For example, the distribution of age, gender, and education can differ across  populations (e.g., due to convenience sampling), but the conditional treatment effect is the same for individuals with the same covariate profiles. Under this common assumption, adjusting for shift in the observed covariates, either by reweighting based on density ratios or estimating the heterogeneous covariate-outcome relationship~\citep{stuart2011use, tipton2014sample, miratrix2018worth, dahabreh2019generalizing, egami2023elements}, is sufficient for unbiased estimation of the target parameters. This common approach highlights the role of covariate shift in \emph{explaining away} the distribution shift. 

Given its popularity, a series of recent papers \citep{cai2023diagnosing,jin2023diagnosing,lu2023you} have empirically evaluated the performance of generalization estimators based on the covariate shift assumption by comparing them against experimental benchmark estimates. Although each paper focuses on different domains, a common yet somewhat surprising finding is that observed covariate shift often can only explain a small proportion of the distributional shift in real-world applications. This implies two pessimistic messages: (1) adjusting for observed covariate shift may be insufficient for generalization, and (2) the remaining, unobserved conditional shift (i.e., shift in the conditional distribution of the outcomes given the observed covariates) is ``larger'' than the observed covariate shift. 
As such, it remains unclear how the conditional shift may be addressed for effect generalization in practice even in well-controlled settings. 

\subsection{This work: the predictive role of covariate shift}

In this paper, we introduce a different role of covariate shift in \emph{predicting} the unknown shift in the conditional distribution for generalization (Figure~\ref{fig:overview}). The distribution shift between the source and target populations consists of the observed covariate shift and unobserved conditional shift, the latter being a key challenge in a generalization task. In contrast to existing approaches that either (i) assume there is no conditional shift, or (ii) establish worst-case bounds based on adversarial shift in the conditional distribution, we argue that the strength of covariate shift can \emph{bound} that of the unknown conditional shift. Exploiting this bounding relationship is useful in effect generalization with improved validity and efficiency. %, as we demonstrate by the satisfactory coverage of prediction intervals for target estimates based on this relationship. 

\begin{figure}[!t]
    \centering
    \includegraphics[width=0.9\linewidth]{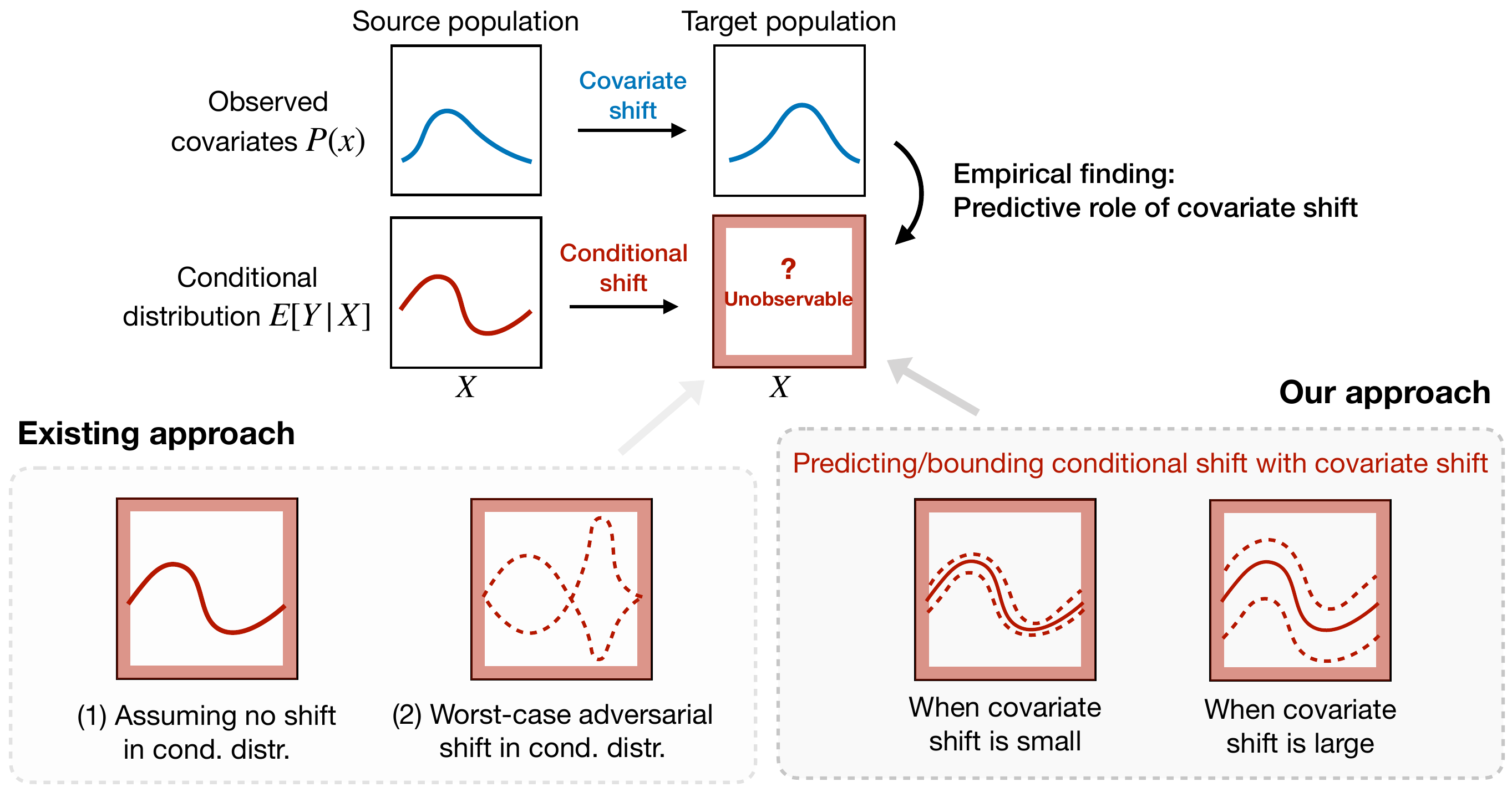}
    \caption{\textbf{Overview of the problem and our approach:}~\emph{Effect generalization from source and target populations needs to address the distribution shift consisting of the observed covariate shift and unobserved conditional shift. We argue a novel predictive role of covariate shift in bounding the strength of unknown conditional shift, which is supported by our empirical findings and leads to reliable and efficient generalization.}}
    \label{fig:overview}
\end{figure}

% \begin{figure}[!t]
%     \centering
%     \includegraphics[width=0.9\linewidth]{figs/main_combine.pdf}
%     \caption{Preview of results. \textbf{(a)~Overview of the problem and our approach:}~\emph{We argue  a new, predictive role of covariate shift in bounding the strength of unknown conditional shift in a generalization task.} \textbf{(b)~Insufficient explanatory role of covariate shift:}~\emph{Empirical coverage of prediction intervals based on i.i.d.~assumption (grey) and covariate shift assumption (green and purple), showing covariate shift cannot explain away distribution shift across sites.} \textbf{(c)~Reliable and efficient effect generalization based on the predictive role of covariate shift:}~\emph{Empirical coverage of prediction intervals based on i.i.d.~assumption (grey), worst-case bounds (dark blue), and our method with the belief that conditional shift is bounded by covariate shift (red) or with knowledge of their relative strength (yellow).} \yingcomment{shall we put them in one figure or two?}}
%     \label{fig:overview}
% \end{figure}

% We leverage empirical evidence to argue that the strength of the observed covariate shift can bound the strength of the unobserved conditional shift. We then demonstrate the effectiveness of exploiting this role in effect generalization tasks. % by evaluating the coverage of prediction intervals for target estimates based on this bounding relationship between the two types of shift. 

% \subsection{Preview of results}

Our proposal is supported by empirical evidence from two well-known, large-scale multi-site replication projects---the Pipeline project \citep{schweinsberg2016pipeline} and the Many Labs 1 project~\citep{klein2014investigating}---from the social sciences, analyzing a total of 680 studies across 65 sites examining 25 hypotheses.\footnote{Note that not all sites examine all hypotheses.} To ensure faithful evaluation, since we have no access to the underlying population parameters, we build prediction intervals---based on various distribution shift assumptions---for \emph{estimators} in target populations (including our proposed ones built upon empirical findings), and use their empirical coverage to examine the plausibility of the assumptions they are based upon. Figure~\ref{fig:main} previews our empirical results.

\begin{figure}[!t]
    \centering
    \includegraphics[width=0.95\linewidth]{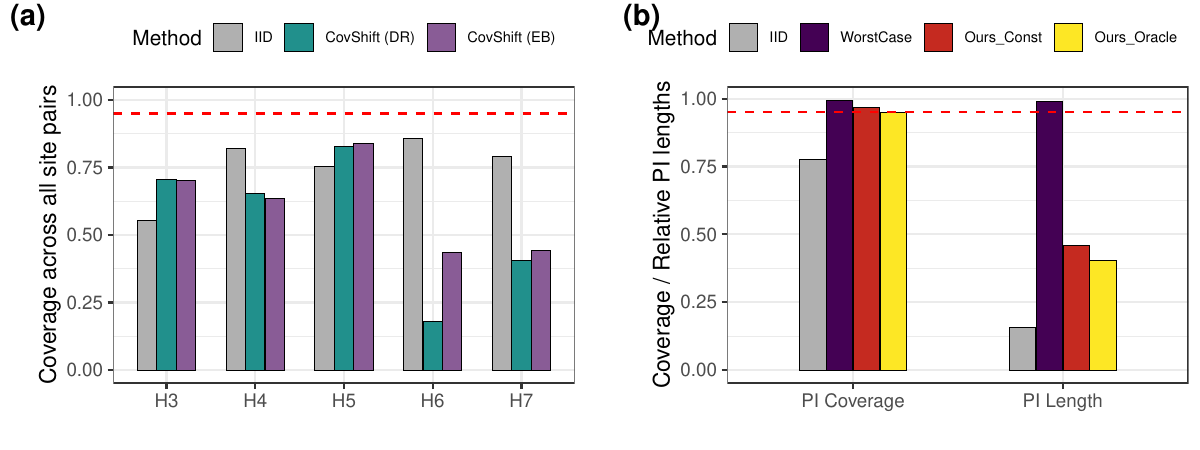}
    \caption{Preview of results. \textbf{[Left]~Insufficient explanatory role of covariate shift:}~\emph{Empirical coverage of prediction intervals based on i.i.d.~assumption (grey) and covariate shift assumption (green and purple), showing covariate shift cannot explain away distribution shift across sites.} \textbf{[Right]~Reliable and efficient effect generalization based on the predictive role of covariate shift:}~\emph{Empirical coverage of prediction intervals based on i.i.d.~assumption (grey), worst-case bounds (dark blue), and our method with the belief that conditional shift is bounded by covariate shift (red) or with knowledge of their relative strength (yellow).}}
    \label{fig:main}
\end{figure}

We begin by %empirically investigating the importance of covariate shift and conditional shift in effect generalization. %For each pair of studies, we examine whether prediction intervals based on covariate shift assumptions cover the empirical estimate in the target population. 
examining common approaches that either ignore distribution shift or assume covariate shift (Section~\ref{subsec:explain}). 
In the two replication projects, the \emph{explanatory} role of covariate shift is limited as evident from the low coverage of prediction intervals, complementing existing work that either examine pairs of studies~\citep{jin2023diagnosing} or mean squared errors~\citep{lu2023you,kern2016assessing}. %in the sense that accounting for it is insufficient to capture the discrepancy between source and target estimates. 
As shown in Panel (a) of Figure~\ref{fig:main}, 
even for controlled multi-site replication studies, distribution shifts across sites are not negligible (methods that assume no distributional shift (\texttt{IID}) do not achieve valid coverage). 
Furthermore, observed covariate shift cannot explain away the total distributional shift, as methods that only adjust for observed covariate shift (\texttt{CovShift}) do not achieve valid coverage, either.  

% \vspace{0.2em}
% \begin{itemize}
%     \item Even for controlled multi-site replication studies, distribution shifts across sites are not negligible. Methods that assume no distributional shift (\texttt{IID}) do not achieve valid coverage. 
%     \item Again, even for controlled multi-site replication studies, observed covariate shift cannot explain away the total distributional shift. Specifically, methods that only adjust for observed covariate shift (\texttt{CovShift}) do not achieve valid coverage, either.  
% \end{itemize}
% \vspace{0.2em}

We then proceed to compare the strengths of the observed covariate shift and the conditional shift (Section~\ref{subsec:bound}).
In stark contrast with the pessimistic conjectures in previous works, we find that conditional shift is often \textit{smaller} than covariate shift across different applications and comparisons. However, this empirical pattern became clear only after we measured covariate and conditional shifts with proper standardization. 
% 
% Even though the conditional shift cannot be estimated in the usual generalization task, we use the unique multi-site nature of these projects. In particular, we ``pretend'' that scientists can only observe data from one site when generalizing estimates to other sites, but because we actually observe experimental estimates from all the sites (therefore, estimates in the target population, too), we can empirically estimate the size of the observed covariate shift and (usually unobserved) conditional shift.

We interpret our empirical findings by connecting them to similar patterns that can be theoretically derived under a recently proposed random distribution shift model~\citep{jeong2022calibrated,jeong2024out,bansak2024learning} (Section~\ref{sec:interpret}). Under this model, one expects to observe smaller conditional shift than covariate shift when the probability space is randomly perturbed in a way that does not favor any direction yet some component of the observed data, which is the treatment assignment here, is kept invariant. This model describes scenarios where the difference between the source and target distributions is not adversarial but is contributed by many small and random factors.  Such scenarios are common in collaborative replication studies and potentially other carefully controlled studies where replicators try their best to mimic the original study design and population, but they have to deviate due to logistical and other constraints. 

Finally, we demonstrate the effectiveness of exploiting this predictive role in effect generalization, again (for evaluation purposes) by examining the empirical coverage of prediction intervals that aim to address the unknown conditional shift (Section~\ref{sec:generalize}).  
Panel (b) of Figure~\ref{fig:main} previews  key takeaway messages. Prediction intervals based on the novel predictive role maintain valid coverage while significantly shortening the intervals. This reveals that the predictive role is  stable across contexts and permits effective  empirical calibration. In contrast, existing methods assuming worst-case conditional shift (\texttt{WorstCase}) achieve valid coverage when the worst-case shift strength is (unrealistically) calibrated by data, but at the expense of too wide intervals. 

% \vspace{0.2em}
% \begin{itemize}
%     % \item Even for controlled multi-site replication studies, distribution shifts across sites are not negligible. Methods that assume no distributional shift (\texttt{IID}) do not have valid coverage. 
%     % \item Again, even for controlled multi-site replication studies, observed covariate shift cannot explain away the total distributional shift. Thus, methods only adjusting for observed covariate shift (\texttt{CovShift}) do not have valid coverage, either. 
%     \item Existing methods assuming worst-case conditional shift (\texttt{WorstCase}) achieve valid coverage when the worst-case shift strength is (unrealistically) calibrated by data, but at the expense of too wide intervals. 
%     \item Our predictive intervals based on the predictive role maintain valid coverage while significantly shortening the length of predictive intervals. This reveals that the predictive role is  stable across contexts and permits effective  empirical calibration. 
% \end{itemize}
% \vspace{0.2em}

Overall, our empirical and theoretical analyses suggest a new way to approach the problem of distributional shift, generalizability, and external validity. Most existing methods either (i)  assume no shift in the unobserved conditional shift or (ii) assume shift in the unobserved conditional shift is bounded, and search for the worst-case scenarios that tend to be extremely adversarial. Instead, we offer a data-adaptive middle ground---shift in the unobserved conditional shift is non-negligible but is predictable from the observed covariate shift. %Our empirical reanalyses of multi-site replication studies show that the proposed approach can allow scientists and policymakers to construct reliable and efficient prediction intervals by predicting unobserved conditional, even when the observed covariates themselves cannot fully explain away the distributional shift. 
Our results shall serve as the empirical and conceptual basis for developing new
methods and models beyond the covariate shift assumption.

\subsection{Scope of the paper}
\label{subsec:eval}

We note with caution that the main goal of this paper is to provide empirical and theoretical evidence for a new way of understanding real-world distribution shifts. The random distribution shift modeling assumption offers a perspective to justify our empirical findings, yet we do not anticipate it to be universally grounded. In particular, we limit the interpretation of our results to contexts similar to multi-site replication studies where data are collected in a ``natural'' manner, meaning that experimenters try to maintain consistency without adversarial patterns. In other words, the two projects provide a testbed for distribution shifts that emerge due to inevitable deviations despite well-controlled experimental settings~\citep{stroebe2014alleged,hudson2023explicating}.  Counter-examples include studies where the recruitment strategy changes. As an example, one study may be conducted on university students, whereas the second study may recruit only middle-aged participants. In this case, the random shift assumption may not be appropriate. 

% Our generalization intervals are intended for evaluation but not the ultimate solution for generalization methodologies. 
% Instead, our results shall serve as the empirical and conceptual basis for developing new methods and models beyond the covariate shift assumption. 

We also note that our evaluation mainly focuses on uncertainty quantification, that is, whether statistical methods can produce reliable prediction intervals for the actual estimates from data in the target population. 
Focusing on prediction intervals is inevitable since the underlying super-population parameter is not accessible for evaluation purposes. 
In addition, uncertainty quantification offers a more comprehensive assessment than evaluating the consistency or unbiasedness of point estimates (see Section~\ref{subsec:related_work} for more discussion).

\subsection{Related work}
\label{subsec:related_work}

\paragraph{Re-weighting in causal inference.}
%\paragraph{Covariate shift in generalizability and transportability.}

Using re-weighting to generalize from one population to another population has a long history in causal inference. Early examples include
Horvitz-Thompson \citep{horvitz1952generalization} and H\'ajek's estimator. Inverse probability weights are often unstable in practice. This has spurred the development of procedures that use outcome models to reduce variance \citep{robins1994estimation} and balancing weight procedures that penalize the weights \citep{deville1992, hainmueller2012entropy}. Modern re-weighting procedures were used to generalize the results of experiments from one site to another \citep[e.g.,][]{cole2010generalizing, stuart2011use, Tipton:2013ew, hartman2015sate, buchanan2018ipsw, dahabreh2019generalizing, dahabreh2020extending, egami2021covariate}. See \citet{degtiar2023review} 
 and \citet{colnet2024causal} for recent reviews.

% improvements for misspecified outcome model
% \cite{rotnitzky2012improved}

% generalization of maximum likelihood estimators / M-estimators via doubly robust procedures 
% \cite{jin2024tailored}

\vspace{-0.5em}
\paragraph{Empirical evaluation of generalization.} 

This work adds to several recent works empirically evaluating generalization procedures that use unit-level data to generalize from one site to another. \citet{cai2023diagnosing} diagnose how much of the drop of prediction performance can be attributed to covariate shift vs concept $Y|X$ shift. \citet{jin2023diagnosing} and \citet{lu2023you} investigate how much of the discrepancy between causal effect estimates in different sites is due to unit-level covariates, among other factors. In welfare-to-work experiments, \cite{lu2023you} found that less than 10\% of discrepancies between sites is explained by changes in covariate distributions. This work echoes these works on the insufficient explanatory role of covariate shift. An important distinction is that our evaluation leverages the coverage of prediction intervals over many replication studies, which offers more comprehensive and faithful evaluation than methods that evaluate one pair of studies for a hypothesis~\citep{jin2023diagnosing,cai2023diagnosing} or examine the mean squared errors~\citep{lu2023you,kern2016assessing}. For example, while \cite{kern2016assessing} find in another multi-site replication dataset that covariate adjustment leads to \emph{unbiased} estimators (with bias averaged over multiple sites) for target estimates, it may still underestimate the variability if the conditional shift leads to discrepancies that are mean zero when averaged over studies but have non-negligible magnitude. 
More importantly, we also investigate a novel predictive role of covariate shift that can inform reliable generalization in practice. 

\vspace{-0.5em}
\paragraph{Heterogeneity and meta-analysis in replicability.}
Multi-site replication projects have  been used to examine the heterogeneity in effect estimates across sites~\citep{klein2018many,coppock2018generalizability,mcshane2022modeling,delios2022examining,holzmeister2024heterogeneity}. A prominent distinction is that these works often measure certain global notions of heterogeneity via meta-analysis~\citep{mcshane2022modeling}, while we focus on generalization from one site to another. Methodologically, our generalization methods are applicable when data from only the source and target sites are available, whereas meta-analysis needs data from many sites. In addition, these works provide echoing messages for weak explanatory roles of observed factors~\citep{klein2018many,delios2022examining} or complementary messages for design and estimation uncertainty~\citep{krefeld2024exposing,holzmeister2024heterogeneity}; the latter may be interpreted as ``random'' shifts if not documented.

\vspace{-0.5em}
\paragraph{Covariate and conditional shift in machine learning.}

The term covariate shift was first introduced by \cite{shimodaira2000improving}, and has become one of the standard domain adaptation models, see \citet{quinonero2008dataset} and \citet{pan2009survey}. Most commonly, covariate shift is addressed via importance weighting with the density ratio, which can be estimated directly, e.g.,~via a classifier \citep{bickel2007discriminative}. Similarly, density ratio reweighting is a standard approach to addressing covariate shift for statistical estimation and inference.
The conditional shift we study is related to the notion of concept drift in machine learning~\citep{gama2014survey,lu2018learning}. The techniques for addressing these shifts in prediction problems serve distinct goals than our estimation and inference problems. 
% While importance weighting has seen extensive use in domain adaptation, its usefulness for overparametrized neural networks is unclear \citep{byrd2019effect}.

% \paragraph{Multi-lab replication studies}
% \Dominik{Naoki, do you want to write this part? You know this literature better} \Naoki{I feel that we do not need this because we do not directly contribute to this literature. We also do not want reviewers who are designing such replication initiatives.}

%Maybe also some multi-site studies and the study of the predictive role of site-level covariates?

% !TEX root = main_new.tex

\section{Motivating Applications and Methodological Problem}

We introduce our motivating applications and illustrate the core methodological challenges in generalization. 

\subsection{Motivating Applications: Multi-Site Replication Projects}
\label{sec:dataset}
In this paper, we use two large-scale multi-site replication projects from the social sciences to empirically investigate the role of covariate shifts in generalization. The Many Labs 1 project~\citep{klein2014investigating} evaluates the replicability of 13 classic and contemporary experimental findings in the social sciences, ranging from gain versus loss framing~\citep{tversky1981framing} to sex differences in implicit attitudes toward math~\citep{nosek2002math}, across 36 independent data collection sites. Similarly, in the Pipeline project~\citep{schweinsberg2016pipeline}, 25 laboratories across the world (contributing $29$ populations) independently replicate experiments for 10 scientific hypotheses concerning moral judgment, which is a well-known theory in psychology. Combining the two replication projects, we analyze 680 studies across 65 sites, examining 25 research hypotheses. This scale and diversity allow us to assess the proposed new role of covariate shifts across diverse empirical settings.

Several features of these multi-site replication projects make them suitable for evaluating distribution shifts in generalization. First, we can mimic the real-world generalization task by generalizing an effect estimate from one source site to another target site. Unlike the real generalization task, we have access to the effect estimate from the target site, and therefore, we can empirically evaluate the performance of common generalization estimators based on the covariate shift assumption and our proposed estimator, without simulating data from the artificial data-generating process. Second, in these replication projects, multiple laboratories follow the same experimental process as much as they can, known as direct replications. As a result, the measurement of the outcome variable and treatment variable is consistent across sites, and the interpretation of the covariate shift and the unobserved conditional shift becomes clearer. Finally, the two replication projects differ in how laboratories are recruited. In the Pipeline project~\citep{schweinsberg2016pipeline}, laboratories are invited by the project lead because they had ``access to a subject population in which the original finding was theoretically expected to replicate using the original materials'' (p 57). Therefore, sites were selected such that distributional shifts between them are expected to be small or negligible. On the other hand, in the Many Labs 1 project~\citep{klein2014investigating}, laboratories voluntarily participated in the project without specific eligibility criteria related to whether each site was expected to replicate the original finding. Here, sites were selected conveniently but ``naturally'' without explicit intention. This variation in site selection enables us to empirically evaluate distributional shifts in diverse scenarios. 

The datasets are processed based on the raw data and scripts published by the original authors. In both projects, the covariates include demographic variables  such as political ideology, gender, age, education and income. See Appendix~\ref{app:sec_data} for details about the datasets and data pre-processing.

\subsection{Notation and Setup}
To formally discuss the generalization problem, we introduce some notation. While we tailor our notation to the two projects above for concrete presentation, the same general framework can be applied to any generalization setting across sites. 

We first index the hypotheses by $k\in \{1,\dots,K\}$ and the sites by $j\in \{1,\dots,N\}$. Each hypothesis $k$ is tested by a randomized experiment in a subset of sites $\cJ_k\subseteq \{1,\dots, N\}$, following the same experimental protocol. Each site $j \in \cJ_k$ independently collects $n_{j}^{(k)}\in \NN$ participants and collects data $\cD_j^{(k)} = \{X_i^{(j,k)},T_i^{(j,k)},Y_i^{(j,k)}\}_{i=1}^{n_j^{(k)}}$, where $X_i$ is the covariates, $T_i\in \{0,1\}$ is the binary treatment, and $Y_i$ is the outcome(s). Then, within each site, we can define the parameter of interest $\theta_j^{(k)}$ and its consistent and asymptotically normal estimator $\hat\theta_j^{(k)}$, which is a function of $\cD_j^{(k)}$. In our applications, most of them consider the average treatment effect (ATE) as $\theta_j^{(k)}$ and use a $t$-test that compares the sample mean of treated and control groups as $\widehat{\theta}_j^{(k)}$. Some hypotheses are tested with $\theta_j^{(k)}$ being  the mean of outcomes and $\widehat{\theta}_j^{(k)}$ being a paired $t$-test comparing two outcomes. The specific hypotheses and tests are summarized in Tables~\ref{tab:summary_of_studies} and~\ref{tab:summary_of_studies_ml1}. 

We assume $\cD_j^{(k)}$ are drawn i.i.d.~from an underlying (hypothetical) super-population $\cS_j^{(k)}$, and datasets are independent across sites $j\in\cJ_k$ for each hypothesis $k$. Importantly, the underlying data generating process $\cS_j^{(k)}$ may vary across sites $j\in \cJ_k$ since there might exist distribution shifts. 

We consider the generalization of estimates from site $j_1$ to $j_2$ for all pairs $j_1, j_2\in \{1,\dots, N\}$, $j_1\neq j_2$, in each application. In general, we call the population in site $j_1$ as the source population $P$ and the population in site $j_2$ as the target population $Q$. As typically the case in practice, for a \emph{generalization task}, we assume all data from $P$ are observed while only covariates $X$ are observed from $Q$. When we \emph{evaluate} the performance of various generalization estimators, we will use the full data in the target population $Q$ to empirically evaluate how well the generalization estimators approximate the benchmark estimates in $Q$.

\subsection{Challenge: Covariate Shift Cannot Explain Away Distributional Shift}
\label{subsec:explain}

The vast majority of existing methods for generalization assume that accounting for distributional shifts in observed covariates is sufficient, known as the covariate shift assumption. For example, when researchers want to generalize causal effects in one site to another site in the Pipeline project, they may assume that adjusting for observed characteristics of respondents, such as political ideology, gender, age, and education, is sufficient for generalization (consistent estimation and valid inference for the parameter in the target site). 

% Formally, this common assumption can be written as follows. 

% \begin{assumption}[Covariate Shift]\label{def:cov_shift}
% For source population $P$ and target population $Q$, $P(Y\given X)=Q(Y\given X)$ holds almost surely. 
% \end{assumption} 
% Under this assumption, the conditional distribution of $Y$ given $X$ is invariant, so adjusting for the distribution shift in observed covariates $X$ shall suffice to explain the difference between the source and target distributions.  

However, in line with recent empirical evaluations \citep{cai2023diagnosing,jin2023diagnosing, lu2023you}, we find that this common assumption of covariate shift is often insufficient to explain away distributional shifts in the real-world applications. Figure~\ref{fig:explanatory} examines existing procedures that adjust for shift in observed covariates. We consider generalizing treatment effects from one site to another, using two commonly used estimators---the doubly robust (DR) estimator~\citep{robins1994estimation, dahabreh2019generalizing} and the entropy balancing (EB) estimator~\citep{sarndal2003model, hainmueller2012entropy}---to construct  point estimates that are consistent for the target \emph{parameter} under the covariate shift assumption. Then, we follow~\cite{jin2024tailored} to construct prediction intervals that would cover the target \emph{estimator} with probability $1-\alpha$ under covariate shift, and evaluate their empirical coverage.\footnote{We use prediction intervals rather than the conventional confidence intervals because we only have access to target population estimates (instead of the underlying parameters) for rigorous evaluation purposes.} As a simple baseline, we also compute prediction intervals based on the i.i.d.~assumption that assumes no distribution shift between sites. Detailed estimation procedures are deferred to Appendix~\ref{app:subsec_est_explain}. Figure~\ref{fig:explanatory} highlights two key findings: 
\vspace{0.5em}
\begin{enumerate}[(i)]
    \item \emph{Adjusting for distribution shift is necessary}, as prediction intervals based on the assumption of no distribution shift (denoted as \texttt{IID}) do not deliver valid coverage (grey bars in panel (a)). 
    \item \emph{The explanatory role of covariate shift is insufficient.} This is evident from the under-coverage in panel (a) of both of the two \texttt{CovShift} methods. The coverage is sometimes even lower than \texttt{IID}; this is because the uncertainty that remains after adjustment is under-estimated. When comparing the estimates in the source population and generalization estimates in panel (b), we see that adjusting for covariate shift does not necessarily bring the estimators closer to the target estimate. 
\end{enumerate}

\begin{figure} 
    % \centering
    \hspace{1ex}
    \includegraphics[width=0.95\linewidth]{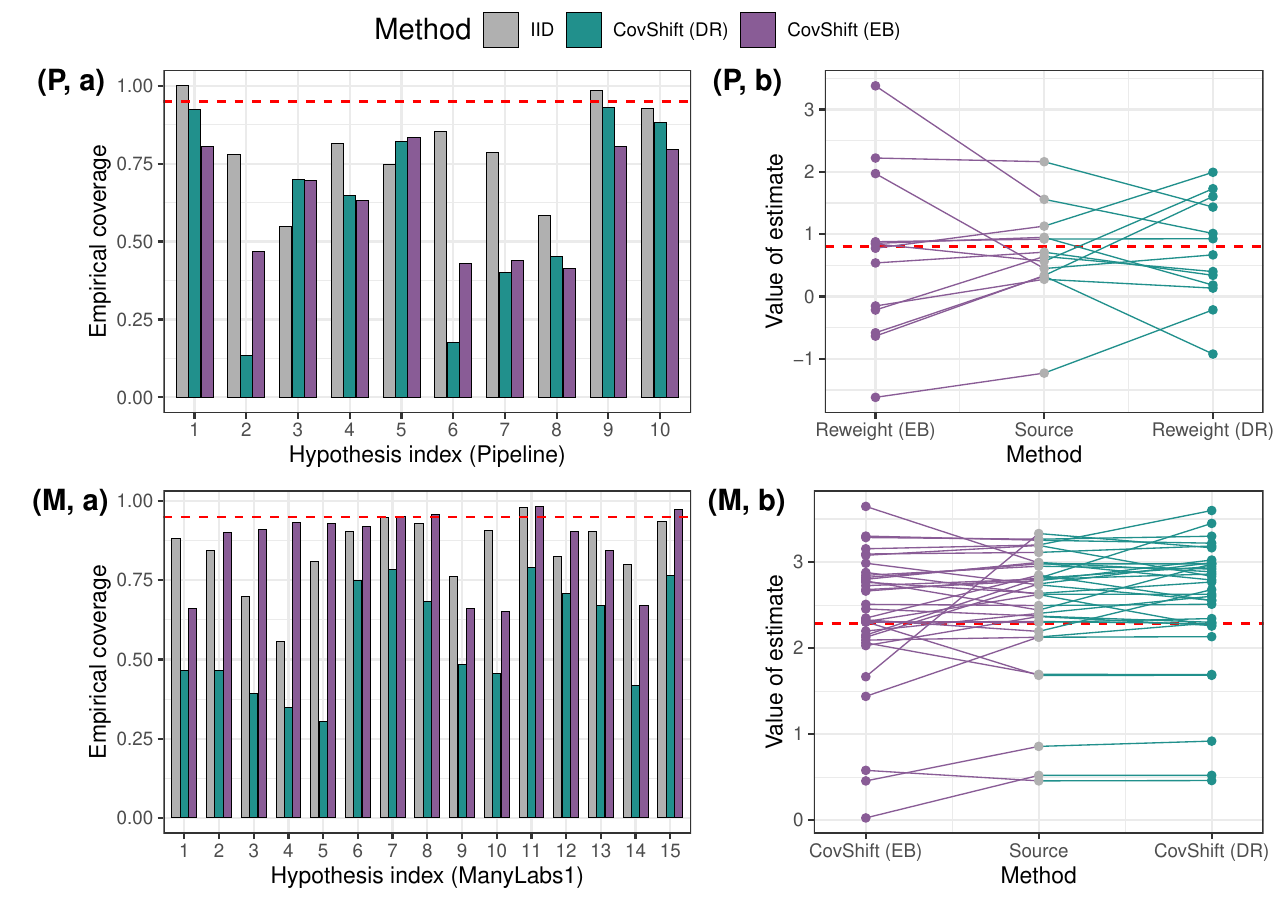}
    \caption{\textbf{Insufficient explanatory role of covariate shift}. \textbf{[Left]}: \textit{Under-coverage of $95\%$ prediction intervals based on the i.i.d.~assumption (grey) and covariate shift assumption  adjusted via doubly robust estimator (green) and entropy balancing (purple), averaged over all pairs of sites within each hypothesis for the Pipeline project} (P,\,a) \emph{and the ManyLabs 1 data} (M,\,a)\emph{, respectively. The red dashed line is the nominal level.} 
    \textbf{[Right]}: \textit{Estimates based on existing approaches (via doubly robust estimator (green) and entropy balancing (purple)) do not bring the source estimates (grey) closer to the target estimate (red dashed line). As illustrative examples, we show results when generalizing from all other sites to site 5 (raw ID) in hypothesis 5 in the Pipeline data} (P,\,b) \emph{and when generalizing from all other sites to site 4 in hypothesis 4 in ManyLabs 1 data} (M,\,b). \emph{The segments connect estimates for the same pairs of sites.}}
    \label{fig:explanatory}
\end{figure}

% !TEX root = main_new.tex

\section{The Predictive Role of Covariate Shift}
\label{sec:bound}

In this paper, we highlight a new role of covariate shifts: observed covariate shifts can be used to \textit{predict} unobserved shifts in the conditional distribution of $Y$ given $X$, even though covariate shifts cannot fully explain the total distributional shift. We first propose standardized measures of distributional shifts, and then provide empirical and theoretical evidence for the predictive role of covariate shift.

\subsection{Comparing the Strength of Covariate Shift and Conditional Shift}
\label{subsec:def_shift_measure}

% On a high-level, we want to evaluate whether the strength of $X$ shift can be used to predict the strength of $Y$ shift. This idea can be implemented in multiple ways, since there are many ways to define the strength of distribution shift. In this paper, we will focus on one particular way to measure strength of distribution shift. It has the attractive property that it is invariant under linear re-parametrizations and is motivated by a random distribution shift model that we will discuss later. 

We begin by defining our measures of the two sources of distribution shifts: (i) the covariate shift in $X$ (the part commonly addressed in existing methods) and (ii) the conditional shift---the shift in the conditional distribution of $Y$ given $X$ (the part assumed away under the covariate shift assumption). 
Our construction is based on two simple principles:
\vspace{0.5em}
\begin{itemize}
    \item \emph{Scale invariance.} We would like our measures to reflect the strength of perturbations to the probability space, hence they should be invariant under linear scalings of the variables. 
    \item \emph{Numerical stability.} We would like our measures to be useful in guiding real generalization tasks, hence they should permit stable estimation. 
\end{itemize}
\vspace{0.5em}

Throughout the paper, we suppose the goal is to understand how causal effects change across sites, and we have two randomized experiments with treatment assignment probability $\pi$ (most studies in our datasets are of this form). We can write the the difference of the causal effects across sites as
\begin{align*}
    \theta_Q - \theta_P &= \EE_Q[\phi(T, Y)] - \EE_P[\phi(T, Y)] \\
    \text{ where } \phi & = \frac{T}{\pi}Y - \frac{1-T}{1-\pi}Y,
\end{align*}
where $\E_{P}$ and $\E_{Q}$ are expectations over the source and target distribution. While we focus our discussion on causal effects in this paper for the sake of clear presentation, our proposed approach is applicable to any parameter of interest by redefining $\phi$. For example,  some studies in the Pipeline project use a one-sample $t$-test, in which case the parameter of interest is the mean of the outcome and $\phi = Y$. 

We begin by conceptually decomposing the impact of overall distribution shift on the parameter of interest ($\theta_Q-\theta_P$) to measure the shifts in $X$ and $\phi$ given $X$ separately: 
\begin{equation}\label{eq:decomp}
    \mathbb{E}_Q[\phi] - \mathbb{E}_{P}[\phi] =  \underbrace{\big\{  \mathbb{E}_Q[\phi_P(X)]  - \mathbb{E}_P[\phi_P(X)]\big\}}_{=:\text{ \small Covariate shift}} \ + \ \underbrace{\big\{ \mathbb{E}_Q[\phi - \phi_P(X)] \big\}}_{=:\text{ \small Conditional shift}}, 
\end{equation}
where $\phi_P(X) := \EE_P[\phi|X]$ is the conditional expectation of the influence function in the source distribution. When the parameter of interest is the average treatment effect (ATE), we have $\phi_P(X) = \E_P[Y(1) - Y(0) \mid X]$, the conditional ATE function. In \cite{jin2023diagnosing}, the decomposition~\eqref{eq:decomp} is used to diagnose the roles of different  distribution shifts on the discrepancy of effect estimates between a pair of studies. 

The first ``Covariate shift'' term in the decomposition~\eqref{eq:decomp} captures the shift in the observed covariates $X$. Intuitively, it measures how much the estimate can be brought closer to the target by adjusting for the shift in $X$. This term becomes larger when the strength of shift between $P(X)$ and $Q(X)$ is larger. Importantly, it also depend on the heterogeneity in $\phi_P(X)$, that is, how much the parameter of interest varies with the covariates. Our proposed distribution shift measures will remove the impact of such heterogeneity (sensitivity) on our measure of the strength of distribution shift to ensure interpretability and scale invariance.

The second term in~\eqref{eq:decomp} is equal to $\mathbb{E}_Q[\phi_Q(X) - \phi_P(X)]$, which captures the shift in the conditional expectation $\EE[\phi|X]$ between the source and target distribution. For example, when the parameter of interest is the average treatment effect (ATE), this part captures how much the conditional ATE changes between the source and target distribution. Similarly, it not only depends on the strength of conditional shift but also the heterogenity in $\phi-\phi_P(X)$; again, the latter will be removed in our meausures.

The common assumption of covariate shift  essentially assumes away the second shift in the conditional distribution and only accounts for the first term. We formalize the covariate shift assumption  as follows. 
\begin{assumption}[Covariate Shift]\label{def:cov_shift}
$P(\phi\given X) = Q(\phi\given X)$ holds $P_X$-almost surely. 
\end{assumption}

% \yingcomment{A subtlety here is that $P(\phi\given X)=Q(\phi\given X)$ is not necessarily true given $P(Y|X)=Q(Y|X)$, because for some parameters, the influence function $\phi$ may change as well. Here we use $\phi$ in place of $Y$ in defining distribution shift measures; this is because for the tests we consider, the influence function doesn't change under covariate shift.} \Naoki{Yes! So, methods assuming the covariate shift need $P(\phi | X) = Q(\phi |X)$, right? $P(Y | X) = Q(Y |X)$ might be insufficient?} \yingcomment{It seems to be more common to assume $P(Y|X)=Q(Y|X)$, because one can also derive the new influence function using the covariate shift weights. One example is weighted OLS such as in \cite{jin2024tailored}.} \Dominik{
% %What if we call this assumption "no shift in unobservables" or "Invariant conditional distribution" and below, we add the following sentence: 
% Since this is not exactly the covariate shift assumption, maybe for clarity we can add the following sentence: "
If $\phi=Y$, this assumption is the classical covariate shift assumption in machine learning. For experiments, Assumption~\ref{def:cov_shift} is satisfied if the treatment probabilities do not change and the conditional distribution of the potential outcomes is invariant, i.e., if $P(Y(1),Y(0)|X) = Q(Y(1),Y(0)|X)$.
% }

Assumption~\ref{def:cov_shift} implies the second term in~\eqref{eq:decomp} is zero, and thus it suffices to adjust for the shift in observed covariates (the first term). While this is a commonly imposed assumption for the identifiability of target parameters, as discussed in Section~\ref{subsec:explain}, it is often violated in practice, which implies that the conditional shift (the second term) is often nonzero in real-world applications. Therefore, instead of assuming away the conditional shift, we are to carefully investigate the relationship between these two shifts to offer new insights for moving beyond the covariate shift assumption in practice.  

% \begin{equation}\label{eq:decomp}
%     \mathbb{E}_Q[\phi(T,Y)] - \mathbb{E}_{P}[\phi(T,Y)] =  \underbrace{\big\{ \mathbb{E}_Q[\phi(T,Y) - \phi_P(X)] \big\}}_{=:\text{ \small Conditional shift}} + \underbrace{\big\{  \mathbb{E}_Q[\phi_P(X)]  - \mathbb{E}_P[\phi_P(X)]\big\}}_{=:\text{\small Covariate shift}},
% \end{equation}

We define our distribution shift measures by rescaling the two terms in~\eqref{eq:decomp} by their standard deviation to ensure scale invariance: 
\@
    \text{Relative conditional shift} &= \frac{|\EE_Q[\phi - \phi_P(X)]|}{\text{sd}_P(\phi - \phi_P(X))}, \label{def:rel_cond_shift} \\
    \text{Relative covariate shift} &= \frac{|\EE_Q[\phi_P(X)] - \EE_P[\phi_P(X)]|}{\text{sd}_P(\phi_P(X))}.\label{def:rel_cov_shift}
\@
We will measure the strength of the conditional shift by the ``relative conditional shift''~\eqref{def:rel_cond_shift}. However, an issue with the ``relative covariate shift'' measure~\eqref{def:rel_cov_shift} is numerical instability whenever $\text{sd}_P(\phi_P(X))$ is close to zero. This might be problematic in social science applications  where the explanatory power of covariates $X$ can be low. 
To address this issue, we will use a Mahalanobis-type, ``stabilized'' measure instead:
\@\label{def:stab_cov_shift}
    \text{Stabilized covariate shift measure} = \sqrt{\frac{1}{L} \sum_{\ell=1}^L \frac{(\EE_Q[X_\ell] - \EE_P[X_\ell])^2}{\text{Var}_P(X_\ell)}},
\@
where $L$ is the number of covariates. We justify this covariate shift measure from a theoretical perspective in Section~\ref{sec:interpret}. Importantly, this measure is also invariant under the scaling of features. 

\begin{remark}\normalfont
    We note that both (i) rescaling by standard deviation for scale invariance and (ii) adopting stabilized measure of covariate shift are crucial for interpretable and robust empirical insights. We illustrate the importance of these considerations through an  ablation study  in Appendix~\ref{app:alt_shift_measures} which explores alternative distribution shift measures without these elements.  These alternative distribution shift measures either fail to induce the predictive role or lead to much wider intervals in generalization due to numerical instablity.
\end{remark}

In our evaluations, the two population measures will be replaced by their estimators. The estimation details are deferred to Appendix~\ref{app:sec_estimation} with specific references in the corresponding parts of the paper.

\subsection{Empirical Evidence: Covariate Shift Can Bound Conditional Shift}
\label{subsec:bound}

Using data from both the Pipeline project and the ManyLabs1 project, we establish empirical evidence that with our distribution shift measures, the covariate shift can \emph{bound} the conditional shift, even though the strength of both may change across hypotheses and sites. Because the covariate shift is  estimable in common generalization tasks, researchers can use this bounding relationship to \textit{predict} the conditional shift, which is usually unobserved. We provide theoretical justification for the empirical findings in the next subsection.

We estimate the two distribution shift measures for any pair of sites for each hypothesis in the Pipeline project and the ManyLabs1 project. For any given hypothesis $k$, we define $\phi$ following the original analysis (c.f.~Tables~\ref{tab:summary_of_studies} and~\ref{tab:summary_of_studies_ml1} for details), and $P=\cS_{j_1}^{(k)}$, $Q=\cS_{j_2}^{(k)}$ for all site pairs $(j_1, j_2)$ and hypothesis index $k$. Then, we compute an estimate for the relative conditional shift (denoted by $\hat{t}_{Y|X}^{j_1 \to j_2, (k)}$), and an estimate for the relative covariate shift (denoted by $\hat{t}_{X}^{j_1 \to j_2,(k)}$). The estimation details are in Appendix~\ref{app:subsec_est_bound}. %We will compare the estimated measures  to offer insights on real-world distribution shifts. 

\begin{figure}[!h]
  \centering
    \includegraphics[width=\textwidth]{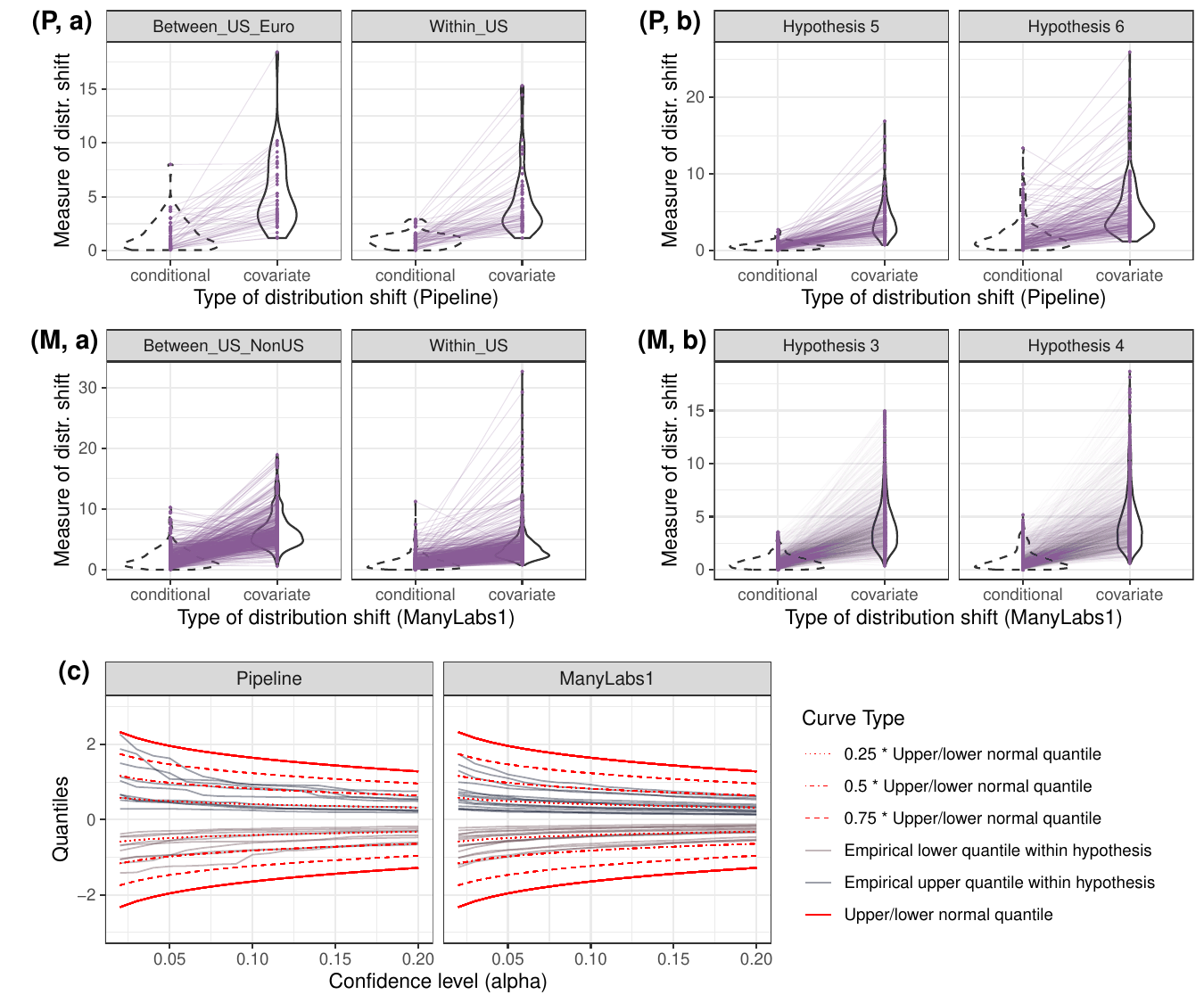} 
  \caption{\textbf{Our covariate shift measures bound conditional shift measures in various contexts (pivotality)}. \textbf{[Left]}:  \textit{Conditional and covariate shift measures for site pairs between US and Europe/Non-US and site pairs within US in the Pipeline data} (P,\,a) \emph{and the ManyLabs 1 data} (M,\,a).
  \textbf{[Right]}: \textit{Conditional and covariate shift measures for all site pairs in hypotheses 5 and 6 in the Pipeline data} (P,\,b), \emph{and those in hypotheses 3 and 4 in the ManyLabs 1 data} (M,\,b). \emph{A few ($\leq 5$) largest values are removed for visualization.}
  \textbf{[Bottom]}: \textit{Empirical quantiles of the ratios between conditional and covariate shift measures within each hypothesis in the Pipeline and ManyLabs1 datasets (grey and brown curves). The red curves are multiples of the quantiles of standard normal distribution plotted for reference.}}
  \label{fig:context_measure_PPML}
\end{figure}

Figure~\ref{fig:context_measure_PPML} compares the conditional shift measure $\hat{t}_{Y|X}^{j_1 \to j_2, (k)}$ and the covariate shift measure $\hat{t}_{X}^{j_1 \to j_2, (k)}$ in various contexts. The left two panels (P,\,a) and (M,\,a) show site pairs $(j_1, j_2)$ where one is in the United States (US) and the other is not in the US, as well as pairs where both sites are in the US. The right two panels (P,\,b) and (M,\,b) show site pairs within two hypotheses for each project. 

In (P,\,a) and (M,\,a), we observe that the distribution shift between US-NonUS pairs tends to be larger than within-US pairs. In (P,\,b) and (M,\,b), the magnitude of distribution shifts also vary across hypotheses. 
Despite the variation across contexts, however, the covariate shift measure upper bounds the conditional shift measure most of the time. 
In addition, when the conditional shift is larger (which is typically unobservable in a generalization task), the observable covariate shift also tends to be larger, justifying the ``predictive role'' of the covariate shift for the conditional shift.

Finally, panel (c) of Figure~\ref{fig:context_measure_PPML} provides a more quantitative illustration of the predictive role. In the figure, each curve is the $\alpha/2$ or $(1-\alpha/2)$-th empirical quantiles of the ratios $\{\hat{t}_{Y|X}^{j_1 \to j_2,(k)} / \hat{t}_X^{j_1\to j_2,(k)}\}_{j_1 \neq j_2}$ for a hypothesis $k$ across a series of confidence levels $\alpha$ on the $x$-axis. For reference, we compare them with multiples of standard normal distribution quantiles. A few comments are in order: 
\vspace{0.5em}
\begin{itemize}
    \item First, the absolute ``bounding'' relationship $| \hat{t}_{Y|X}^{j_1\to j_2,(k)} / \hat{t}_X^{j_1\to j_2,(k)}| \leq 1$ holds with high probability. Thus, in practice, the belief that $|\hat{t}_{Y|X}| \leq \hat{t}_X$ is a plausible option to establish a plausible range of the conditional shift strength. We will see reliable effect generalization based on this idea in Section~\ref{sec:generalize}. 
    \item Second, if one wants to adjust the upper bound of $| \hat{t}_{Y|X}^{j_1\to j_2,(k)} / \hat{t}_X^{j_1\to j_2,(k)}|$ based on a desired confidence level, it is reasonable to use some multiplicative of standard normal, e.g., $0.75$. Indeed, the empirical quantiles are smooth and similar to normal quantiles in general. This suggests a ``smooth'' and ``random'' nature of distribution shift, instead of being adversarial. %We are to justify this observation  with a random distribution shift model in the next section. 
\end{itemize}
\vspace{0.5em}

% \begin{figure}
%     \centering
%     \includegraphics[width=\linewidth]{figs/all_quantiles_PPML.pdf}
%     \caption{Empirical quantiles of the ratios between conditional and covariate shift measures $(\hat{t}_{Y|X}^{j_1 \to j_2,(k)} / \hat{t}_X^{j_1\to j_2,(k)})$ within each hypothesis $k$ in the Pipeline and ManyLabs1 datasets (grey and brown curves). The red curves are multiples of the quantiles of standard normal distribution plotted for reference. \todo{merge figures 4 and 5}}
%     \label{fig:all_quantiles}
% \end{figure}

\subsection{Theoretical Analysis: Random Distribution Shift Model}
\label{sec:interpret}

We here offer a theoretical framework to motivate the predictive role of covariate shift which justifies the empirical evidence in the last section. 

% \Naoki{The main goal of this section to convey the following three points: (i) what is the random distributional shift model, (ii) why the bounding relationship is expected under the random distribution shift model, and (iii) why the random distribution shift model is plausible in our applications (and replication studies).}

We begin by modeling the data collection procedure as a two-stage sampling process. In the first stage, the underlying distribution is randomly ``perturbed''. With this perturbation, we aim to model unintended changes in the study population or random deviations from the experimental protocols despite efforts to keep them, etc.  In the second stage, data is drawn i.i.d.\ from the perturbed distributions. Thus, we have three sources of uncertainty.
\begin{equation*}
    \underbrace{\hat \theta_Q}_{\text{estimator on target dataset}} - \underbrace{\hat \theta_P}_{\text{estimator on source dataset}} = \underbrace{\hat \theta_Q - \theta(Q)}_{\text{sampling uncertainty}} + \underbrace{\theta(Q) - \theta(P)}_{\text{random shift}} + \underbrace{ \theta(P)- \hat \theta_P}_{\text{sampling uncertainty}}
\end{equation*}
Here, ``sampling uncertainty'' refers to the usual statistical uncertainty arising from randomly drawing observations from an underlying population $P$ or $Q$, and ``random shift'' refers to the discrepancy between two underlying distributions $P$ and $Q$ due to natural ``perturbations'' to them. In the following, we will construct  $Q$ by randomly perturbing $P$. Constructing $P$ by randomly perturbing $Q$ or constructing both $P$ and $Q$ by randomly purturbing a third distribution $P^0$ would lead to the same asymptotics. 

Our model for distribution shift includes three elements:
\vspace{0.5em}
\begin{itemize}
    \item We assume that the treatment distribution is invariant, since the treatment probability is fixed and chosen by the scientists for the datasets we consider here.
    \item There is distribution shift in observed covariates $X$, which we will model as random.  Potentially there is shift in some unobserved effect modifiers $U$, which we will model as random, too. 
    \item The outcome $Y$ is a function of treatment indicator $T$, covariates $X$, and unobserved modifiers $U$. Thus, the shift in $U$ is the driving factor for the conditional shift.
    % \item There are potentially some unobserved noise terms $\epsilon$ affecting the outcome that are invariant across settings. As an example, if we use the same IQ test to measure intelligence in different locations, we would expect the measurement error to to have the same distribution across locations.
\end{itemize}
\vspace{0.5em}
% In some cases there might be no measurement error. This is a special case of the setup above with $\epsilon = 0$. We also allow for the case where all effect modifiers are observed. This is a special case of the setup above with $U=0$.

% \paragraph{Intuition.}

% Let us first provide some intuition on why one would expect  the strength of covariate shift to bound that of conditional shift in the setup outlined above. The covariates $X$ are affected by both (a) sampling uncertainty and (b) distributional uncertainty. $T$ is only affected by sampling uncertainty. Thus, $Y$ is affected by (a) through it dependence on $T$ and by both (a) and (b) through its dependence on $X$ and $U$. Thus, in total, one may expect $Y$ to shift less than the covariates. In the following, we will make this intuition precise with a variance decomposition.
% \Naoki{Maybe I need more explanations to understand this intuition.. While I understand the subsequent formal results, it is a bit unclear to me how they are connected to this intuition.}

Let $Y = g(T,X,U)$, where the treatment $T$ is independent of the modifiers $(X,U)$ under $P$ due to randomization. Recall that $X$ is observed, while $U$ is not.

\vspace{-0.5em}
\paragraph{Random distribution shift.}
The key idea of our random distribution shift model is that the original probability measure is randomly brought up and down in small pieces which, put together, leads to CLT-like behavior of the estimates with inflated variance.

To be precise, we let events $\{ C_m^{(M)} \}_{m=1,\ldots,M}$ be a disjoint covering of the sample space of $(X,U)$. We assume that these ``pieces'' have the same probability mass, i.e., $\PP(C_m^{(M)}) = 1/M$ for $m=1,\ldots,M$ and that step functions on these pieces approximate square-integrable functions.\footnote{That is, for any function $\phi(X,U) \in L^2(P)$, it holds that $\EE_P[(\phi(X,U) - \phi_M(X,U))^2] \rightarrow 0$ as $M\to \infty$, where $\phi_M = \sum_{m=1}^M 1_{C_m^{(M)}} \EE_P[\phi(X,U)|C_m^{(M)}]$. This can be achieved relatively easily, e.g.\ for a continuous random variable $X \in \mathbb{R}$ one can choose $C_m$ as intervals whose endpoints correspond to $(m-1)/M$-th quantile and the $m/M$-th quantile of $X$ under ${P}$.} Later, we will take $M\to \infty$ to describe a scenario where many random factors change the probability masses of $C_m^{(M)}$ independently. 

Our model describes random perturbations of $P$ in these small event pieces. Specifically, we define the randomly re-weighted distribution $Q$ for any event $E \subseteq \cup_{m=1}^M C_m^{(M)}$ via 
\begin{equation*}
 Q( E) = \sum_{m=1}^M   P(E \given C_m^{(M)}) \cdot \frac{W_{m}}{\frac{1}{M} \sum_{m'=1}^M W_{m'}},
\end{equation*}
where $(W_{m})_{m=1}^M$ are i.i.d.\ positive random variables that are bounded away from zero and have finite variance. As written above, the treatment indicator $T$ is assumed to be independent of the modifiers $(X,U)$ under both $P$ and $Q$, and its distribution is invariant.

Figure~\ref{fig:shift_idea_flat} visualizes this idea, where probability masses of small events $\{C_m\}$ in the $(X,U)$ sample space are independently perturbed by ``nature''. Such small, random perturbations are suitable to describe unintended but inevitable distribution shifts in such multi-site replication studies, such as unintended changes in the study population or random deviations from the experimental protocols despite efforts to keep them, etc.

% \begin{figure}[h]
%     \centering
%     \includegraphics[width=\linewidth]{figs/shift_idea.pdf}
%     \caption{Visualization of the random distribution shift model, in which an original distribution is randomly perturbed to produce the distribution from which data are i.i.d.~drawn. Our model assumes independent perturbation/reweighting of equal-probability small events and take the number of small events to infinity.}
%     \label{fig:shift_idea}
% \end{figure}

\begin{figure}
    \centering
    \includegraphics[width=\linewidth]{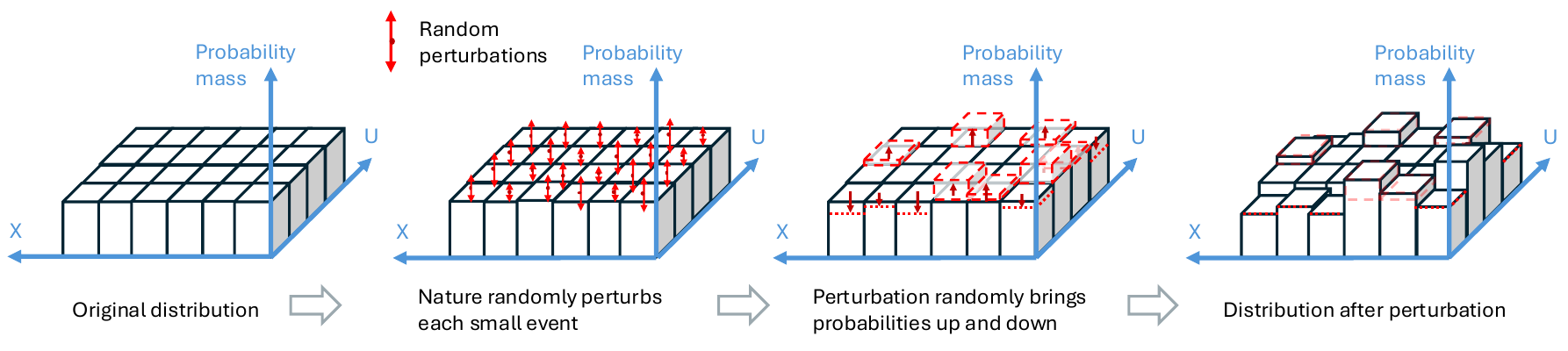}
    \caption{\textbf{Visualization of the random distribution shift model.} The original distribution is randomly perturbed to produce the distribution from which data are i.i.d.~drawn. Our model assumes independent perturbation/reweighting of equal-probability small events and takes the number of small events to infinity.}
    \label{fig:shift_idea_flat}
\end{figure}

% \Dominik{This is a very nice visualization. Strictly speaking our construction above is a special case of that as we assume that the pieces have the same probability mass. The reasons for that is that we want to make the random shift invariant under re-parametrization. Imagine a one-dim random variable with a density (e.g $N(0,1)$). Under the current construction you would take quantiles $(q_0,q_{10})$, $(q_{10},q_{20})$, ... and re-weight the density on the intervals. If someone would take monotone transformation of the random variables, the old quantiles would map 1:1 on the new quantiles, so the construction does not change. If we do more complicated transformations (non-monotone) then asymptotically the constructions all lead to the same CLT} \yingcomment{good point! I used equal split of the axes and nonequal masses for easier visualization (it's hard to plot equal masses but noneven splits of the sample space).}

Making the grid more fine-grained and taking limits ($n_Q,n_P,M \rightarrow \infty$) we obtain a distributional CLT that describes the shift of empirical means under this two-stage sampling procedure. There are various asymptotic regimes that one could consider. Considering the asymptotic regime where $n_Q/M \rightarrow \rho \in (0,\infty)$ means sampling uncertainty and distributional uncertainty are of the same order~\citep{jeong2022calibrated}. Taking $n_Q /M \rightarrow 0$ means distributional uncertainty is of larger order than sampling uncertainty~\citep{jeong2024out}. In the following, we focus on scenarios where sampling uncertainty and distributional uncertainty are of the same order, that is, we assume that $n_Q/M$ and $n_P/M$ converge to positive real numbers as we let $M \rightarrow \infty$.

\begin{theorem}[Distributional CLT]
\label{thm:clt}
Let $\hat{\mathbb{E}}_Q[\psi]$ denote the sample mean of a function $\psi(T,X,U)$ over $n_Q$ i.i.d.\ draws from $Q$ and $\hat{\mathbb{E}}_P[\psi]$ denote the sample mean of $\psi$ over $n_P$ i.i.d.\ draws from $P$. Under the random distributional shift model described above, for any function $\psi(T,X,U) \in L^2(P)$, we have
\begin{equation*}
  s_n^{-1} \left(  \hat{\mathbb{E}}_Q[\psi] -\hat{\mathbb{E}}_P[\psi] \right) \stackrel{d}{\rightarrow} \mathcal{N}(0,1),
\end{equation*}
where $s_n^2 =  \left( \frac{1}{n_P} + \frac{1}{n_Q} \right) \text{Var}_{P}(\psi)  + \delta_\text{M}^2 \text{Var}_{P}(\EE_P[\psi|X,U]) $, and $\delta_\text{M}^2 = \frac{1}{M} \frac{\EE[W^2]}{\EE[W]^2} $ measures the strength of perturbation. 
If $\psi$ is a vector of functions, then $\text{Var}_{P}(\psi)$ and $\text{Var}_P(\EE_P[\psi|X,U])$ are covariance matrices. 
\end{theorem}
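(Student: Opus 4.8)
The plan is to split the target--source difference into the three sources of randomness already displayed before the theorem, namely
\[
\hat{\EE}_Q[\psi] - \hat{\EE}_P[\psi] = \underbrace{\big(\hat{\EE}_Q[\psi] - \EE_Q[\psi]\big)}_{A} + \underbrace{\big(\EE_Q[\psi] - \EE_P[\psi]\big)}_{B} + \underbrace{\big(\EE_P[\psi] - \hat{\EE}_P[\psi]\big)}_{C},
\]
where $\EE_Q$ denotes expectation under the random measure $Q=Q(W)$, and to show that each term is asymptotically Gaussian, that the three are asymptotically independent, and that their variances add up to $s_n^2$. By the Cram\'er--Wold device it suffices to treat scalar $\psi$. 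First I would record the exact form of the reweighting: writing $a_m = \EE_P[\psi\mid C_m^{(M)}]$, $\bar W = \frac1M\sum_m W_m$, $V_m = W_m/\bar W$, and $\bar a = \EE_P[\psi]$, one gets $\EE_Q[\psi] = \frac1M\sum_m V_m a_m$, and since $\frac1M\sum_m (V_m-1)=0$ the pure shift term is $B = \frac1M\sum_m (V_m-1)(a_m-\bar a)$.

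For $C$ I would invoke the ordinary CLT: it is a centered sample mean of $n_P$ i.i.d.\ draws from the fixed law $P$, so $\sqrt{n_P}\,C \stackrel{d}{\rightarrow}\mathcal{N}(0,\text{Var}_P(\psi))$, and $C$ is independent of $(A,B,W)$ since it is computed on a separate dataset. For the distributional term $B$, the equal-mass construction makes the summands $(V_m-1)(a_m-\bar a)$ a triangular array of asymptotically independent mean-zero contributions, to which I would apply a Lindeberg/Lyapunov CLT as $M\to\infty$. The two analytic inputs are the law of large numbers $\bar W\to\EE[W]$ and the $L^2$-approximation assumption of the footnote: the latter gives $\bar a\to\EE_P[\psi]$ and $\frac1M\sum_m a_m^2\to\EE_P[(\EE_P[\psi\mid X,U])^2]$, hence $\frac1M\sum_m (a_m-\bar a)^2\to\text{Var}_P(\EE_P[\psi\mid X,U])$. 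The Lindeberg condition follows from $\text{Var}(W)<\infty$ together with the negligibility of any single piece (no $a_m-\bar a$ dominates the sum), yielding $B$ asymptotically normal with variance $\delta_M^2\,\text{Var}_P(\EE_P[\psi\mid X,U])$.

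The remaining term $A$ is, conditionally on $W$, a centered sample mean of $n_Q$ i.i.d.\ draws from $Q(W)$, so given $W$ it is asymptotically $\mathcal{N}(0,\tfrac1{n_Q}\text{Var}_{Q}(\psi))$; the key fact is that this random conditional variance concentrates, $\text{Var}_{Q(W)}(\psi)\to\text{Var}_P(\psi)$ almost surely as $M\to\infty$, because $Q\to P$ in $L^2$. I would then glue the pieces with a characteristic-function argument (here $\mathrm{i}=\sqrt{-1}$): conditioning on $W$,
\[
\EE\big[e^{\mathrm{i}t(A+B)/s_n}\big] = \EE\Big[e^{\mathrm{i}tB/s_n}\,\EE\big[e^{\mathrm{i}tA/s_n}\mid W\big]\Big],
\]
where the inner factor is $\approx\exp\!\big(-\tfrac{t^2}{2s_n^2}\tfrac1{n_Q}\text{Var}_{Q(W)}(\psi)\big)$; by the concentration of $\text{Var}_{Q(W)}(\psi)$ it may be replaced by its deterministic limit and pulled out, so that $A+B$ is asymptotically the sum of two independent Gaussians. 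Multiplying by the independent factor from $C$ gives joint asymptotic normality with total variance $s_n^2 = (\tfrac1{n_P}+\tfrac1{n_Q})\text{Var}_P(\psi) + \delta_M^2\,\text{Var}_P(\EE_P[\psi\mid X,U])$, and dividing by $s_n$ yields the claim. The regime $n_Q/M\to\rho$, $n_P/M\to$ const.\ is what places $A$, $B$ (and $C$) at the same order.

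I expect the main obstacle to be exactly this gluing step, because the sampling noise $A$ and the distributional noise $B$ live at the same order yet on different probabilistic ``levels'': $A$ is conditionally Gaussian given $W$ with a \emph{random} variance, while $B$ is $W$-measurable. Making the replacement of $\text{Var}_{Q(W)}(\psi)$ by the constant $\text{Var}_P(\psi)$ rigorous requires a uniform-integrability/concentration argument, plus a moment bound controlling the conditional Lindeberg remainder uniformly in $W$, rather than a naive conditional CLT. A secondary subtlety is the bookkeeping of the $O(1/M)$ corrections created by the normalization $\bar W$ in the denominator of $V_m$, which I must track to confirm that the \emph{centered} variance $\text{Var}_P(\EE_P[\psi\mid X,U])$ (and not the uncentered second moment) appears and that these corrections do not perturb the leading constant $\delta_M^2$.
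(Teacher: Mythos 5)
Your decomposition into $A+B+C$, the Lindeberg CLT for the shift term $B$ (via the array $(V_m-1)(a_m-\bar a)$, the LLN for $\bar W$, and the $L^2$-approximation footnote), and the independent standard CLT for $C$ are exactly what the paper does; the structure of your argument is the same as the paper's. You have also correctly located the one genuinely delicate step: combining the $W$-measurable term $B$ with the term $A$, which is only conditionally Gaussian given $W$ with a random variance. The paper's realization of this step is a quantitative one: it applies the Berry--Esseen bound to $A$ conditionally on $(W_m)_m$, uses $\mathrm{sd}_{Q}(\psi)\stackrel{P}{\rightarrow}\mathrm{sd}_P(\psi)>0$ to replace the random conditional standard deviation by its limit, and then evaluates $\mathbb{E}\bigl[1-\Phi\bigl(\sqrt{1+L^2}\,x - L G\bigr)\bigr]=1-\Phi(x)$ to convolve the two Gaussian limits. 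That is the concrete form of the ``uniform moment bound on the conditional CLT remainder'' you anticipate needing, so your characteristic-function plan and the paper's CDF-based argument are two implementations of the same idea.

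The one ingredient your plan is missing is the passage from bounded to general $L^2$ functions, and it is not cosmetic: any uniform-in-$W$ quantitative CLT for $A$ (Berry--Esseen, or a third-order expansion of the conditional characteristic function) requires a third moment of $\psi$ under $Q(W)$, controlled uniformly over the random perturbation, which $\psi\in L^2(P)$ alone does not supply. The paper therefore first proves the statement for bounded $\psi$ and then extends by truncation, setting $\psi_b=\psi 1_{|\psi|\le b}+1_{|\psi|\ge b}\,\mathbb{E}_P[\psi\mid |\psi|\ge b]$ and showing the residual $\hat{\mathbb{E}}_Q[\psi-\psi_b]$ is negligible at scale $M^{-1/2}$ via two Chebyshev bounds --- one for the sampling part conditionally on $W$, and one for the shift part $\mathbb{E}_Q[\psi-\psi_b]$, where the lower bound $W_m\ge w_0$ on the weights is used to control the random normalization. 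As written, your argument establishes the theorem only for bounded $\psi$ (or $\psi$ with uniformly controlled third moments); you would need to add this truncation step, together with the accompanying variance comparison $\sigma_{b,M}^2\to s_n^2$ as $b\to\infty$, to cover the full $L^2(P)$ claim.
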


% Intuitively, we have two sources of uncertainty that contribute to the discrepancy between sites. Since the distribution of $T$ does not change, it is only affected by sampling uncertainty, while $(X,U)$ is affected by both sampling uncertainty and distributional uncertainty.
In Theorem~\ref{thm:clt}, the variance term $(\frac{1}{n_P} + \frac{1}{n_Q}) \text{Var}_{P}(\psi)$ is the usual asymptotic variance one would obtain under the i.i.d.~assumption that $P=Q$. In addition,  random perturbations to the distributions contributes a factor of $\delta_\text{M}^2 \text{Var}_{P}(\EE_P[\psi|X,U])$, where only the variance of $\EE_P[\psi|X,U]$ counts because only the distribution of $(X,U)$ is perturbed, while that of $T$ remains invariant. 

\paragraph{Why covariate shift often upper bounds conditional shift.}
In the following, we further discuss how this distributional CLT implies that covariate shift often upper bounds conditional shift. 

For simplicity, we focus on deriving the generalization error for the estimators $\hat\theta_P=\hat\EE_P[\phi]$ and $\hat\theta_Q = \hat\EE_Q[\phi]$. %The arguments below can also be applied to weighted estimators (i.e., generalizing from a weighted estimator of $\EE_Q[\phi_P(X)]$ to the target estimate $\hat\EE_Q[\phi]$), as long as they are asymptotically linear, i.e.\ they have an expansion in terms of an influence functions $\hat \theta_Q - \hat \theta_P \approx \hat{\EE}_P[\phi] - \hat{\EE}_Q[\phi]$.  %In practice, one may also address covariate shift via re-weighting and then leverage the predictive role of covariate shift to account for the remaining conditional shift. In this case, the ``original'' estimator becomes an estimator for $\EE_Q[\phi_P]$, and similar arguments follow given that the weights can be consistently estimated, which we omit here for conciseness. 
A formal justification of this influence function approximation for general $M$-estimators can be found in \cite{jeong2022calibrated}. 
The numerator of our relative conditional shift measure~\eqref{def:rel_cond_shift} equals the difference-in-means estimator with $\psi = \phi - \phi_P(X)$ (ignoring the estimation of $\phi_P(X)$ for simplicity), where $ \phi = \frac{T}{\pi} Y - \frac{1-T}{1-\pi} Y$ or $\phi = Y$ depending on the hypothesis. Applying the distributional CLT, for the squared relative conditional shift measure $\eqref{def:rel_cond_shift}^2$, we get the estimate
\begin{equation}\label{eq:stochsmaller}
  \frac{(\hat{\EE}_Q[\psi] - \hat{\EE}_P[\psi])^2}{\hat{\text{Var}}_P(\psi)} \stackrel{d}{=} \left(\frac{1}{n_P} + \frac{1}{n_Q} + \delta_M^2 \frac{\text{Var}_{P}(\EE_P[\psi|X,U])}{\text{Var}_{P}(\psi)} \right) Z_1 + o_P(\delta_M),
\end{equation}
where $Z_1 \sim \chi^2(1)$.
%\Dominik{Practically speaking, re-weighting doesn't do much (since the covariates don't explain much variance), but it's theoretical justification is a bit muddy for the random shift case. For random shift, many likelihood ratio weights will actually not converge (likelihood ratio is just random noise), so standard asymptotic results for AIPW will not apply. We are entering muddy waters by trying to use AIPW style results in a random shift world. Actually using AIPW is suboptimal under random shift, but that's another paper... I'm trying to keep it as simple as possible here, I hope it worked.}
Using the distributional CLT for the covariates (taking $\psi = X_\ell$ where $X_\ell$ is the $\ell$-th observed covariate), we obtain that standardized squared differences follows a scaled chi-square distribution:
\begin{equation}\label{eq:stochlarger}
   \frac{(\hat{\mathbb{E}}_Q[X_\ell] -\hat{\mathbb{E}}_P[X_\ell])^2}{\hat{\text{Var}}_{P}(X_\ell)} \stackrel{d}{=} 
 \left(\frac{1}{n_P} + \frac{1}{n_Q} + \delta_M^2 \right) Z_1 + o_P(\delta_M).
\end{equation}
Here, $\hat{\text{Var}}_{P}(X_\ell)$ is the sample variance of $X_\ell$ in the source data from $P$. Thus, up to lower order terms, equation~\eqref{eq:stochsmaller} is stochastically smaller than equation~\eqref{eq:stochlarger} because $\text{Var}_{P}(\EE_P[\psi|X,U])/\text{Var}_{P}(\psi) \leq 1$. In other words, the standardized conditional shift is stochastically smaller than the standardized covariate shift. This is in line with the empirical phenomenon observed in Figure~\ref{fig:context_measure_PPML}. This also justifies replacing~\eqref{def:rel_cov_shift} by the stabilized version~\eqref{def:stab_cov_shift}: this is roughly because the perturbations are homogeneous in different directions. 

% This is in line with Section~\ref{subsec:def_shift_measure}, where we introduced the standardized squared difference in covariates as the covariate shift measure.
 
 If we average over multiple covariates $X_\ell$ that are uncorrelated under $P$, by the distributional CLT, we can estimate the squared covariate shift measure $\eqref{def:stab_cov_shift}^2$ by
\begin{equation}\label{eq:stabilized_estimate}     \frac{1}{L}  \sum_{\ell=1}^L   \frac{(\hat{\mathbb{E}}_Q[X_\ell] -\hat{\mathbb{E}}_P[X_\ell])^2}{\hat{\text{Var}}_{P}(X_\ell)} \stackrel{d}{=} 
 \left(\frac{1}{n_P} + \frac{1}{n_Q} + \delta_\text{M}^2 \right) \frac{Z_L}{L} + o_P(\delta_M),
\end{equation}
where $Z_L \sim \chi^2(L)$.
As $\frac{Z_L}{L} \rightarrow 1$ for $L \rightarrow \infty$, equation~\eqref{eq:stabilized_estimate} will be close to $\frac{1}{n_P} + \frac{1}{n_Q} + \delta_\text{M}^2$. 
When the covariates are correlated, one may standardize them with their empirical covariance matrix to restore~\eqref{eq:stabilized_estimate}. In our empirical studies, the covariates exhibit low correlation, hence we directly employ the formula~\eqref{def:stab_cov_shift}. 

These results motivate using a ratio of the estimated conditional shift and estimated covariate shift as a pivot to create prediction intervals. In the next section, we propose such prediction intervals and evaluate the empirical performance.

\section{Effect Generalization by Exploiting the Predictive Role}
% \section{Prediction Intervals for the Target Population Estimate}
\label{sec:generalize}

In this section, we demonstrate that leveraging the predictive role of covariate shift leads to reliable generalization for target distributions. To this end, we build prediction intervals\footnote{We again create prediction intervals for easier evaluation based on target estimates (instead of the underlying parameters).} for the target population estimate $\widehat{\theta}_Q$ based on our distribution shift measure and evaluate their empirical coverage. 
 
\subsection{Constructing Prediction Intervals} 
Before presenting the results, we begin with a high-level overview of our construction of the prediction intervals using the relationship between conditional and covariate shift measures, while we defer technical details on the estimation procedures to Appendix~\ref{app:subsec_est_generalize}. 

We consider generalization tasks where a scientist has access to full observations from the source distribution $P$ but only the covariates $X$ from the target distribution $Q$. To construct our prediction interval for the target \emph{estimate} $\hat\theta_Q$, we leverage the ratio between the covariate and conditional shift measures:
% \$
% \hat{r}_{i\to j}^{(k)} := \hat{t}_{Y|X}^{i\to j,(k)} / \, \hat{t}_{X}^{i\to j,(k)},
% \$
% where $\hat{t}_{Y|X}^{i\to j,(k)}$ is the estimated conditional shift for generalizing hypothesis $k$ from site $i$ to site $j$, and $\hat{t}_{X}^{i\to j,(k)}$ is the corresponding covariate shift measure. 
\$
\hat{r} := \hat{t}_{Y|X}  / \, \hat{t}_{X},
\$
where $\hat{t}_{Y|X}$ is the estimated conditional shift measure~\eqref{def:rel_cond_shift}, and $\hat{t}_{X}$ is the corresponding covariate shift measure~\eqref{def:stab_cov_shift}. 
% By definition, a prediction interval for $\hat\theta_Q$ can be obtained by inverting a prediction interval for $\hat{r}$.
Note that one can estimate $\hat{t}_X$ but not $\hat{t}_{Y|X}$ in a generalization task. 
% Suppose the distribution of $\hat{r}_{i\to j}^{(k)}$ can be characterized (e.g., using calibration approaches we discuss below) such that one can find upper and lower bounds ${L}$ and ${U}$ obeying approximately
% \@\label{eq:LU_high_level}
% \PP\Big( L\leq \hat{r}_{i\to j}^{(k)} \leq U   \Big)  \geq 1-\alpha.
% \@
Suppose the distribution of $\hat{r}$ can be characterized (e.g., using calibration approaches we will discuss below) so that one can find upper and lower bounds ${L}$ and ${U}$ obeying approximately
\@\label{eq:LU_high_level}
\PP\Big( L\leq \hat{r} \leq U   \Big)  \geq 1-\alpha.
\@
% Inverting the above event leads to a general form of our prediction interval for $\hat\theta_j^{(k)}$: 
% \@\label{eq:PI_high_level}
% \hat{C}_{i\to j}^{(k)} =\Big[  \hat\theta_{i\to j}^{(k)} + L\cdot \hat{t}_{X}^{i \to j, (k)} \cdot \hat{s}_{Y|X}^{i \to j, (k)} ,~  \hat\theta_{i\to j}^{(k)} + U\cdot \hat{t}_{X}^{i \to j, (k)} \cdot \hat{s}_{Y|X}^{i \to j, (k)} \Big],
% \@
% where $\hat{s}_{X}^{i\to j,(k)}$ is an estimate for $\textrm{sd}_P(\phi-\phi_P)$ in~\eqref{def:rel_cov_shift}.
By definition, inverting the above event leads to a general form of our prediction interval for $\hat\theta_Q$: 
\@\label{eq:PI_high_level}
\hat{C}  =\Big[  \hat\theta_w + L\cdot \hat{t}_{X}  \cdot \hat{s}_{Y|X}  ,~  \hat\theta_{w} + U\cdot \hat{t}_{X} \cdot \hat{s}_{Y|X}  \Big],
\@
where $\hat{s}_{Y|X}$ is an estimate for $\textrm{sd}_P(\phi-\phi_P)$ in~\eqref{def:rel_cond_shift}, and $\hat{\theta}_w$ is an estimator for $\EE_Q[\phi_P(X)]$ in~\eqref{def:rel_cond_shift} which adjusts for the covariate shift. 
Above, all quantities in~\eqref{eq:PI_high_level} except $L$ and $U$   can be estimated with full observations from the source distribution $P$ and the covariate data from the target distribution $Q$. %\Naoki{@Dominik and @Ying, My understanding is that $\hat\theta_{i\to j}^{(k)}$ is the estimator under the covariate shift, right? Maybe I miss something obvious, but do we also need to take into account sampling uncertainty of $\hat\theta_{i\to j}^{(k)}$ or is it already incorporated here?} \yingcomment{Yes this is correct! We account for all the uncertainty by the prediction interval, so technically the uncertainty in $\hat\theta_{i\to j}^{(k)}$ is already incorporated.}

\begin{figure} 
    \centering
    \includegraphics[width=0.6\linewidth]{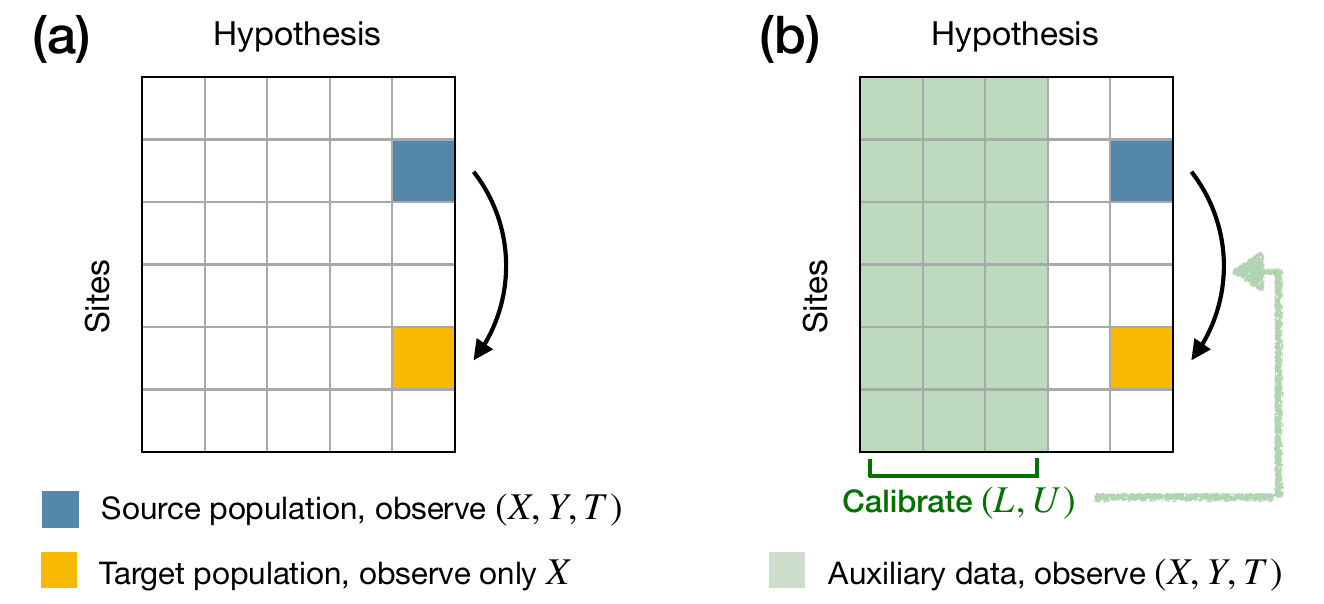}
    \caption{Generalization in two scenarios for the availability of data. \textbf{Left}: \emph{Generalization without auxiliary data from source (blue) to target (yellow)}. \textbf{Right}: \emph{Generalization with auxiliary data (green) from the same sites for other hypotheses. The auxiliary data are used to calibrate $L$ and $U$ for a new generalization task}.}
    \label{fig:generalize_idea}
\end{figure}

We consider two ways to calibrate $(L,U)$ under two scenarios of data availability (visualized in Figure~\ref{fig:generalize_idea}):
% \vspace{0.2em}
\begin{enumerate}
    \item \textbf{Constant calibration.} We construct prediction intervals assuming that the conditional shift measure is bounded by the covariate shift measure (i.e., using constant bounds $L= -1$ and $U=1$). This is theoretically justified under the random distribution shift model (Section~\ref{sec:interpret}).     
    This approach is applicable to a generalization task with no information other than covariate data from the target site. 
    \item \textbf{Data-adaptive calibration.} We construct prediction intervals by calibrating the relative strengths of conditional and covariate shift measures using some separate, existing data. This is applicable when some relevant auxiliary data are available (but not full observations in the target site) and we believe they inform the (relative) strengths of distribution shifts in the current generalization task. 
\end{enumerate}
\vspace{0.25em}

Of course, the set of available data in the second approach can be more general; we explore other scenarios in Appendix~\ref{sec:appendix-calibration}.
These proposed prediction intervals are compared with three baselines:
\vspace{0.5em}
\begin{enumerate}
    \item \texttt{IID.} Prediction intervals under the i.i.d.~assumption, i.e., $P=Q$, ignoring distribution shift.
    % That is, we construct \$\hat{\textrm{PI}}_{\textrm{IID}} = \theta(\cD_1) \pm q_{1-\alpha/2} \cdot \hat\sigma_1 \cdot \sqrt{1/n_1+1/n_2},\$ where $q_{1-\alpha/2}$ is the $(1-\alpha/2)$-th quantile of standard normal distribution, and $\hat\sigma_1^2$ is a consistent estimator for $\sigma_1^2$, the asymptotic variance of the influence function of $\theta(\cD_1)$, i.e.,~$\sqrt{n_1}(\theta(\cD_1) - \theta(P)) \stackrel{d}{\to} N(0,\sigma_1^2)$. Note that if $P=Q$, we will have $\PP(\theta(\cD_2)\in \hat{\textrm{PI}}_{\textrm{IID}} ) \to 1-\alpha$ in large samples. As such, the empirical coverage of this approach examines whether there is distribution shift across studies. 
\item \texttt{WorstCase.} Prediction intervals based on upper and lower worst-case bounds under restrictions on the distributional distance between the target distribution and the reweighted distribution, i.e., $\textrm{KL}(Q_X\otimes P_{Y|X} \| Q)\leq \rho$, where $\rho$ is calibrated with data (see Appendix~\ref{app:subsec_est_generalize} for details in both scenarios). %\Naoki{If we use some specific worst-case estimators, should we cite them and provide a bit more details?} \yingcomment{Yes, we included the details in the appendix with references.}
\item \texttt{Oracle.} Prediction intervals calibrated with true knowledge of the relative strength of covariate shift and conditional shift measures. This is the ``ideal'' but unrealistic version of our method.
\end{enumerate}
\vspace{0.5em}

We evaluate the generalization performance of different methods by the empirical coverage and average length of prediction intervals across all site pairs for each hypothesis.

\subsection{Empirical Evaluation}
% We should re-explain the setting. 

\subsubsection{Without Any Auxiliary Data}
In the first scenario, the scientists have data from the source distribution but they do not have any information other than covariates  $X$ from the target distribution. In this setting, researchers can use our proposed approach with constant calibration. 

More specifically, we consider the generalization of site $j_1$ to $j_2$ for all pairs $j_1, j_2\in \{1,\dots, N\}$, $j_1\neq j_2$, for each hypothesis $k \in \{1, \ldots, K\}$ in each application. When we construct prediction intervals, we assume all data from $j_1$ are observed while only covariates $X$ are observed from $j_2$. When we then \textit{evaluate} the statistical performance of various generalization methods, we use the full data in site $j_2$ to empirically evaluate how well each estimator approximates the benchmark estimate in $j_2$. 

\begin{figure}[!h]
    \centering
    \includegraphics[width=\linewidth]{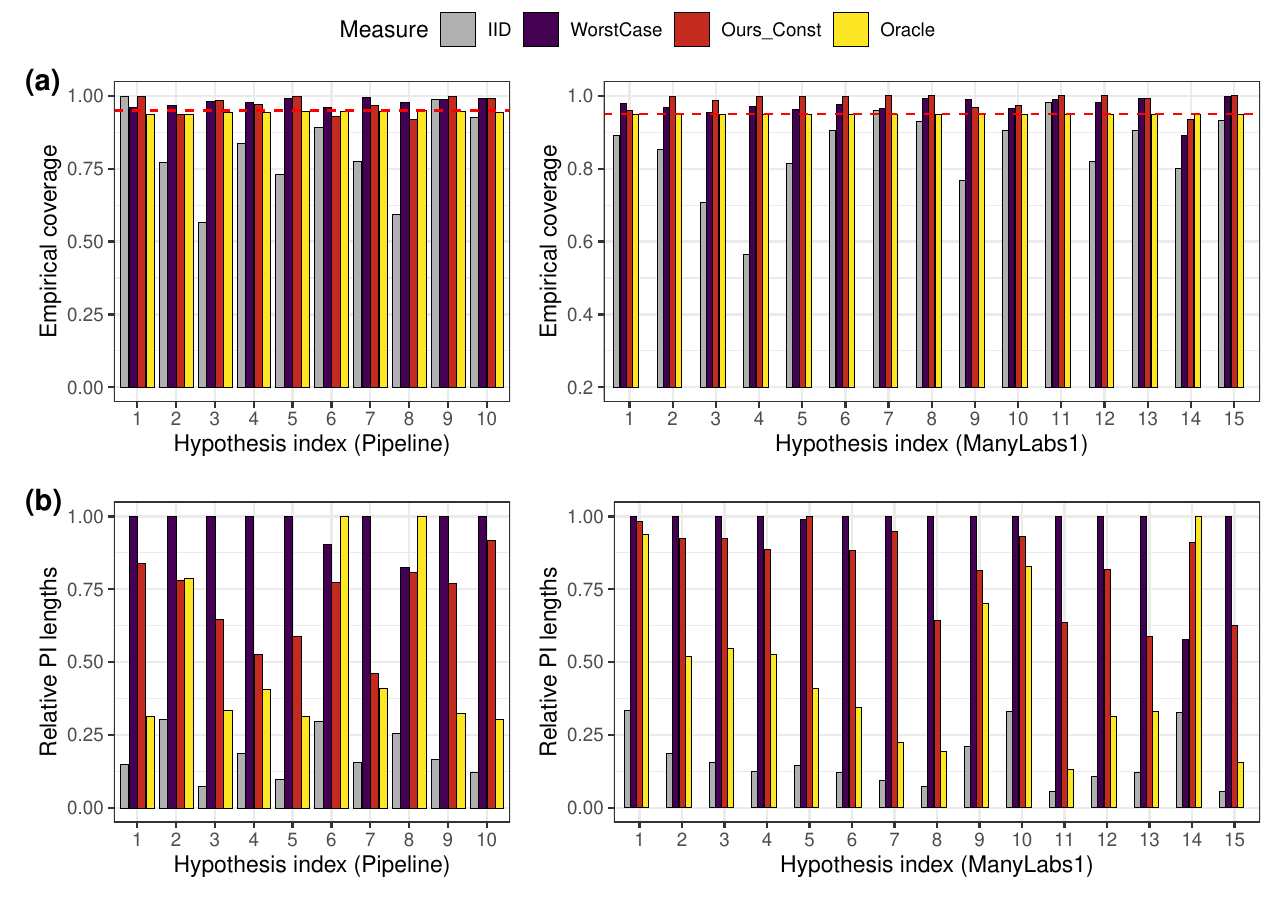}
    \caption{\textbf{Effect Generalization Without Auxiliary Data.} \textbf{Row (a)}: \textit{Empirical coverage of prediction intervals via constant calibration at nominal level $1-\alpha=0.95$ and three baseline methods using the Pipeline data (left) and ManyLabs 1 data (right).}
    \textbf{Row (b)}: \textit{Average length of prediction intervals for constant calibrated prediction intervals at nominal level $1-\alpha=0.95$ for the four methods, normalized by the largest average length in each study, using the Pipeline data (left) and ManyLabs 1 data (right).}}
    \label{fig:const_eb_KL_PPML}
\end{figure}

In Figure~\ref{fig:const_eb_KL_PPML}, we report the empirical coverage and relative lengths of prediction intervals averaged over all pairs within each hypothesis. Across two distinct applications, our procedure (denoted as ``Ours\_Const'' in red) achieves the target of $95\%$ coverage in most cases (see panel (a)). \texttt{WorstCase} prediction intervals achieve the target coverage as well but are much wider than the proposed intervals (see panel (b)). Not surprisingly, intervals based on the i.i.d.\ assumption exhibit undercoverage.

\subsubsection{With Auxiliary Data}
Next, we examine how we can improve the performance of our estimator when researchers have some auxiliary data to use data-adaptive calibration for our method. We specifically consider a scenario where data from all sites exist for other hypotheses to build prediction intervals for a new hypothesis. In practice, this setting arises when there is existing data from the same set of sites on other research questions or hypotheses. 

We calibrate $L$ and $U$ by the quantiles of site pairs from existing hypotheses, and then build a prediction interval for a new hypothesis that is only observed in one single site. To ensure stable evaluation, the ordering of the sites is randomly permuted for $10$ times. Additional calibration scenarios (generalizing to new sites for existing hypotheses, and new sites for new hypotheses)  in Appendix~\ref{sec:appendix-calibration} deliver similar messages.

\begin{figure}[!h]
    \centering
    \includegraphics[width=\linewidth]{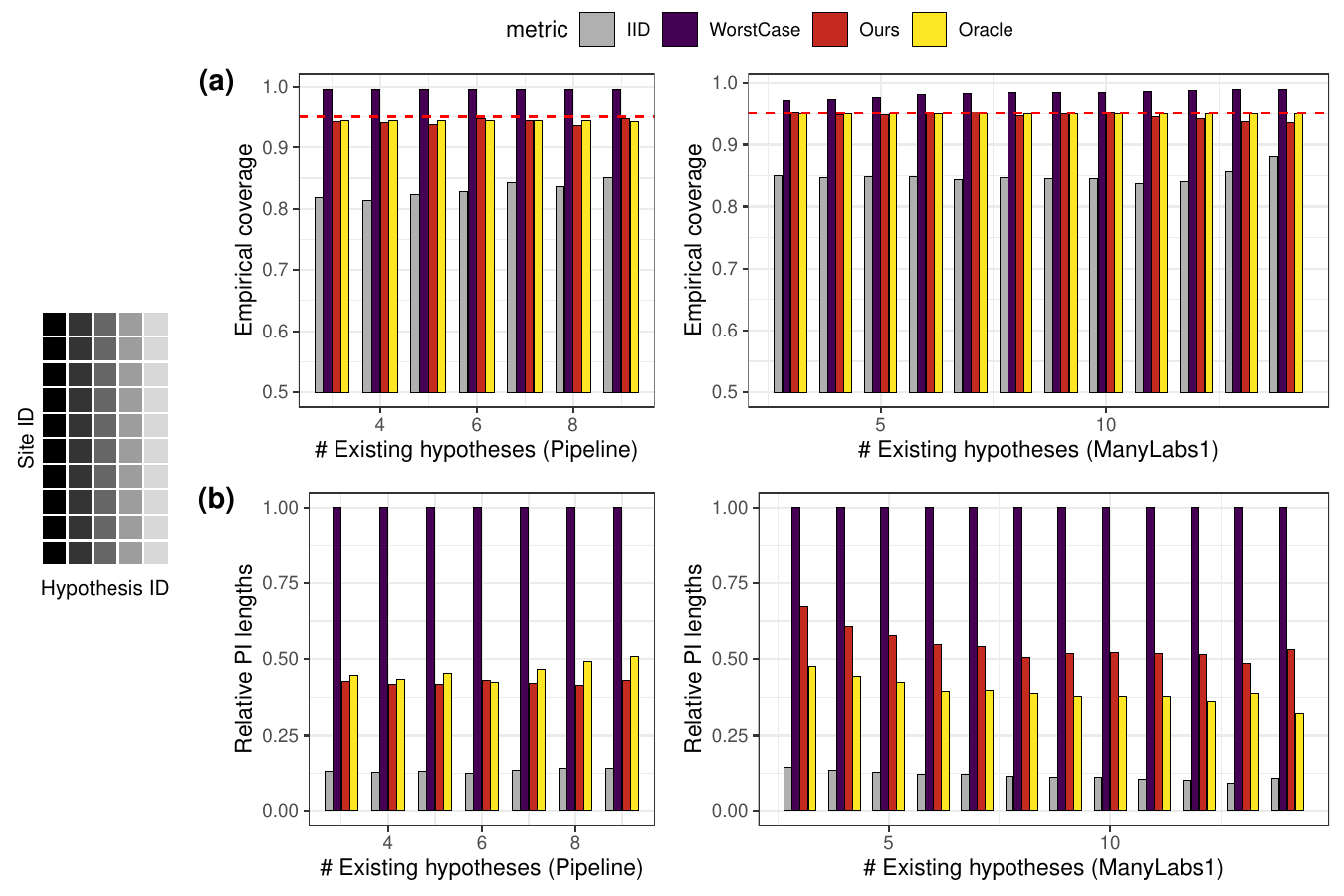}
    \caption{\textbf{Effect Generalization With Auxiliary Data:} We generalize to new studies based on distribution shift measures calibrated from the same sites in other hypotheses. \textbf{Left}: \textit{Illustration of data collection order, where dark color means earlier.} 
    \textbf{Row (a)}: \textit{Average coverage of prediction intervals built with four methods over 10 random draws of study ordering, using the Pipeline data (left) and ManyLabs 1 data (right). The red dashed line is the nominal level $0.95$}. 
    \textbf{Row (b)}: \textit{Average length of prediction intervals over 10 random draws of study ordering, normalized by the largest average length in each study, using the Pipeline data (left) and ManyLabs 1 data (right)}.}
    \label{fig:over_study_eb_KL_PPML}
\end{figure}

In Figure~\ref{fig:over_study_eb_KL_PPML}, we report the coverage and lengths of prediction intervals. For both projects, our procedure achieves coverage close to the nominal level, with prediction intervals that are much smaller than the intervals based on worst-case bounds, and quite close to the oracle method. As before, prediction intervals based on the i.i.d.\ assumption exhibit undercoverage.

\section{Discussion}

In this work, we offer new insights on distribution shifts when inferring parameter estimates in a new site  based on data from one site and covariate data from the other one. By empirical benchmarking in large-scale replication projects, we find significant distributional shifts between sites. Moreover, approaches that only account for distribution shifts of observed covariates---thereby relying on the \textit{explanatory} role of covariate shift---are often insufficient for explaining discrepancies between sites.

Instead of using covariates in an explanatory fashion, we propose to use covariates in a \emph{predictive} fashion. More precisely, we suggest predicting the strength of the shift of unobserved conditional distribution based on the strength of the shift of observed covariates. We provide empirical evidence based on two large-scale replication studies and offer a theoretical justification under a random distribution shift model.

In our empirical applications, we show that our proposed prediction intervals maintain the desired coverage even in the presence of (unobservable) distributional shifts. While these intervals can sometimes be over-conservative, they offer a significant improvement over existing approaches. Our method compares favorably to worst-case approaches, which tend to be overly pessimistic and lead to excessively wide intervals.

Our empirical and theoretical findings open up  several exciting future avenues for research. First, real-world scenarios may involve more complex forms of distributional change than the one studied in this work. For instance, in settings with emulated target populations, it might be reasonable to consider  hybrid models where there is a combination of controlled (and potentially large) covariate shift   and random shifts that arise due to inevitable deviations. Developing optimal estimation procedures for such hybrid models would be a valuable contribution. Second, the non-negligible conditional shift suggests the importance of collecting data from diverse sources to properly address the ``distributional uncertainty'' component in estimation. Towards this goal, the investigation of distribution shift patterns can provide insights for an important methodological challenge:  determining how to prioritize data collection. For example, if there is a partial covariate shift and partial random shift, it may be beneficial to prioritize the collection of covariates most affected by the shift, rather than gathering more data across all variables equally.

 % \begin{itemize}
 %     \item We consider the task of predicting the parameter estimate in a new study
 %     \item IID sampling assumpion leads to undercoverage - there is distribution shift
 %     \item Explanatory role of covariates (re-weighting) is insufficient
 %     \item Intervals based on a random shift assumption achieve desired coverage (sometimes overconservative)
 %     \item The proposed procedure compares favorably to competing worst-case approaches
 % \end{itemize}

 % Future work (from an applied perspective)

 % \begin{itemize}
 %     \item We have considered generalizing from one study to another. How much uncertainty can be reduced if we generalize from multiple studies to one target distribution? Intuitively, half of the size of the PI is due to uncertainty in the source dist, and the other half in the PI is due to uncertainty in the target distribution. This means, we might be leaving a lot of precision on the table with the current procedure.
 %     \item Are there better ways to incorporate previous data on different treatments \& outcomes?
 % \end{itemize}

 % Future work (from a methodological perspective)

 % \begin{itemize}
 %     \item Optimal estimation under hybrid distribution shift models (partial covariate shift, partial random shift)
 %          \item Prioritizing data collection
 % \end{itemize}

\section*{Acknowledgments}
We appreciate excellent research assistance by Diana Da In Lee. Egami acknowledges financial support from the National Science Foundation (SES–2318659). Rothenh\"ausler acknowledges financial support from the Dieter Schwarz Foundation, the Dudley Chamber fund, and the David Huntington Foundation.

% %%%%%%%%%%%%%%%%%%%%
% %%% Bibliography %%%
% %%%%%%%%%%%%%%%%%%%%
\newpage
\bibliographystyle{apalike}
\bibliography{reference}

\newpage
\appendix 
% !TEX root = main_new.tex

\section{Details of  datasets and data pre-processing}
\label{app:sec_data}

\subsection{Pre-processing for Pipeline project}

The raw datasets for the Pipeline project can be found in the OSF repository \url{https://osf.io/q25xa/}. The detailed data pre-processing script can be found in the folder  \texttt{Pipeline} in the GitHub respository \url{https://github.com/ying531/awesome-replicability-data}. 

We follow the data processing scripts (in the folder ``SPSS Syntax files'') provided in the OSF repository to compute the response variables, encode the treatment indicators, and extract the covariates including age, gender, country of birth, language, ethnicity, parent education, and family incomes. 
When running the analysis, we additionally process the data for each site as follows: covariates with all N/A values are excluded; otherwise, the missing observations are imputed by the site median. Since entropy balancing enforces positive weights, when running the EB-based methods, we also exclude covariates whose sample average in the target dataset falls outside the support in the source dataset. 

\subsection{Pre-processing for ManyLabs1 project}

The raw datasets for the ManyLabs1 project an be found in the OSF repository \url{https://osf.io/wx7ck/}. The detailed data processing script can be found in the folder \texttt{ManyLabs1} in the GitHub respository \url{https://github.com/ying531/awesome-replicability-data}. 

We  follow the data processing scripts \texttt{Syntax.Manylabs.sps} in the OSF repository to encode the responses (\texttt{dv}) and treatment indicators (\texttt{iv}), and extract the covariates including gender, age, race, ethnicity, nationality, native language, religion, and ideology.

\subsection{Reproduction code}

The code for reproducing the analysis is available at~\url{https://github.com/ying531/predictive-shift}. For easier reproduction, we also include analyses results (such as computed distribution shift measures and constructed KL-based bounds which can be costly to run) ready for producing the figures in the main text.

\subsection{Dataset information}

Table~\ref{tab:study_institutions} lists the data indices and data collection sites for the Pipeline project from the Open Science Framework (OSF) repository. Table~\ref{tab:summary_of_studies} summarizes the information for each of the $10$ hypothesis studied in the Pipeline project, including the name, test statistic and formula, number of sites conducting experiments for testing this hypothesis, and total sample sizes $N$ recruited in these sites.

Table~\ref{tab:study_institutions_ml1} lists the data collection sites in the Manylabs1 project. 
Table~\ref{tab:summary_of_studies_ml1} summarizes the information for each of the $15$ hypotheses studied in the ManyLabs1 dataset, including the hypothesis, estimator, formula (for processed data), number of sites conducting experiments for the hypothesis, and total sample sizes $N$.

\begin{table}[H]
\centering
\begin{tabular}{|c|c|p{10.5cm}|}
\hline
\textbf{New Index} & \textbf{Raw ID} & \textbf{PI, Institution} \\ \hline
1   & 0   & Original Study data collection \\ \hline
2   & 1   & Aaron Sackett, University of St. Thomas \\ \hline
3   & 2   & Alexandra Mislin, American University \\ \hline
4   & 4   & David Tannenbaum, University of Chicago \\ \hline
5   & 5   & Daniel Storage, University of Illinois at Urbana-Champaign \\ \hline
6   & 6   & Adam Hahn, University of Cologne \\ \hline
7   & 7   & Nicole Legate, Illinois Institute of Technology \\ \hline
8   & 8   & INSEAD Sorbonne Lab \\ \hline
9   & 9   & Victoria Brescoll, Yale University \\ \hline
10  & 10  & Felix Cheung, Michigan State University/University of Hong Kong \\ \hline
11  & 11  & Fiery Cushman, Harvard University \\ \hline
12  & 12  & Jay Van Bavel, New York University \\ \hline
13  & 13  & Tatiana Sokolova, HEC Paris and University of Michigan \\ \hline
14  & 15  & Jesse Graham, University of Southern California \\ \hline
15  & 16  & Anne-Laure Sellier, HEC Paris \\ \hline
16  & 17  & Eli Awtrey, University of Washington \\ \hline
17  & 18  & Jennifer Jordan, University of Groningen \\ \hline
18  & 19  & Sapna Cheryan, University of Washington \\ \hline
19  & 20  & Xiaomin Sun, Beijing Normal University \\ \hline
20  & 21  & Yoel Inbar, University of Toronto \\ \hline
21  & 22  & Wendy Bedwell, University of South Florida \\ \hline
22  & 24  & Deanna Kennedy, University of Washington Bothell \\ \hline
23  & 25  & Matt Motyl, University of Illinois at Chicago \\ \hline
24  & 26  & Erik Cheries, University of Massachusetts Amherst \\ \hline
25  & 27  & Additional INSEAD-Sorbonne lab data for Study 1\\ \hline
26  & 141 & Dan Molden, Packet 1 for Study 7 \\ \hline
27  & 142 & Dan Molden, Packet 2 for Study 4 and Study 8 \\ \hline
28  & 311 & UCI Psychology Students \\ \hline
29  & 312 & UCI Business Students \\ \hline
\end{tabular}
\caption{List of new index, raw site ID in the dataset, and contributing PI and site institutions in the Pipeline project dataset, taken from the Open Science Framework project repository~\citep{Madan_Uhlmann_Schweinsberg_Tierney_2016}.}
\label{tab:study_institutions}
\end{table}

\begin{table}[H]
\centering
\begin{tabular}{|c|l|c|c|c|l|}
\hline
\textbf{ID} & \textbf{Hypothesis} & \textbf{Estimator} & \textbf{Formula} & \textbf{\#Sites} & \textbf{$N$}\\ \hline
1  & Bigot–misanthrope   & $t$-test & \texttt{bigot\_personjudge $\sim$ condition} & 12 & 2861 \\ \hline
2  & Cold-hearted prosociality & Paired $t$-test  & \texttt{tdiff $\sim$ 1} & 12   & 2806\\ \hline
3  & Bad tipper & $t$-test & \texttt{tipper\_personjudg $\sim$ condition}  & 16  & 3658 \\ \hline
4  & Belief–act inconsistency   & $t$-test & \texttt{beliefact\_mrlblmw\_rec $\sim$ condition13}  & 13  & 3006 \\ \hline
5  & Moral inversion  & $t$-test & \texttt{moralgood $\sim$ condition} & 14  & 3076 \\ \hline
6  & Moral cliff & Paired $t$-test & \texttt{diff $\sim$ 1} & 15  & 3300 \\ \hline
7  & Intuitive economics  & $t$-test & \texttt{yz $\sim$ condition} & 15  & 3164 \\ \hline
8  & Burn-in-hell & Paired $t$-test & \texttt{tdiff $\sim$ 1} & 15  & 3176 \\ \hline
9  & Presumption of guilt  & $t$-test & \texttt{companyevaluation $\sim$ condition} & 17  & 3806 \\ \hline
10 & Higher standard & $t$-test & \texttt{standard\_evalu\_7items $\sim$ condition} & 11  & 2692\\ \hline 
\end{tabular}
\caption{Estimator, number of sites  and total sample size $N$ for each study  in the Pipeline project.}
\label{tab:summary_of_studies}
\end{table}

\begin{table}[!h] 
\renewcommand{\arraystretch}{1}
\centering
\begin{tabular}{|c|c|p{12cm}|}
\hline
\textbf{New Index} & \textbf{Raw Site ID} & \textbf{Institution, Location} \\ \hline
1  & Abington & Penn State Abington, Abington, PA \\ \hline
2  & Brasilia & University of Brasilia, Brasilia, Brazil \\ \hline
3  & Charles & Charles University, Prague, Czech Republic \\ \hline
4  & Conncoll & Connecticut College, New London, CT \\ \hline
5  & CSUN & California State University, Northridge, LA, CA \\ \hline
6  & Help & HELP University, Malaysia \\ \hline
7  & Ithaca & Ithaca College, Ithaca, NY \\ \hline
8  & JMU & James Madison University, Harrisonburg, VA \\ \hline
9  & KU & Koç University, Istanbul, Turkey \\ \hline
10 & Laurier & Wilfrid Laurier University, Waterloo, Ontario, Canada \\ \hline
11 & LSE & London School of Economics and Political Science, London, UK \\ \hline
12 & Luc & Loyola University Chicago, Chicago, IL \\ \hline
13 & McDaniel & McDaniel College, Westminster, MD \\ \hline
14 & MSVU & Mount Saint Vincent University, Halifax, Nova Scotia, Canada \\ \hline
15 & MTURK & Amazon Mechanical Turk (US workers only) \\ \hline
16 & OSU & Ohio State University, Columbus, OH \\ \hline
17 & Oxy & Occidental College, LA, CA \\ \hline
18 & PI & Project Implicit Volunteers (US citizens/residents only) \\ \hline
19 & PSU & Penn State University, University Park, PA \\ \hline
20 & QCCUNY & Queens College, City University of New York, NY \\ \hline
21 & QCCUNY2 & Queens College, City University of New York, NY \\ \hline
22 & SDSU & SDSU, San Diego, CA \\ \hline
23 & SWPS & University of Social Sciences and Humanities Campus Sopot, Sopot, Poland \\ \hline
24 & SWPSON & Volunteers visiting www.badania.net \\ \hline
25 & TAMU & Texas A\&M University, College Station, TX \\ \hline
26 & TAMUC & Texas A\&M University-Commerce, Commerce, TX \\ \hline
27 & TAMUON & Texas A\&M University, College Station, TX (Online participants) \\ \hline
28 & Tilburg & Tilburg University, Tilburg, Netherlands \\ \hline
29 & UFL & University of Florida, Gainesville, FL \\ \hline
30 & UNIPD & University of Padua, Padua, Italy \\ \hline
31 & UVA & University of Virginia, Charlottesville, VA \\ \hline
32 & VCU & VCU, Richmond, VA \\ \hline
33 & Wisc & University of Wisconsin-Madison, Madison, WI \\ \hline
34 & WKU & Western Kentucky University, Bowling Green, KY \\ \hline
35 & WL & Washington \& Lee University, Lexington, VA \\ \hline
36 & WPI & Worcester Polytechnic Institute, Worcester, MA \\ \hline
\end{tabular} 
\caption{List of new index, raw site ID in the dataset, and contributing PI and site institutions in the ManyLabs 1 dataset, taken from the original paper~\citep{klein2014investigating}.}
\label{tab:study_institutions_ml1}
\end{table}

\begin{table}[!h]
\centering
\begin{tabular}{|c|c|c|c|c|c|}
\hline
\textbf{ID} & \textbf{Hypothesis} & \textbf{Estimator} & \textbf{Formula} & \textbf{\#Sites} & \textbf{$N$} \\ \hline
1  & Allowedforbidden & $t$-test & \texttt{dv $\sim$ iv} & 36 & 6292 \\ \hline
2  & Anchoring1 & $t$-test & \texttt{dv $\sim$ iv} & 36 & 5362 \\ \hline
3  & Anchoring2 & $t$-test & \texttt{dv $\sim$ iv} & 36 & 5284 \\ \hline
4  & Anchoring3 & $t$-test & \texttt{dv $\sim$ iv} & 36 & 5627 \\ \hline
5  & Anchoring4 & $t$-test & \texttt{dv $\sim$ iv} & 36 & 5609 \\ \hline
6  & Contact & $t$-test & \texttt{dv $\sim$ iv} & 36 & 6336 \\ \hline
7  & Flag & $t$-test & \texttt{dv $\sim$ iv} & 36 & 6251 \\ \hline
8  & Gainloss & $t$-test & \texttt{dv $\sim$ iv} & 36 & 6271 \\ \hline
9  & Gambfal & $t$-test & \texttt{dv $\sim$ iv} & 36 & 5942 \\ \hline
10 & Iat & $t$-test & \texttt{dv $\sim$ iv} & 36 & 5851 \\ \hline
11 & Money & $t$-test & \texttt{dv $\sim$ iv} & 36 & 6333 \\ \hline
12 & Quote & $t$-test & \texttt{dv $\sim$ iv} & 36 & 6325 \\ \hline
13 & Reciprocity & $t$-test & \texttt{dv $\sim$ iv} & 36 & 6276 \\ \hline
14 & Scales & $t$-test & \texttt{dv $\sim$ iv} & 36 & 5899 \\ \hline
15 & Sunk & $t$-test & \texttt{dv $\sim$ iv} & 36 & 6330 \\ \hline
\end{tabular}
\caption{Estimator, number of sites  and total sample size $N$ for each study (indices and variable names in cleaned data and this paper) for ManyLabs 1 data.}
\label{tab:summary_of_studies_ml1}
\end{table}

\section{Estimation details}
\label{app:sec_estimation}

In this section, we detail the estimation procedures for all the analyses in this paper. Appendix~\ref{app:subsec_est_notation} recalls important notations. Appendix~\ref{app:subsec_est_explain} describes the analysis for the explanatory role in Section~\ref{subsec:explain}. Appendix~\ref{app:subsec_est_bound} details the estimation for our distribution shift measures in Section~\ref{sec:bound}. 
Finally, Appendix~\ref{app:subsec_est_generalize} details our estimation and evaluation procedures for effect generalization in Section~\ref{sec:generalize}.

\subsection{Notations}
\label{app:subsec_est_notation}

We begin by revisiting some notations. A hypothesis $k$ is replicated by sites $j\in \{1,\dots,N_k\}$, each observing a dataset $\cD_j^{(k)} = \{X_i^{(j,k)},T_i^{(j,k)},Y_i^{(j,k)}\}_{i=1}^{n_j^{(k)}}$, where $X_i$ is the covariates, $T_i\in \{0,1\}$ is the binary treatment, and $Y_i$ is the outcome(s). For each hypothesis $k$, the estimate for site $j$ is $\hat\theta_j^{(k)} = \theta^{(k)}(\cD_j^{(k)})$, where $\theta^{(k)}$ is the same functional that represents the analysis procedure applied to all sites (as listed in Tables~\ref{tab:summary_of_studies} and~\ref{tab:summary_of_studies_ml1}). 
Here, $\hat\theta_j^{(k)}$ estimates the population parameter $\theta_j^{(k)} = \theta^{(k)}(P_{j}^{(k)})$, where $P_j^{(k)}$ is the underlying distribution from which $\cD_j^{(k)}$ is drawn. We assume access to a function $\phi^{(k)}(\cdot)$ such that 
\$
\hat\theta_j^{(k)} = \frac{1}{n_j^{(k)}} \sum_{i=1}^{n_j^{(k)}} \phi^{(k)}\big(X_i^{(j,k)},Y_i^{(j,k)},T_i^{(j,k)}\big).
\$

\subsection{Estimation for the explanatory role}
\label{app:subsec_est_explain}

In this part, we detail how the prediction intervals for IID, CovShift (DR) and CovShift (EB) are constructed and evaluated in Section~\ref{subsec:explain}. 
% Since our analysis is grouped by hypothesis, here we fix a hypothesis and sometimes omit the superscript $(k)$ throughout this subsection. 
The sites are denoted as $i,j\in \{1,\dots,N\}$, where site $i$ is the ``original'' site with full observations, and site $j$ is the ``target'' site we want to generalize the effects to.  

\paragraph{Estimation for IID.}
For any site pair $(i,j)$ for a hypothesis $k$, we assume access to a consistent variance estimator $(\hat\sigma_i^{(k)})^2$ for $\hat\theta_{i}^{(k)}$, such that 
\$
\sqrt{n_{i}^{(k)}} \cdot \frac{\hat\theta_i^{(k)} - \theta_i^{(k)}}{\hat\sigma_i^{(k)}} \stackrel{d}{\to}\cN(0,1).
\$
Note that $\hat\theta_i$ and $\hat\sigma_i$ can be computed using  full observations $\cD_i^{(k)}$ from the ``original'' site $i$. It is straightforward to construct these estimators for the $t$-tests and paired $t$-tests considered in this work, and we note that $\hat\sigma_i^{(k)} = \hat\sigma_j^{(k)}+o_P(1)$ for any $i\neq j$ if the i.i.d.~assumption holds.
For the \texttt{IID} method, we construct a prediction interval based on site $i$ for $\hat\theta_j^{(k)}$ via 
\@\label{eq:PI_iid}
\hat{C}_{i\to j}^{\text{IID},(k)} = \hat\theta_i^{(k)} \pm  q_{1-\alpha/2}\cdot \hat\sigma_i^{(k)} \cdot \Big( \sqrt{1/n_{i}^{(k)} + 1/ n_{j}^{(k)} } \Big),
\@
where $q_{1-\alpha/2}$ is the $(1-\alpha/2)$-th quantile of a standard normal distribution.
Under the i.i.d.~assumption that $P_{i}^{(k)}=P_j^{(k)}$, we know that 
\$
\PP \Big( \hat\theta_j^{(k)} \in \hat{C}_{i\to j}^{\text{IID},(k)}  \Big) \to 1-\alpha.
\$
For evaluation, we will use full observations from the target site. Each grey bar in (P,a) and (M,a) of Figure~\ref{fig:explanatory} is computed via 
\$
\hat{\text{Cov}}_k^{\text{IID}} := \frac{1}{N_k(N_k-1)} \sum_{i=1}^{N_k}\sum_{j\neq i}  \ind \Big\{ \hat\theta_j^{(k)} \in \hat{C}_{i\to j}^{\text{IID},(k)}  \Big\}.
\$
 Thus, if the i.i.d.~assumption holds, we will expect $\hat{\text{Cov}}_k^{\text{IID}} \approx 1-\alpha$.

\paragraph{Estimation for CovShift (DR).} 
For any site pair $(i,j)$ in a hypothesis $k$, we first describe how to construct a point estimate for generalization via reweighting. We denote the estimator as  $\hat\theta_{i\to j}^{(k)}$ when generalizing from site $i$ with full observations to site $j$ with only covariate information. We will employ cross-fitting~\citep{chernozhukov2018double} to allow the use of flexible machine learning algorithms such as random forests in estimating the covariate shift weights and conditional mean functions. 

First, we randomly split the data $\cD_i^{(k)}$ and covariates in $\cD_j^{(k)}$ into two equally-sized halves each. We use one half of data to estimate the covariate shift function $\ud P_{j,X}^{(k)}/\ud P_{i,X}^{(k)}(x)$ via $\hat{w}(x)$, and the conditional mean function $\varphi(x) := \EE[\phi^{(k)}(X,Y,T)\given X=x]$ via $\hat\varphi(x)$. These functions will be applied to the other fold of data, and construct the reweighted estimator 
\@\label{eq:est_dr}
\hat\theta_{i\to j}^{(k)} = \frac{1}{n_{i}^{(k)}}\sum_{\ell} \hat{w}(X_\ell^{(i,k)}) \cdot\Big\{ \phi^{(k)}(X_\ell^{(i,k)},Y_\ell^{(i,k)},T_\ell^{(i,k)}) - \hat\varphi(X_\ell^{(i,k)} ) \Big\} +  \frac{1}{n_{j}^{(k)}}\sum_{\ell=1}^{n_j^{(k)}}\hat\varphi(X_\ell^{(j,k)} ).
\@
Following~\cite{jin2024tailored}, if the covariate shift condition holds, one can show that for the $t$-test and paired $t$-test considered in this work, as long as $\hat{w}$ and $\hat\varphi$ converge to the true covariate shift weight function and the true conditional mean function with a rate of $o_P((n_{i}^{(k)})^{-1/4})$, it holds that 
\$
\frac{ \hat\theta_j^{(k)} - \hat\theta_{i\to j}^{(k)} }{\hat\sigma_{i\to j}^{(k),\textrm{CovShift}}} \stackrel{d}{\to} \cN(0,1),
\$
where $\hat\sigma_{i\to j}^{(k),\textrm{CovShift}}$ is any consistent estimator for $\sigma_{i\to j}^{(k),\textrm{CovShift}}$, and 
\$
(\sigma_{i\to j}^{(k),\textrm{CovShift}})^2 = \frac{\EE_i^{(k)}[w(X)^2\cdot(\phi^{(k)}(X,Y,T) - \varphi^{(k)}(X))^2]}{n_{i}^{(k)}} + \frac{\EE_i^{(k)}[w(X)\cdot(\phi^{(k)}(X,Y,T) - \varphi^{(k)}(X))^2]}{n_{j}^{(k)}}.
\$
As such, we construct the prediction interval for CovShift (DR) via 
\$
\hat{C}_{i\to j}^{\text{CovShift},(k)} =\hat\theta_{i\to j}^{(k)} \pm  q_{1-\alpha/2}\cdot \hat\sigma_{i\to j}^{(k),\textrm{CovShift}} ,
\$
where $\hat\sigma_{i\to j}^{(k),\textrm{CovShift}} $ is constructed by plugging in $\hat{w}$ and $\hat\varphi$ into the definition of $\sigma_{i\to j}^{(k),\textrm{CovShift}}$.
Based on the arguments above, assuming covariate shift, under standard assumptions above, we would have 
\$
\PP\Big( \hat\theta_j^{(k)} \in \hat{C}_{i\to j}^{\text{CovShift},(k)}  \Big) \to 1-\alpha.
\$
For evaluation, we will use full observations from the target site. Each green bar in (P,a) and (M,a) of Figure~\ref{fig:explanatory} is computed via 
\$
\hat{\text{Cov}}_k^{\text{CovShift}} := \frac{1}{N_k(N_k-1)}\sum_{i=1}^{N_k}\sum_{j\neq i}  \ind\Big\{\hat\theta_j^{(k)} \in \hat{C}_{i\to j}^{\text{CovShift},(k)}  \Big\}.
\$
If the covariate shift assumption holds, we expect $\hat{\text{Cov}}_k^{\text{CovShift}} \approx 1-\alpha$ under standard regularity conditions.

\paragraph{Estimation for CovShift (EB).} 
The idea for constructing the point estimate for CovShift (EB) is similar to CovShift (DR), with the only exception that we obtain the weights $\hat{w}_\ell^{(i,k)}$ are obtained by entropy balancing~\citep{hainmueller2012entropy} following the procedure in~\cite{jin2023diagnosing}, while $\hat\sigma_{i\to j}^{(k),\textrm{CovShift}}$ is obtained in the same way as in CovShift (DR). We then construct the point estimate 
\@\label{eq:est_eb}
\hat\theta_{i\to j}^{(k)} = \frac{1}{n_{i}^{(k)}}\sum_{\ell} \hat{w}_\ell^{(i,k)} \cdot \phi^{(k)}(X_\ell^{(i,k)},Y_\ell^{(i,k)},T_\ell^{(i,k)})
\@
and prediction interval 
\$
\hat{C}_{i\to j}^{\text{CovShift},(k)} =\hat\theta_{i\to j}^{(k)} \pm  q_{1-\alpha/2}\cdot \hat\sigma_{i\to j}^{(k),\textrm{CovShift}} .
\$

Following~\cite{jin2023diagnosing},  assuming covariate shift, if the weight is a logistic function of the covariates or if $\varphi(x)$ is a linear function of the covariates, we would have 
\$
\PP\Big( \hat\theta_j^{(k)} \in \hat{C}_{i\to j}^{\text{CovShift},(k)}  \Big) \to 1-\alpha.
\$
For evaluation, we will use full observations from the target site. 
% Each grey bar in (P,a) and (M,a) of Figure~\ref{fig:explanatory} is computed via 
% \$
% \hat{\text{Cov}}_k^{\text{CovShift}} := \frac{1}{N_k(N_k-1)}\sum_{i=1}^{N_k}\sum_{j\neq i}  \ind\Big\{\hat\theta_j^{(k)} \in \hat{C}_{i\to j}^{\text{CovShift},(k)}  \Big\}.
% \$
%Thus, if the covariate shift assumption holds, we will expect $\hat{\text{Cov}}_k^{\text{CovShift}} \approx 1-\alpha$ under standard regularity conditions. 
For evaluation, each purple bar in (P,a) and (M,a) of Figure~\ref{fig:explanatory} is computed via 
\$
\hat{\text{Cov}}_k^{\text{CovShift}} := \frac{1}{N_k(N_k-1)}\sum_{i=1}^{N_k}\sum_{j\neq i}  \ind\Big\{\hat\theta_j^{(k)} \in \hat{C}_{i\to j}^{\text{CovShift},(k)}  \Big\}
\$
using the prediction intervals for CovShift (EB). 
Thus, if the covariate shift assumption holds, we will expect $\hat{\text{Cov}}_k^{\text{CovShift}} \approx 1-\alpha$ under the stated linear assumptions which are standard in the balancing literature. We note that CovShift (EB) is more stable than CovShift (DR) for small-to-moderate sample sizes, as is the case for the datasets analyzed in this work.

\subsection{Estimation for distribution shift measures}
\label{app:subsec_est_bound}

We then proceed to detail the estimation procedure for our new distribution shift measures. To begin with, we note the following decomposition by~\cite{jin2023diagnosing}, which measures the contributions of distribution shifts (on the super-population level) to effect discrepancy:
\@\label{eq:mgn_decomp}
\theta(Q) - \theta(P) = \underbrace{\theta(Q) - \theta(Q_X\times P_{Y|X})}_{\textrm{Contribution of conditional shift}} + \underbrace{\theta(Q_X\times P_{Y|X}) - \theta(P)}_{\textrm{Contribution of covariate shift}}
\@
where $\theta(\cdot)$ is the functional for the parameter of interest, $P$ is the source distribution,  $Q$ is the target distribution, and $Q_X\times P_{Y|X}$ is the reweighted distribution. Note that the contribution of conditional shift will be zero under the covariate shift assumption (Definition~\ref{def:cov_shift}). 
In multi-site replication studies, for generalizing estimates for a hypothesis $k$ from site $i$ to site $j$, we will take $\theta=\theta^{(k)}$, $P = P_i^{(k)}$, and $Q = P_j^{(k)}$.

\paragraph{Computing the conditional shift measure.}
Following~\eqref{eq:mgn_decomp}, we recall our definitions of the population-level conditional shift measure (for generalizing from $P$ to $Q$) in Section~\ref{subsec:def_shift_measure}, denoted as 
\$
t_{Y|X} := \frac{\Delta_{Y|X}}{s_{Y|X}},\quad 
\Delta_{Y|X} = \theta(Q) - \theta(Q_X\times P_{Y|X}),\quad 
s_{Y|X}^2 = \Var_P \big( \phi(X,Y,T) - \EE_P[\phi(X,Y,T)\given X]\big),
\$
where the contributions of the conditional shift is rescaled by the standard deviation of its influence function to ensure scale invariance. 

Following the notations in the preceding subsection, 
we compute the conditional shift measure from site $i$ to site $j$ in hypothesis $k$ via the following formula:
\@\label{eq:est_cond_shift}
\hat{t}_{Y|X}^{i\to j,(k)} = \frac{\hat\Delta_{Y|X}^{i\to j,(k)}}{\hat{s}_{Y|X}^{i\to j,(k)}} := \frac{\hat\theta_j^{(k)} - \hat\theta_{i\to j}^{(k)} }{\hat{s}_{Y|X}^{i\to j,(k)}}
\@
where $\hat\theta_j^{(k)}$ is the target estimator for $\theta(Q)$, $\hat\theta_{i\to j}^{(k)}$ is the doubly robust estimator~\eqref{eq:est_dr} or the entropy balancing estimator~\eqref{eq:est_eb} in the previous part, so that $\hat\Delta_{Y|X}$ is an estimator for the contribution of conditional shift. In addition, $\hat{s}_{Y|X}^{i\to j,(k)}$ is a consistent estimator for $\Var_P \big( \phi(X,Y,T) - \EE_P[\phi(X,Y,T)\given X]\big)^{1/2}$, which we detail in Appendix~\ref{app:subsec_est_variance} and introduce its fast convergence properties.

\paragraph{Computing the covariate shift measure.}
Finally, we compute the ``stabilizes'' covariate shift measure as mentioned in the main text. Namely, supposing there are $L$ covariates $\{X_\ell\}_{\ell=1}^L$, we compute 
\@\label{eq:est_stab_X}
\hat{t}_{X}^{i \to j, (k)} := \sqrt{\frac{1}{L} \sum_{\ell=1}^L \Big( \frac{\hat\EE_Q[X_\ell] - \hat\EE_P[X_\ell] }{\hat\sigma_P(X_{\ell})}\Big)^2},
\@
where $\hat\sigma_P(X_\ell)$ is the empirical standard deviation of $X_\ell$ in the source dataset. 
Note that  $\hat{t}_{X}^{i \to j, (k)}$ is pivotal as $n_{i}^{(k)}, n_j^{(k)}\to \infty$ under the i.i.d.~assumption. 

\paragraph{Computing the ratios.}
After computing the two measures $\hat{t}_{Y|X}^{i\to j,(k)} $ and $\hat{t}_{X}^{i\to j,(k)} $, we simply measure their relative strengths by the ratio 
\$
\hat{r}_{i\to j}^{(k)} = \hat{t}_{Y|X}^{i\to j,(k)} / \hat{t}_{X}^{i\to j,(k)} .
\$
Alternative definitions of distribution shift measures will be explored in Appendix~\ref{app:alt_shift_measures}, yet we find they  either (i) are scale-dependent (hence interpretation is sensitive the definition of the parameter functional $\theta(\cdot)$), or (ii) lead to unstable performance in estimation and effect generalization. 

\paragraph{Idea for effect generalization based on distribution shift measures.}

Finally, we recall the high-level idea of effect generalization based on our distribution shift measures. 
If the distribution of the ratio $\hat{r}^{i\to j,(k)}$ (which depends on both sampling uncertainty and distribution shifts) can be characterized, so that one can find upper and lower bounds ${L}$ and ${U}$ (either by asymptotic distribution or data-adaptive calibration) such that (approximately) 
\@\label{eq:idea_LU}
\PP\Big( L\leq \hat{r}_{i\to j}^{(k)} \leq U   \Big)  \geq 1-\alpha,
\@
then, inverting this fact would give a prediction interval for $\hat\theta_j^{(k)}$, which is 
\@\label{eq:idea_PI}
\hat{C}_{i\to j}^{(k)} =\Big[  \hat\theta_{i\to j}^{(k)} + L\cdot \hat{t}_{X}^{i \to j, (k)} \cdot \hat{s}_{X}^{i \to j, (k)} ,~  \hat\theta_{i\to j}^{(k)} + U\cdot \hat{t}_{X}^{i \to j, (k)} \cdot \hat{s}_{X}^{i \to j, (k)} \Big].
\@
Above, except for $L$ and $U$, all quantities can be estimated with full observations from site $i$ and covariates from site $j$. Next, we will detail how $L$ and $U$ are calibrated in Section~\ref{sec:generalize}. 

\subsection{Estimation for effect generalization}
\label{app:subsec_est_generalize}

In this part, we detail our estimation and evaluation procedures for effect generalization in Section~\ref{sec:generalize}. We first introduce the IID method and the Oracle method evaluated in both Figure~\ref{fig:const_eb_KL_PPML} and Figure~\ref{fig:over_study_eb_KL_PPML}. 
Then, we introduce WorstCase and Ours methods for constant calibration and adaptive calibration in the two figures, respectively. 

\paragraph{IID method.} With the i.i.d.~assumption, we construct prediction intervals as~\eqref{eq:PI_iid} for generalizing from site $i$ to site $j$ for hypothesis $k$. That is, we use no covariate information in the sites, and the empirical coverage of the IID method is mainly plotted for reference. For coverage and lengths, we average over all site pairs for a given hypothesis in all scenarios. 

\paragraph{Oracle method.} This method uses all site pairs to calibrate the range of $\hat{r}^{i\to j,(k)}$, namely, we compute 
\$
L^{\text{Orc},(k)} := \text{Quantile}\Big( \alpha/2;\, \big\{ \hat{r}_{i\to j}^{(k)} \big\}_{i\neq j} \Big), \quad 
U^{\text{Orc},(k)} := \text{Quantile}\Big( 1-\alpha/2;\, \big\{ \hat{r}_{i\to j}^{(k)} \big\}_{i\neq j} \Big)
\$
for the bounds $L$ and $U$ in~\eqref{eq:idea_LU}. 
As its name suggests, it is the ideal prediction interval when we have perfect knowledge of how distribution shifts between all sites for a hypothesis. 
Note that this approach uses much more information than available in a real generalization task, and is hence evaluated just for reference. For coverage and lengths, we average over all site pairs for a given hypothesis in all scenarios.

\subsubsection{Constant calibration}
\label{app:subsubsec_const_cal}

\paragraph{Our method (constant calibration).} 
We take constants $L=-1$ and $U=1$ in~\eqref{eq:idea_LU}, i.e., we believe that the conditional shift is upper bounded by the covariate shift. This leads to the prediction interval 
\$
\hat{C}_{i\to j}^{\textrm{Ours}, (k)} =\Big[  \hat\theta_{i\to j}^{(k)} -   \hat{t}_{X}^{i \to j, (k)} \cdot \hat{s}_{X}^{i \to j, (k)} ,~  \hat\theta_{i\to j}^{(k)} + \hat{t}_{X}^{i \to j, (k)} \cdot \hat{s}_{X}^{i \to j, (k)} \Big],
\$
which is computable in a real generalization task with $\cD_i^{(k)}$ and covariates in $\cD_j^{(k)}$. The barplots in Figure~\ref{fig:over_study_eb_KL_PPML} show the empirical coverage 
\$
\frac{1}{N_k(N_k-1)} \sum_{i\neq j}  \ind\Big\{  \hat\theta_j^{(k)}\in \hat{C}_{i\to j}^{\textrm{Ours}, (k)}  \Big\}
\$
and  average lengths 
\$
\frac{1}{N_k(N_k-1)} \sum_{i\neq j} \Big| \hat{C}_{i\to j}^{\textrm{Ours}, (k)}  \Big|
\$
after normalization by the largest average length for each hypothesis $k$.

\paragraph{Worst-case method.} 
We also evaluate the performance of worst-case bounds on the conditional shift, calibrated with data at hand. 
These worst-case bounds estimate the range of target parameters under the constraint that the unknown conditional shift is bounded in a KL-divergence ball.

Before we introduce our approach, we first remark two aspects about this approach:
\vspace{0.5em}
\begin{enumerate}
    \item Rigorously speaking, this is not a feasible generalization approach since we need full observations from all sites (especially the outcomes from the target site) to calibrate the KL bound $\hat{\textrm{KL}}_{\textrm{upp}}^{(k)}$, which is typically not available in a real generalization task. As such, we mainly use it for reference. 
    \item There are several approximations in this approach, since the estimation uncertainty in $\hat{\textrm{KL}}_{\textrm{upp}}^{(k)}$ is not accounted for, and it usually needs to account for larger uncertainty to cover the actual estimator than the underlying parameter. Thus, the intervals we obtain here can be viewed as underestimating the actual uncertainty, and a rigorous approach would construct even wider intervals. 
\end{enumerate}
\vspace{0.5em}

Specifically, let $P$ be the source distribution and $Q$ be the target distribution. The strength of conditional shift can be characterized by  the KL divergence between the reweighted distribution $Q_X\times P_{Y|X}$ and the target distribution $Q$, i.e., 
\$
\text{KL} \big(Q \,\|\, Q_X\times P_{Y|X} \big) 
&= \EE_{Q_X\times P_{Y|X}} \bigg[ \frac{\ud Q}{\ud (Q_X\times P_{Y|X})}(X,Y)  \cdot  \log \frac{\ud Q}{\ud (Q_X\times P_{Y|X})}(X,Y) \bigg] \\ 
&= \EE_{Q_X\times P_{Y|X}} \bigg[ \frac{\ud Q_{Y|X}}{\ud   P_{Y|X} }(X,Y)  \cdot  \log \frac{\ud Q_{Y|X}}{\ud   P_{Y|X} }(X,Y)  \bigg].
\$
To estimate this quantity, we first use a classification model to estimate the joint density ratio $\ud Q_{X,Y}/\ud P_{X,Y}(x,y)$ via $\hat{w}_{X,Y}(\cdot)$, and then the covariate density ratio $\ud Q_{X }/\ud P_{X }(x )$ via $\hat{w}_X(\cdot)$. Then, we estimate the conditional density ratio $\ud Q_{Y|X}/\ud   P_{Y|X}(x,y) $ via $\hat{w}_{X,Y}(x,y) / \hat{w}_X(x)$, and plug in the definition to obtain an estimator for the KL-divergence, denoted as 
$
\hat{\textrm{KL}}_{i\to j}^{(k)}
$
when taking $P = P_i^{(k)}$ and $Q = P_j^{(k)}$. 

After obtaining $\hat{\textrm{KL}}_{i\to j}^{(k)}$ for all pairs of studies, we calibrate an upper bound for the conditional KL-divergence for any given hypothesis $k$ via 
\$
\hat{\textrm{KL}}_{\textrm{upp}}^{(k)} := \text{Quantile}\Big( 0.99; \, \{\hat{\textrm{KL}}_{i\to j}^{(k)}\}_{i\neq j}   \Big),
\$
where we take the $0.99$ quantile to avoid outliers. Then, we compute upper and lower bounds for the parameters $\theta(P_j^{(k)})$ 
by solving the following optimization program:
\$
\textrm{Maximize/minimize} \quad & \theta(\bar{Q}) \\ 
\textrm{Subject to} \quad  & \text{KL} \big(\bar{Q} \,\|\, Q_X\times P_{Y|X} \big) \leq \hat{\textrm{KL}}_{\textrm{upp}}^{(k)}.
\$
Algorithms for solving the above program with data are standard in the literature; see, e.g.,~\cite{hu2013kullback}.
We then use the maximized and minimized objective as upper and lower bounds for the target estimator, giving rise to the prediction interval 
\$
\hat{C}_{i\to j}^{\textrm{KL}, (k)} 
:= \Big[  \hat{U}_{i\to j}^{\textrm{KL}, (k)} ,\, \hat{U}_{i\to j}^{\textrm{KL}, (k)} \Big].
\$
The barplots in Figure~\ref{fig:over_study_eb_KL_PPML} show the empirical coverage 
\$
\frac{1}{N_k(N_k-1)} \sum_{i\neq j}  \ind\Big\{  \hat\theta_j^{(k)}\in \hat{C}_{i\to j}^{\textrm{KL}, (k)}  \Big\}
\$
and  average lengths 
\$
\frac{1}{N_k(N_k-1)} \sum_{i\neq j} \Big| \hat{C}_{i\to j}^{\textrm{KL}, (k)}  \Big|
\$
after normalization for each hypothesis $k$.

\subsubsection{Data-adaptive calibration}
\label{app:subsubsec_adapt_cal}

Data-adaptive calibration uses separate datasets, which we assume to be available at a generalization task, to calibrate the strength of distribution shift. We will follow the notations in the preceding part. 

\paragraph{Ours (data-adaptive calibration).} 
In Figure~\ref{fig:over_study_eb_KL_PPML}, we assume that data for hypothesis $k_1,\dots,k_t$ are available when we want to generalize between sites for a new hypothesis $k_{t+1}$. Thus, we calibrate the lower and upper bounds in~\eqref{eq:idea_LU} at step $t$ by 
\@\label{eq:adapt_LU}
L^{\text{Ours},(t)} := \text{Quantile}\Big( \alpha/2;\, \big\{ \hat{r}_{i\to j}^{(k_s)} \big\}_{i\neq j, s\leq t} \Big), \quad 
U^{\text{Ours},(t)} := \text{Quantile}\Big( 1-\alpha/2;\, \big\{ \hat{r}_{i\to j}^{(k_s)} \big\}_{i\neq j, s\leq t} \Big).
\@
The idea is that if the distribution of $\hat{r}_{i\to j}^{(k)}$ is ``pivotal'' across hypothesis, using data for other hypotheses (other outcomes) to calibrate new hypotheses will lead to reliable coverage. 
We then construct prediction intervals for sites in new hypotheses $k_{s}$, $s>t$ by 
\$
\hat{C}_{i\to j}^{(k_s)} =\Big[  \hat\theta_{i\to j}^{(k_s)} + L^{\text{Ours},(t)} \cdot \hat{t}_{X}^{i \to j, (k_s)} \cdot \hat{s}_{X}^{i \to j, (k_s)} ,~  \hat\theta_{i\to j}^{(k_s)} + U^{\text{Ours},(t)}\cdot \hat{t}_{X}^{i \to j, (k_s)} \cdot \hat{s}_{X}^{i \to j, (k_s)} \Big].
\$
The coverage and lengths of them are similarly evaluated. We also randomly order the hypotheses $(k_1,\dots,k_t)$ and evaluate for ten times.

\paragraph{Worst-case method.} 
Similar to the previous worst-case method, we will calibrate an upper bound of conditional shift and compute prediction intervals. 
Here, the upper bound will be calibrated with the observed data. Specifically, given all sites for hypotheses $\{k_1,\dots,k_t\}$, we compute individual KL divergences following Section~\ref{app:subsubsec_const_cal}, and then compute 
\@\label{eq:bound_worst_case}
\hat{\textrm{KL}}_{\textrm{upp}}^{(t)} := \text{Quantile}\Big( 0.99; \, \{\hat{\textrm{KL}}_{i\to j}^{(k_s)}\}_{i\neq j, s\leq t}   \Big).
\@
For each future site pair $(i,j)$ for hypothesis $k_s$, $s>t$, we solve an empirical version of 
\$
\textrm{Maximize/minimize} \quad & \theta(\bar{Q}) \\ 
\textrm{Subject to} \quad  & \text{KL} \big(\bar{Q} \,\|\, P_{j,X}^{(k_s)}\times P_{i,Y|X}^{(k_s)} \big) \leq \hat{\textrm{KL}}_{\textrm{upp}}^{(t)},
\$
and use the obtained maximized/minimized objectives as the upper/lower bounds. 
Note that this time, all quantities are computable in a real generalization task.

\section{Additional empirical results}
\label{app:sec_plot}

\subsection{Other calibration scenarios}\label{sec:appendix-calibration}

In this part, we present additional calibration scenarios omitted in Section~\ref{sec:generalize} in the main text, where we use certain observed data to calibrate the relative strength of covariate and conditional shifts, and construct prediction intervals in future generalization tasks. We omit the detailed procedures  as they follow exactly the same ideas as Appendix~\ref{app:subsubsec_adapt_cal}, except for the construction of the bounds $L$ and $U$ in~\eqref{eq:idea_LU}.

\paragraph{Calibration with other sites.}
The second scenario is to calibrate the measures with existing sites involving all hypotheses for new sites. 
We randomly order the sites with $(j_1,\dots,j_{N})$ as a permutation of $(1,\dots,N)$. Then, at each step $t\in \{1,\dots,N-1\}$, we assume data from sites $\{j_1,\dots,j_t\}$ for all the hypotheses are observed, and use the empirical quantiles of $\{\hat{r}_{i\to j}^{(k)}\}_{i,j\in \{j_1,\dots,j_t\},k=1,\dots,K}$ as $L$ and $U$ in the construction of prediction intervals~\eqref{eq:PI_high_level}. Finally, for each pair of sites $j_1, j_2\in \{j_{t+1},\dots,j_{29}\}$, we consider the task of generalization from fully observed data in site $j_1$ for hypothesis $k$ to partially observed site $j_2$ for all hypotheses $j\in \{1,\dots,10\}$, using the aforementioned quantiles to construct prediction intervals following~\eqref{eq:PI_high_level}. 
On the other hand, the KL-divergence bound for \texttt{WorstCase} is also calibrated with these existing pairs in a way that is similar to~\eqref{eq:bound_worst_case}. 

The empirical coverage and PI lengths calibrated with other sites are reported in Figure~\ref{fig:over_site_eb_KL_PPML}. Again, the \texttt{WorstCase} method exhibits overcoverage and very wide intervals, while our method achieves valid coverage while being close to the \texttt{Oracle} method. 

\begin{figure}[H]
    \centering
    \includegraphics[width=\linewidth]{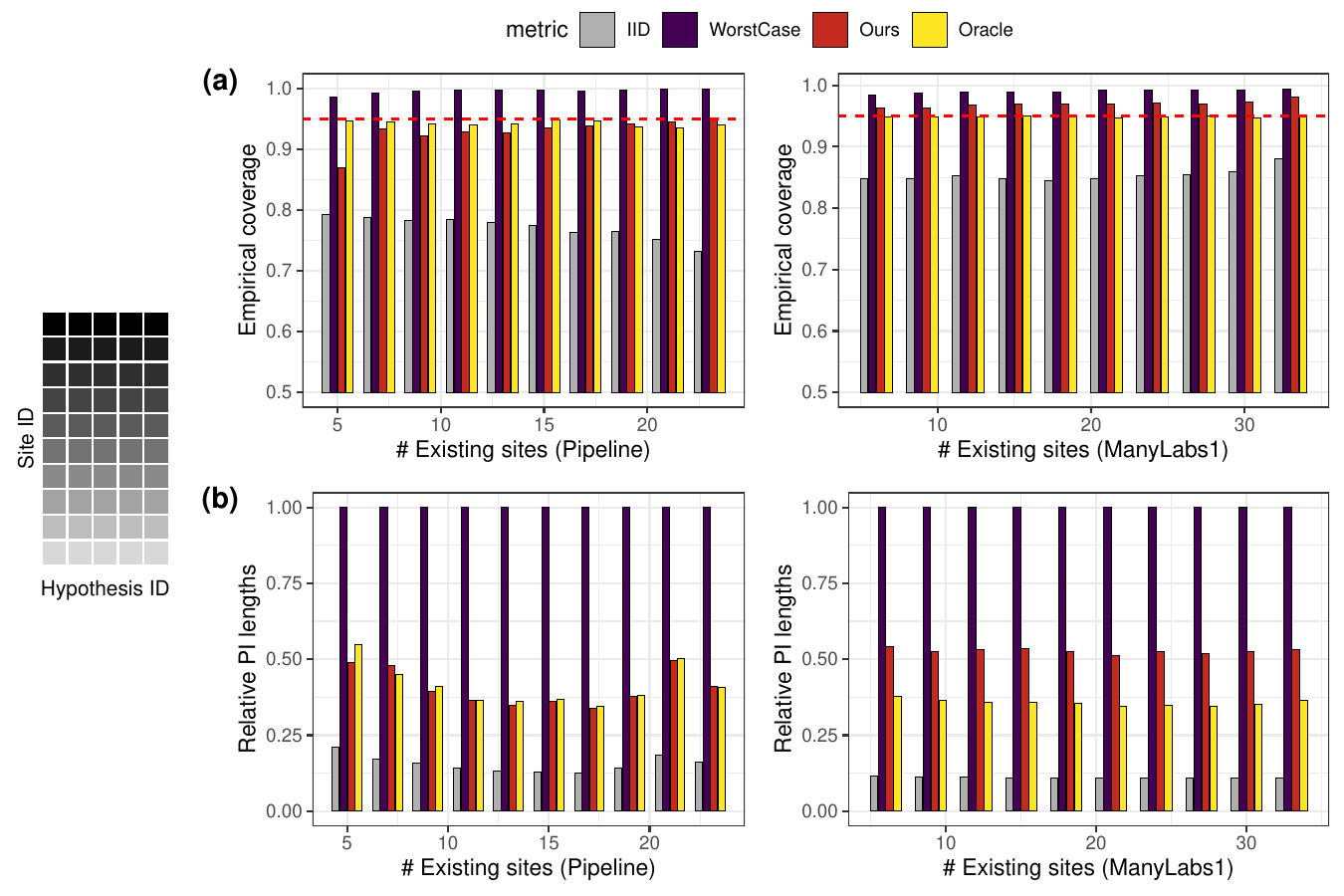}
    \caption{\textit{Generalization in new studies based on distribution shift measures from other sites.} \textbf{Left}: \textit{Illustration of data collection order, where dark color means earlier.} 
    \textbf{Row (a)}: \textit{Average coverage of prediction intervals using the Pipeline data (left) and ManyLabs 1 data (right).}  
    \textbf{Row (b)}: \textit{Average length of prediction intervals using the Pipeline data (left) and ManyLabs 1 data (right)}. \textit{Details are otherwise as Figure~\ref{fig:over_study_eb_KL_PPML}.}}
    \label{fig:over_site_eb_KL_PPML}
\end{figure}

\paragraph{Calibration with other sites and other hypotheses.} 
The final scenario is the most challenging, where for a new generalization task, only data from other sites for other hypotheses are available. 
Specifically, we randomly order the sites by $(j_1,\dots, j_{N})$ and hypotheses by $(k_1,\dots, k_{10})$. 
Then, at each step $t\in \{1,\dots,9\}$, data for studies $\{k_1,\dots, k_K\}$ are available in sites $\{j_1,\dots, j_{3t}\}$, we use the empirical quantiles of $\{\hat{r}_{i\to j}^{(k)}\}_{i,j\in \{j_1,\dots, j_{3t}\},k\in \{k_1,\dots, k_K\}}$ as $L$ and $U$ in the construction of prediction intervals~\eqref{eq:PI_high_level}.
Finally, for each pair of sites $j_1, j_2 \in \{j_{3t+1},\dots,j_{N}\}$, we consider generalization from site $j_1$ to site $j_2$ for each hypothesis $k \in \{k_{t+1},\dots, k_{K}\}$, using the aforementioned quantiles to construct prediction intervals following~\eqref{eq:PI_high_level}. 
The KL-divergence bound for \texttt{WorstCase} is also calibrated with these existing pairs similar to~\eqref{eq:bound_worst_case}. 

The empirical coverage and length of PIs are reported in Figure~\ref{fig:over_both_eb_KL}. Similar to the observations in other scenarios, \texttt{WorstCase} is much more conservative, while our method achieves valid coverage with prediction interval lengths close to \texttt{Oracle}. This scenario is the most challenging among all, since the sites and hypotheses are entirely disjoint between existing data and new generalization tasks. 

\begin{figure}[h]
\centering
\includegraphics[width=\linewidth]{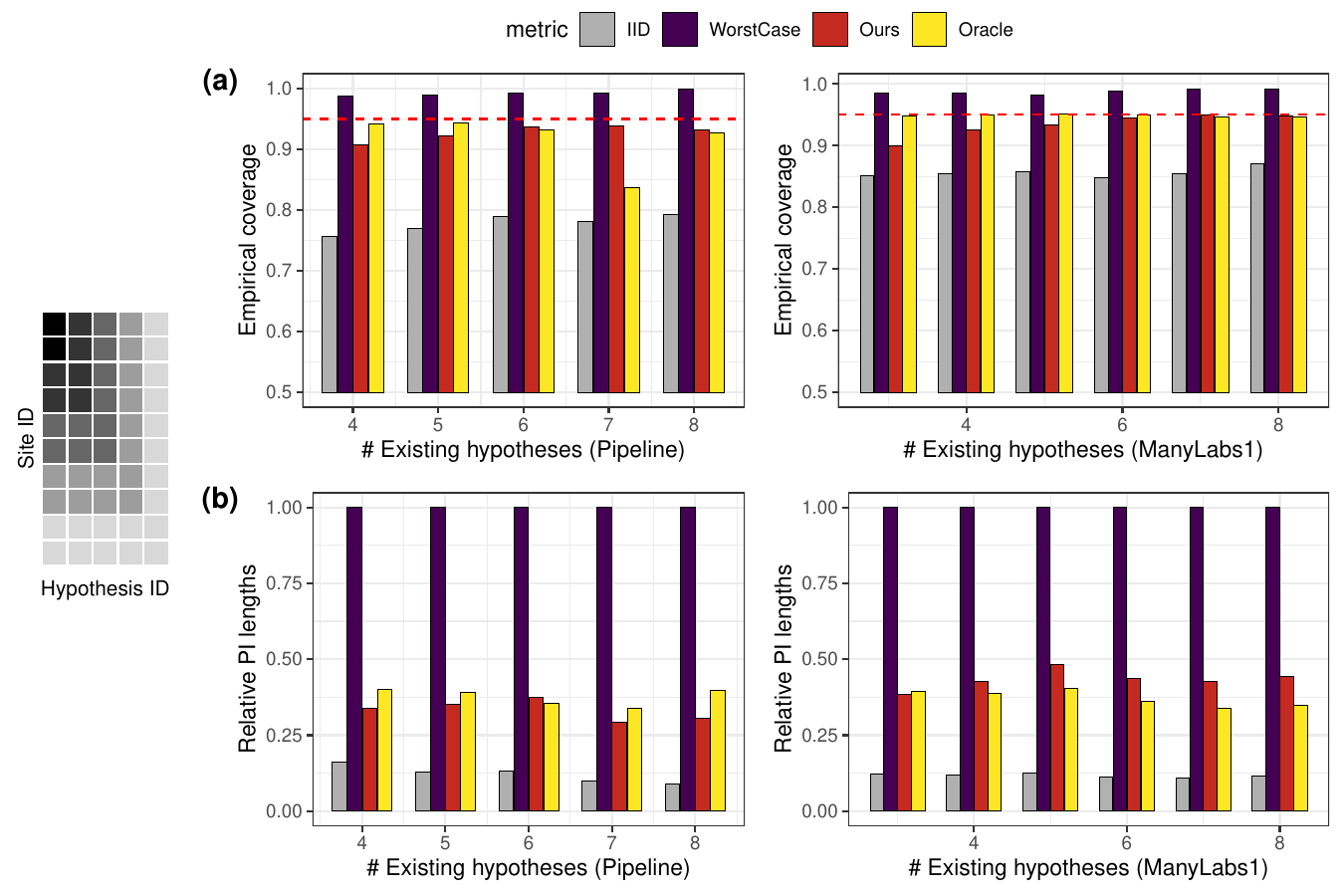}
\caption{\textit{Generalization in new studies based on distribution shift measures from other sites and other hypotheses for new sites and new hypotheses.} \textbf{Left}: \textit{Illustration of data collection order, where dark color means earlier.} 
    \textbf{Row (a)}: \textit{Average coverage of prediction intervals using the Pipeline data (left) and ManyLabs 1 data (right).}  
    \textbf{Row (b)}: \textit{Average length of prediction intervals using the Pipeline data (left) and ManyLabs 1 data (right)}. \textit{Details are otherwise as Figure~\ref{fig:over_study_eb_KL_PPML}.}}
\label{fig:over_both_eb_KL}
\end{figure}

\subsection{Relative strengths of distribution shift measures}

In this part, we report additional results for the relative strengths of distribution shift measures in both projects, which complement Figure~\ref{fig:context_measure_PPML} in the main text. In particular, Figures~\ref{fig:all_PP_eb_measure} and~\ref{fig:all_PP_dr_measure} plots distribution shift measures in each hypothesis of the Pipeline project, computed with entropy balancing and the doubly robust estimator, respectively. Figures~\ref{fig:all_ML_eb_measure} and~\ref{fig:all_ML_dr_measure} plot those for the ManyLabs1 project. 

Consistent with Figure~\ref{fig:context_measure_PPML}, we see that the covariate shift upper bounds the conditional shift most of the time, but the balancing method tends to produce more stable estimates with small-to-moderate sample sizes.

\begin{figure}[!h]
    \centering
    \includegraphics[width=\linewidth]{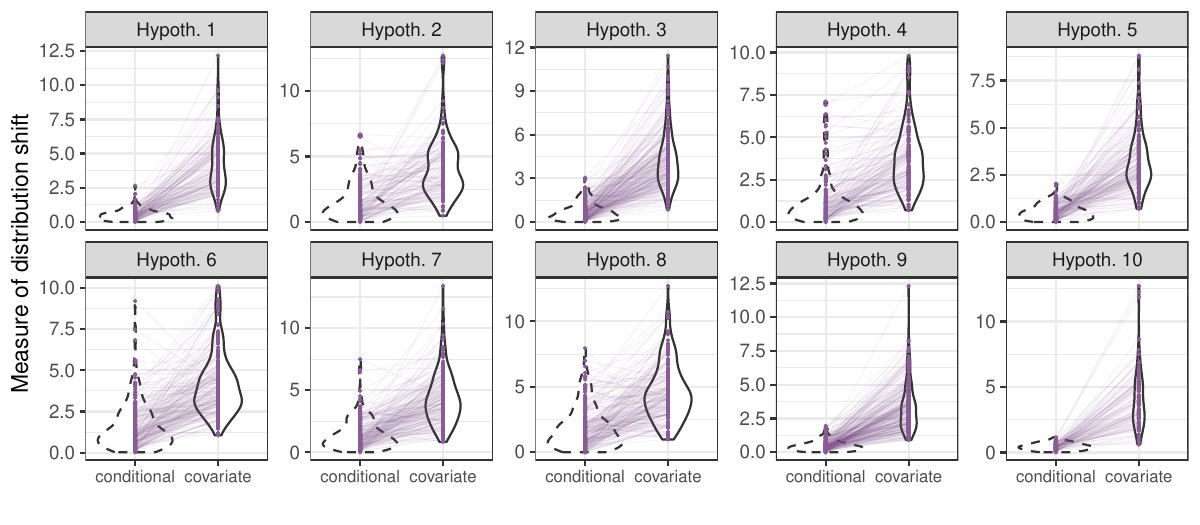}
    \caption{Distribution shift measures between all site pairs in each hypothesis in the Pipeline project, where the covariate shift adjustment uses entropy balancing.}
    \label{fig:all_PP_eb_measure}
\end{figure}

\begin{figure}[H]
    \centering
    \includegraphics[width=\linewidth]{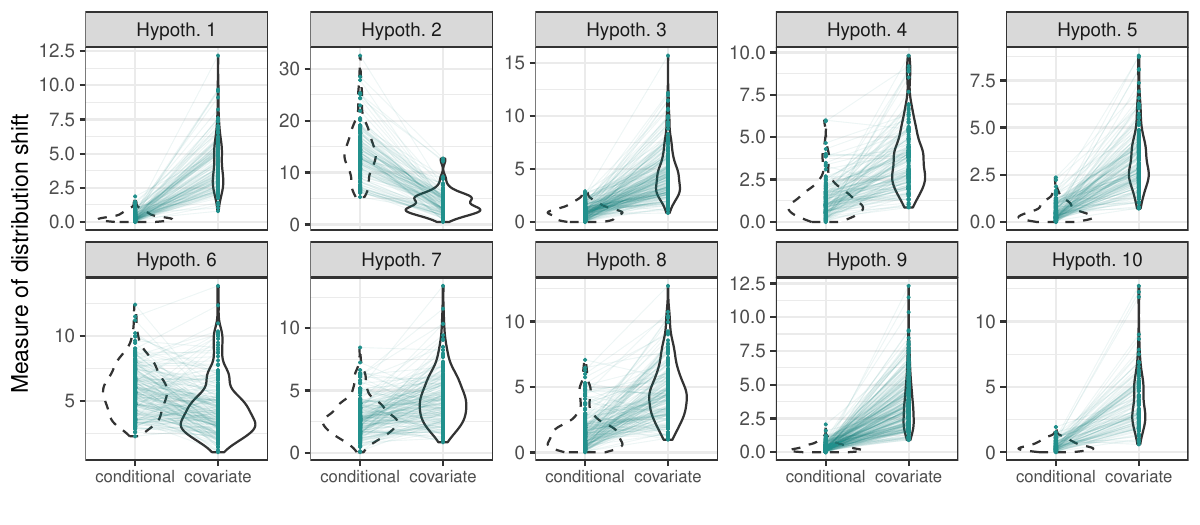}
    \caption{Distribution shift measures between all site pairs in each hypothesis in the Pipeline project, where the covariate shift adjustment uses the doubly robust estimator.}
    \label{fig:all_PP_dr_measure}
\end{figure}

\begin{figure}[H]
    \centering
    \includegraphics[width=\linewidth]{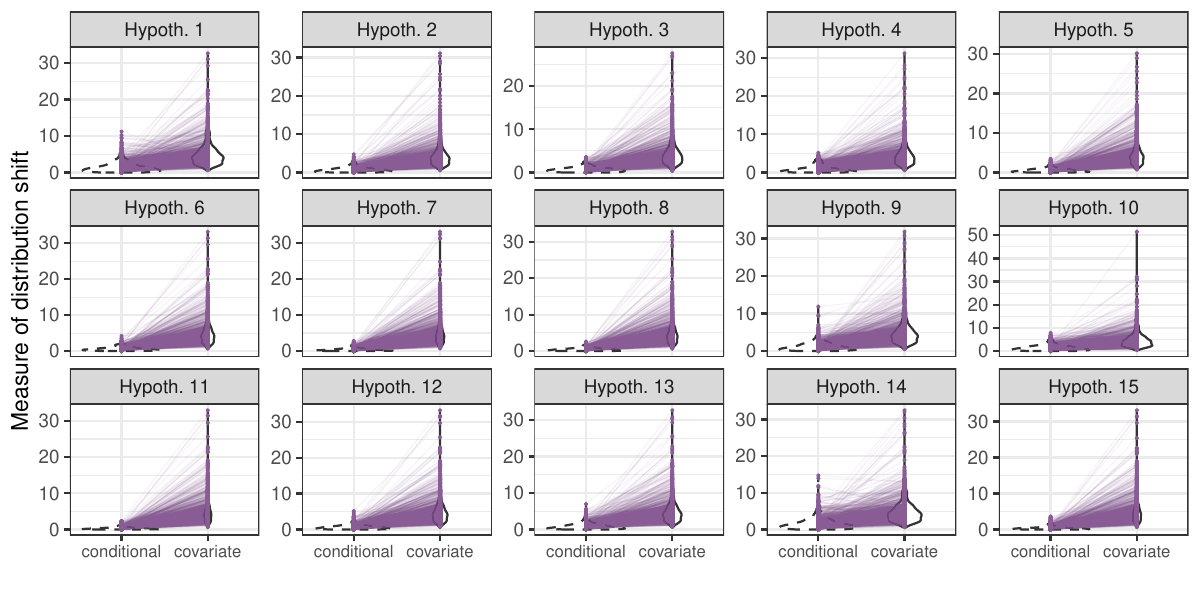}
    \caption{Distribution shift measures between all site pairs in each hypothesis in the ManyLabs1 project, where the covariate shift adjustment uses entropy balancing.}
    \label{fig:all_ML_eb_measure}
\end{figure}

\begin{figure}[H]
    \centering
    \includegraphics[width=\linewidth]{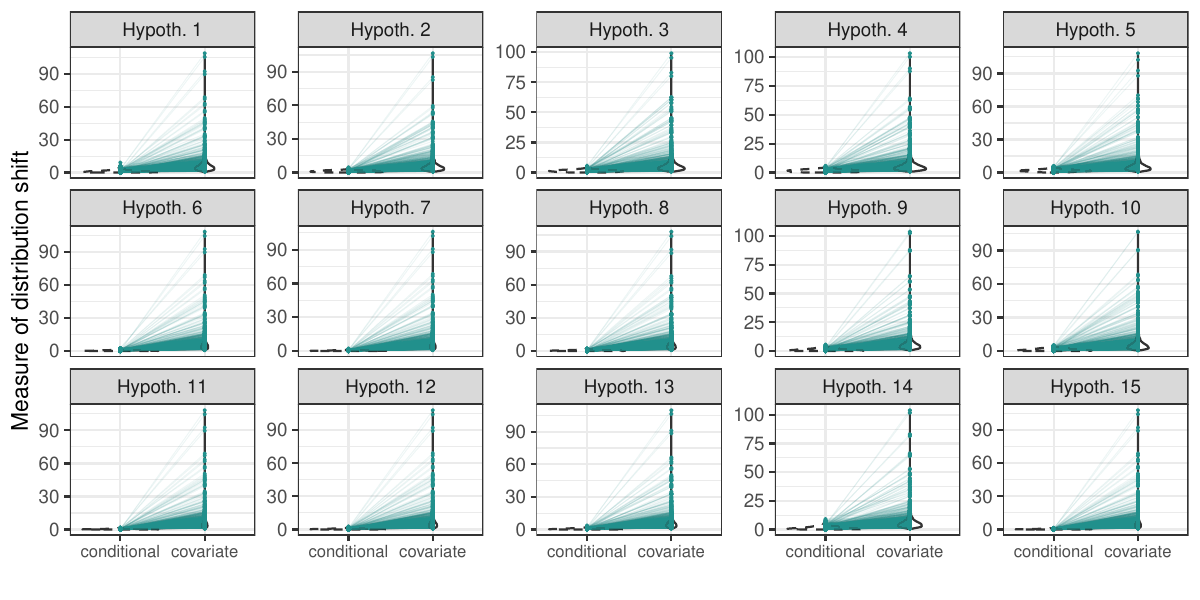}
    \caption{Distribution shift measures between all site pairs in each hypothesis in the ManyLabs1 project, where the covariate shift adjustment uses the doubly robust estimator.}
    \label{fig:all_ML_dr_measure}
\end{figure}

\section{Exploring alternative distribution shift measures}
\label{app:alt_shift_measures}

This selection collects our results on exploring alternative distribution shift measures. Here we exclusively focus on entropy-balancing-based estimation for stability and conciseness, while doubly-robust estimation exhibits similar patterns. Also, we only present results for the Pipeline project for conciseness. 

We introduce two sets of alternative distribution shift measures. By comparing our measures in the main text with them, we demonstrate the importance of (i) re-scaling by standard deviation to ensure scale invariance, and (ii) stabilizing the covariate shift measure in our definitions in Section~\ref{subsec:def_shift_measure}. 

\subsection{Alternative distribution shift measures}

Following the notations in Section~\ref{subsec:def_shift_measure}, we consider generalizing from a site with distribution $P$ to a site with distribution $Q$, and the parameter of interest has an influence function $\phi$. Recall that $\phi_P(x):=\EE_P[\phi\given X=x]$.

\paragraph{Marginal shift measures.} The first set of distribution shift measures follow \cite{jin2023diagnosing}, which are not rescaled by the standard deviation. Namely,  
\$
\textrm{absolute of conditional shift} ~= ~\EE_Q[\phi(T,Y) - \phi_P(X)] , \\ 
\textrm{absolute covariate shift} ~= ~\EE_Q[\phi_P(X)] - \EE_P[\phi_P(X)].
\$ 
These quantities describe the contributions of various types of distribution shifts to the discrepancy between effect estimates from two studies (sites) in \cite{jin2023diagnosing}. 
Their estimation is already included in Appendix~\ref{app:subsec_est_bound}, following which we denote the estimators as $\hat\Delta_{Y|X}$ and $\hat\Delta_X$, respectively. Namely,
\@
\hat\Delta_{Y|X}^{i\to j,(k)} :=  \hat\theta_j^{(k)} - \hat\theta_{i\to j}^{(k)} , \label{eq:est_cond_mgn} \\ 
\hat\Delta_{X}^{i\to j,(k)} :=  \hat\theta_{i\to j}^{(k)}- \hat\theta_i^{(k)}  , \label{eq:est_cov_mgn}  
\@
where $\hat\theta_{i\to j}^{(k)}$ is the reweighted estimator using full observations from site $i$ and covariates from site $j$. 

These unscaled measures may lack interpretability in certain cases. For one thing, the magnitude of these quantities depends on how sensitive the function $\phi$ is to shifts in the probability space: for instance, if $\phi(X)$ is highly heterogeneous, then even small changes in the distribution of $X$ would lead to large values of absolute covariate shift. While this is meaningful for diagnosing how the effect discrepancy relies on the distribution shifts and guiding future data collection efforts as in \cite{jin2023diagnosing}, this might be undesirable when we are interested in \emph{understanding the distribution shift itself}. 

We will see later that with marginal shift measures, the conditional shift is usually much larger than the covariate shift measure, which is consistent with the (somewhat pessimistic) findings in \cite{jin2023diagnosing} and a similar work of \cite{cai2023diagnosing}. This is mainly due to the fact that $\textrm{sd}(\phi-\phi_P(X))$ is much larger than $\textrm{sd}(\phi_P(X))$, i.e., the explanatory power of $X$ for the parameter is low. However, this hides the fact that the strength of perturbation to the probability space is indeed the other way.

\paragraph{Relative shift measures.} The second set of distribution shift measures follow Section~\ref{subsec:def_shift_measure}, but we adopt the relative conditional shift instead of the stabilized one. Thus, we call them relative shift measures. 
The estimation of the relative conditional shift is straightforward; following  Appendix~\ref{app:subsec_est_bound}, we use 
\@\label{eq:est_rel_X}
\hat{\Delta}_{\text{rel},X}^{i\to j, (k)} = \frac{\hat\Delta_{X}^{i\to j,(k)}}{\hat{s}_{i,X}} = \frac{\hat\theta_{i\to j}^{(k)} - \hat\theta_i^{(k)}  }{\hat{s}_{X}^{i\to j,(k)}},
\@
where $\hat\theta_{i\to j}^{(k)}$ is the reweighted estimator using entropy balancing or doubly robust estimator, and $\hat{s}_{X}^{i\to j,(k)}$ is a consistent estimator for $\textrm{sd}_P(\phi_P(X))$ which can be obtained following the estimation of $\textrm{sd}_P(\phi - \phi_P(X))$.

The issue with $\hat\Delta_{\textrm{rel},X}^{i\to j,(k)}$ is that the quantity $\hat{s}_X^{i\to j,(k)}$ can be extremely small in some cases when the explanatory power of $X$ for $\phi$ is low, as typical in the datasets we study here. Thus, even if we also observe a bounded role of relative covariate shift for relative conditional shift, the estimation is so unstable that it is not appropriate to be used in generalization tasks. 

\paragraph{Summary.} We summarize the three sets of distribution shift measures in Table~\ref{tab:measures} for the ease of reference. We also include the notations for their ratios to be used in the next two subsections. 

\begin{table}[H]
\renewcommand{\arraystretch}{1.2}
\centering
\begin{tabular}{|c|c|c|c|}
\hline
\textbf{Name} & \textbf{Conditional shift measure} & \textbf{Covariate Shift measure} & \textbf{Shift ratio} \\ \hline
Stabilized & $\hat{t}_{Y|X}$,~\eqref{eq:est_cond_shift} & $\hat{t}_{Y|X}$,~\eqref{eq:est_stab_X} & $\hat{r}^{\textrm{stab}}$ \\ \hline
Relative & $\hat{t}_{Y|X}$,~\eqref{eq:est_cond_shift} &  $\hat{\Delta}_{\textrm{rel},X}$,~\eqref{eq:est_rel_X} & $\hat{r}^{\textrm{rel}}$ \\ \hline
Marginal & $\hat\Delta_{Y|X}$,~\eqref{eq:est_cond_mgn} & $\hat\Delta_{X}$,~\eqref{eq:est_cov_mgn} & $\hat{r}^{\textrm{mgn}}$ \\ \hline
\end{tabular}
\caption{Summary of notations and estimations of distribution shift measures.}
\label{tab:measures}
\end{table}

\subsection{The importance of rescaling for the predictive role}

In this part, we demonstrate that rescaling is important for revealing the predictive role of covariate shift for the unknown conditional shift. 

Figure~\ref{fig:dist_shift_measure} plots the distribution (violin plots) and pairwise relations (connected segments) of each pair of distribution shift measures in Table~\ref{tab:measures} across all pairs of sites for Hypothesis 5 in the Pipeline project: 

\begin{itemize}
    \item First, in the left panel, the relationship between the marginal measures $\hat\Delta_{Y|X}$ and $\hat\Delta_{X}$ in each pair is somewhat arbitrary. This means knowledge of $\hat\Delta_X$ does not necessarily allow to control $\hat\Delta_{Y|X}$.
    \item The middle panel of Figure~\ref{fig:dist_shift_measure} shows that the relative measure of covariate shift $\hat\Delta_{\textrm{rel},X}$  bounds the conditional shift measure $\hat{t}_{Y|X}$ most of the time. 
This reveals the importance of normalization with standard deviation for interpretability. Without being scale-invariant, the marginal measures fail  to reveal the predictive role since the conditional ``sensitivity'', quantified by sd$(\phi-\phi_P(X))$, is larger than sd$(\phi_P(X))$. 
However,  estimated values of $\hat\Delta_{\textrm{rel},X}$ can be extremely large, since  sd$(\phi_P(X))$ and its estimated value can be tiny when the explanatory power of the covariates is low. This is also not desirable in practice as it will cause instability in downstream tasks such as effect generalization; we will explore this in the next part.
    \item  
Finally, the right panel of  shows the stabilized measures introduced in the main text. They reveal the predictive role of covariate shift due to scale invariance; in addition, they are  more stable than the relative measures, and the bounding role is tighter due to fewer extreme estimated values. 
\end{itemize}

\begin{figure}[h]
    \centering
    \includegraphics[width=0.9\linewidth]{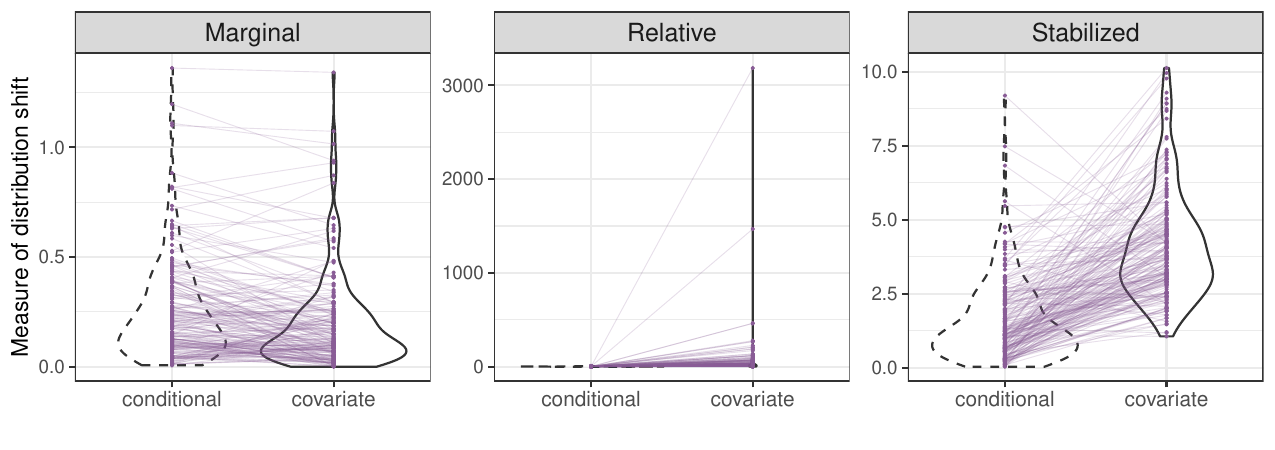}
    \caption{\textit{(Relative) magnitude of measures of conditional shift (dashed) and covariate shift (solid) across all site pairs in hypothesis 5 of the Pipeline project, analyzed with entropy balancing.} 
    \textbf{Left}: \textit{Marginal measures $\hat\Delta_{Y|X}$ and $\hat\Delta_X$.}
    \textbf{Middle}: \textit{Relative measures $\hat{t}_{Y|X}$ and $\hat\Delta_{\textnormal{rel},X}$.}
    \textbf{Right}: \textit{Stabilized measures $\hat{t}_{Y|X}$ and $\hat{t}_{X}$.}
    }
    \label{fig:dist_shift_measure}
\end{figure}
  
With a similar goal as panel (c) of Figure~\ref{fig:context_measure_PPML} in the main text, we explore the stability of the three sets of distribution shift ratios by their within-hypothesis quantiles. If quantiles of the ratios are stable across hypotheses, then the ratio is ``pivotal'' and generalizable, meaning that external knowledge of the magnitude of distribution shift ratios from other data sources may be useful for the data at hand. 
From Figure~\ref{fig:context_measure_PPML}, we see that the quantiles of the marginal ratios and ralative ratios are quite variable. In contrast, the within-hypothesis quantiles of the stabilized ratio are more ``pivotal''; they are stable across hypotheses and also close to the global quantile. We will see next the implications of such stability for generalization.

\begin{figure}
    \centering
    \includegraphics[width=\linewidth]{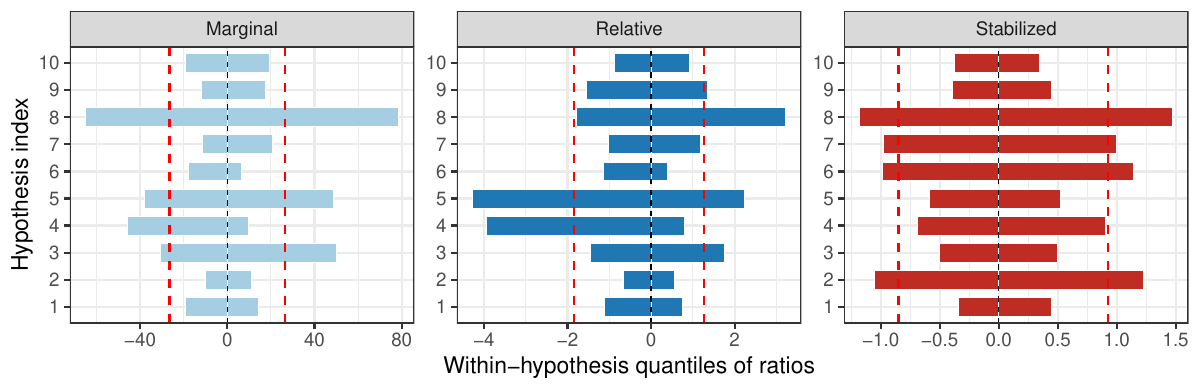}
    \caption{\textit{Lower and upper within-hypothesis quantiles of ratios which, once known, lead to exact $95\%$ empirical coverage of the prediction intervals for the Pipeline dataset. 
   The left ends of the bar plot are the lower quantiles; the right ends are the upper quantiles. The red dashed lines are the global quantiles over all studies.  
   Ideally, the quantiles should be invariant across studies for meaningful empirical calibration.
   \textnormal{\textbf{Left}:} quantiles of $\hat{r}^\mgn$ (marginal); 
    \textnormal{\textbf{Middle}:} quantiles of  $\hat{r}^{\textnormal{rel}}$ (relative);
    \textnormal{\textbf{Right}:} quantiles of $\hat{r}^\stab$ (stabilized). }}
    \label{fig:ratio_qt_eb}
\end{figure}

\subsection{The importance of stability for generalization}

We evaluate generalization tasks similar to Section~\ref{sec:generalize} with the three sets of distribution shift measures. Again, similar to the ideas of~\eqref{eq:PI_high_level}, we construct prediction intervals for the target site estimator $\hat\theta_j^{(k)}$ by calibrating lower and upper bounds for the ratios between each suite of distribution shifts. The detailed estimation procedures follow Appendix~\ref{app:subsec_est_generalize}.

% Following Section~\ref{sec:generalize}, we will consider (i) constant calibration, and (ii) data-adaptive calibration. In short, our results deliver two messages: 
% \vspace{0.5em}
% \begin{enumerate}[(i)]
%     \item \textbf{Constant calibration.}
%     \item \textbf{Data-adaptive calibration.}
% \end{enumerate}
% \vspace{0.5em}

Specifically, we aim to find lower and upper bounds for the ratios, such that (approximately)
\@\label{eq:LU_other}
&\PP\Big( L^{\mgn} \leq \hat{r}_{i\to j}^{\mgn,(k)} \leq U^{\mgn} \Big) \geq 1-\alpha, \quad \hat{r}_{i\to j}^{\mgn,(k)} = \hat\Delta_{Y|X}^{i\to j,(k)} / \,\hat\Delta_{X}^{i\to j,(k)},\\
&\PP\Big( L^{\relat} \leq \hat{r}_{i\to j}^{\relat,(k)} \leq U^{\relat} \Big) \geq 1-\alpha,\qquad \hat{r}_{i\to j}^{\relat,(k)} = \hat{t}_{Y|X}^{i\to j,(k)} / \,\hat\Delta_{\relat,X}^{i\to j,(k)}, \notag 
\@
and the bounds for the ratio between stabilized measures in the main text follow the idea of~\eqref{eq:LU_high_level}. 

Inverting the events in~\eqref{eq:LU_other} and by the definition of the measures, we set the prediction intervals
\$
\hat{C}_{i\to j}^{\mgn,(k)}:= \big[~ & \hat\theta_{i\to j}^{(k)} + \hat\Delta_{X}^{i\to j,(k)} \cdot L^{\mgn},~\hat\theta_{i\to j}^{(k)} + \hat\Delta_{X}^{i\to j,(k)} \cdot U^{\mgn} ~\big] \\ 
\hat{C}_{i\to j}^{\relat,(k)}:= \big[~ & \hat\theta_{i\to j}^{(k)} + \hat\Delta_{\relat,X}^{i\to j,(k)} \cdot \hat{s}_{Y|X}^{i\to j, (k)} \cdot  L^{\relat} ,  
~ \hat\theta_{i\to j}^{(k)} + \hat\Delta_{\relat,X}^{i\to j,(k)} \cdot \hat{s}_{Y|X}^{i\to j, (k)} \cdot  U^{\relat} ~\big] , 
\$
and recall that $\hat{C}_{i\to j}^{(k)}$ is the prediction interval~\eqref{eq:PI_high_level} based on our shift measures in the main text. 
We then evaluate the empirical coverage and average length of these prediction intervals.

\paragraph{Oracle calibration.} For reference, we evaluate the \texttt{Oracle} method in the main text for the three sets of shift measures, in order to show their performance in the most ideal case where the distribution of their ratios is perfectly known. Here, the $L$ and $U$ values in~\eqref{eq:LU_other} are the empirical quantiles of the shift ratios between all site pairs within each hypothesis. The empirical coverage and average length of prediction intervals within each hypothesis are in Figure~\ref{fig:instudy_PI_eb}. All three sets of measures lead to perfect $0.95$ coverage as expected. However, the prediction intervals by the stabilized measures several folds shorter (the $y$-axis is log-scaled for easier visualization), showing the importance of estimation stability.

\begin{figure}[htbp]
    \centering
    \includegraphics[width=0.85\linewidth]{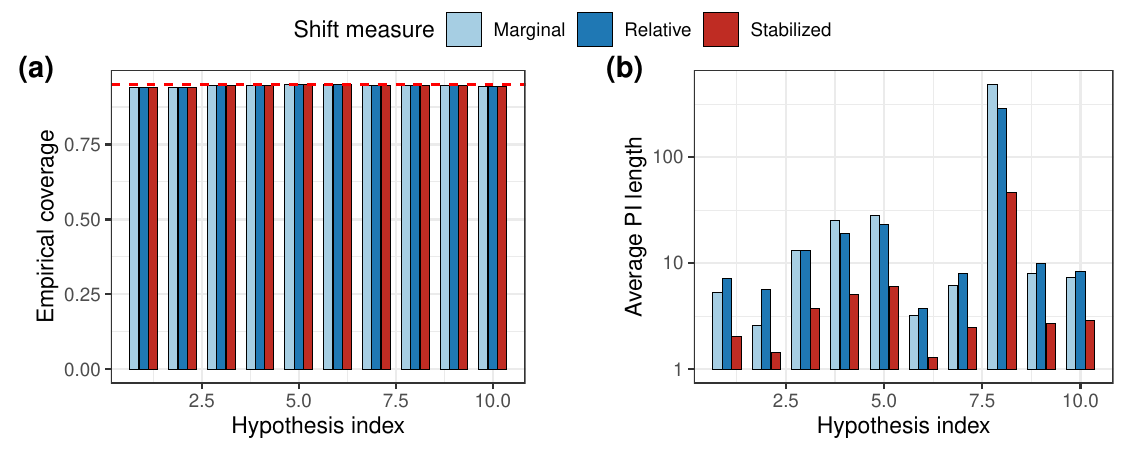}
    \caption{\textbf{Left}: \textit{Empirical coverage of oracle calibrated  prediction intervals at nominal level $1-\alpha=0.95$. The coverage is ensured to be $95\%$ since full observations are used.}
    \textbf{Right}: \textit{Average length of prediction intervals for in-study calibrated prediction intervals at nominal level $1-\alpha=0.95$ based on three measures. The $y$-axis on the right is log-scaled for visualization.}}
    \label{fig:instudy_PI_eb}
\end{figure}

\paragraph{Constant calibration.} 
Similar to Section~\ref{sec:generalize}, here we simply take all three lower quantiles to be $-1$, and all three upper quantiles to be $1$, with the belief that the covariate shift measure upper bounds the conditional shift measure. 
The hypothesis-wise coverage and average length of constant-calibrated prediction intervals are in Figure~\ref{fig:const_PI_eb}. It is not surprising that assuming $|\hat\Delta_{Y|X}|\leq |\hat\Delta_X|$ leads to poor coverage (marginal). 
Assuming that the conditional shift measure is bounded by the covariate shift measure leads to satisfying coverage for both the relative and stabilized measure. 
However, the stabilized measures lead to much shorter prediction intervals and slightly better coverage again due to stability. 

\begin{figure}[htbp]
    \centering
    \includegraphics[width=0.85\linewidth]{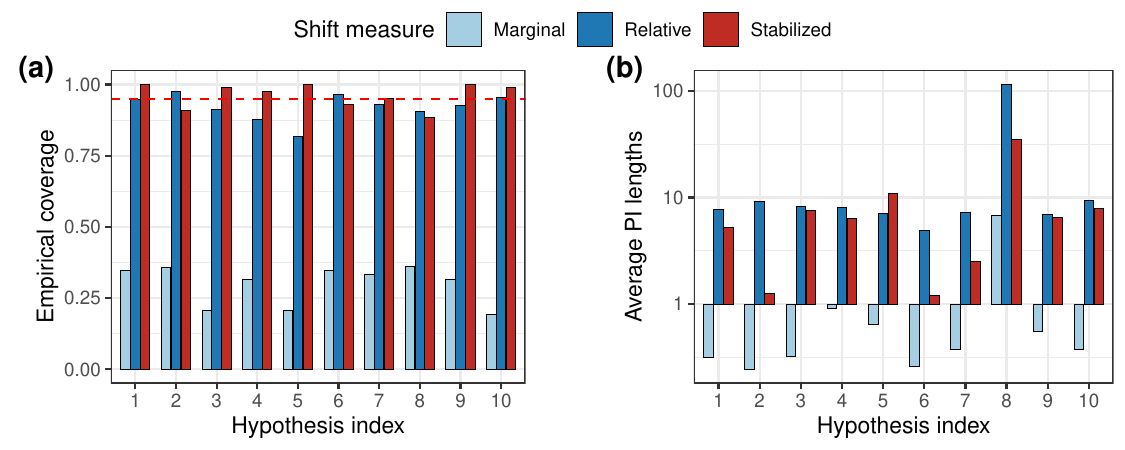}
    \caption{\textbf{Left}: \textit{Empirical coverage of constant calibrated  prediction intervals at nominal level $1-\alpha=0.95$.}
    \textbf{Right}: \textit{Average length of prediction intervals for constant calibrated prediction intervals at nominal level $1-\alpha=0.95$ based on three measures. The $y$-axis on the right is log-scaled for visualization.}}
    \label{fig:const_PI_eb}
\end{figure}

\paragraph{Data-adaptive calibration.} 
Finally, we consider the data-adaptive calibration scenario where full observations from other sites/hypotheses are available, which are used to compute the quantiles for a new generalization task. 
The specific methods are the same as Section~\ref{sec:generalize} and Appendix~\ref{sec:appendix-calibration}, with detailed procedures following Appendix~\ref{app:subsec_est_generalize}.

The first scenario is the same as Section~\ref{sec:generalize} in the main text, where data for some other hypotheses in all sites are available, and the new generalization task involves new hypotheses.  Figure~\ref{fig:gen_PI_study} illustrates the order of data collection, as well as the coverage and length of prediction intervals, averaged over 10 random draws of hypothesis ordering. 
We see that all three measures lead to satisfactory coverage, meaning that \emph{the distribution shift measures tend to be ``generalizable'' across hypotheses/estimators/outcomes}. Yet, the stabilized measures still yield the shortest prediction intervals. 

\begin{figure}[htbp]
    \centering
    \includegraphics[width=\linewidth]{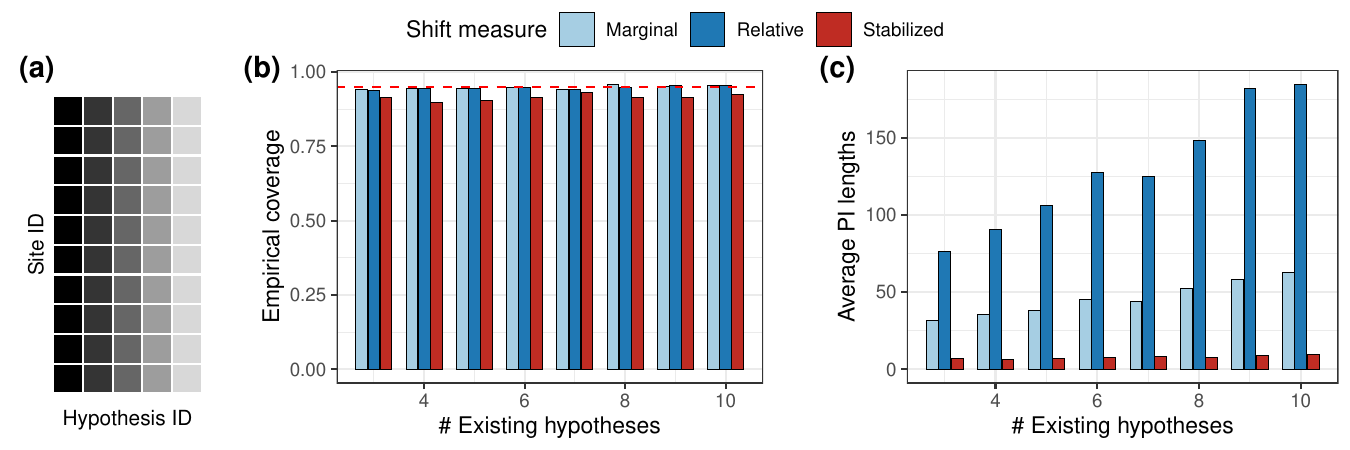}
    \caption{\textit{Generalization  based on distribution shift measures calibrated with data for other hypotheses in the same sites.} \textbf{Left}: \textit{Illustration of data collection order, where dark color means earlier.} 
    \textbf{Middle}: \textit{Average coverage (bars) of prediction intervals over 10 random draws of study ordering. The red dashed line is the nominal level $0.95$}. 
    \textbf{Right}: \textit{Average length of prediction intervals based on three sets of shift measures}.}
    \label{fig:gen_PI_study}
\end{figure}

The second scenario is to calibrate the measures with existing sites involving all hypotheses for new sites, same as ``calibration with other sites'' in Appendix~\ref{sec:appendix-calibration}. 
Figure~\ref{fig:gen_PI_site} presents the corresponding results. We see that all measures lead to satisfactory coverage, while the stabilized measures lead to much shorter prediction intervals.  

\begin{figure}[htbp]
    \centering
    \includegraphics[width=\linewidth]{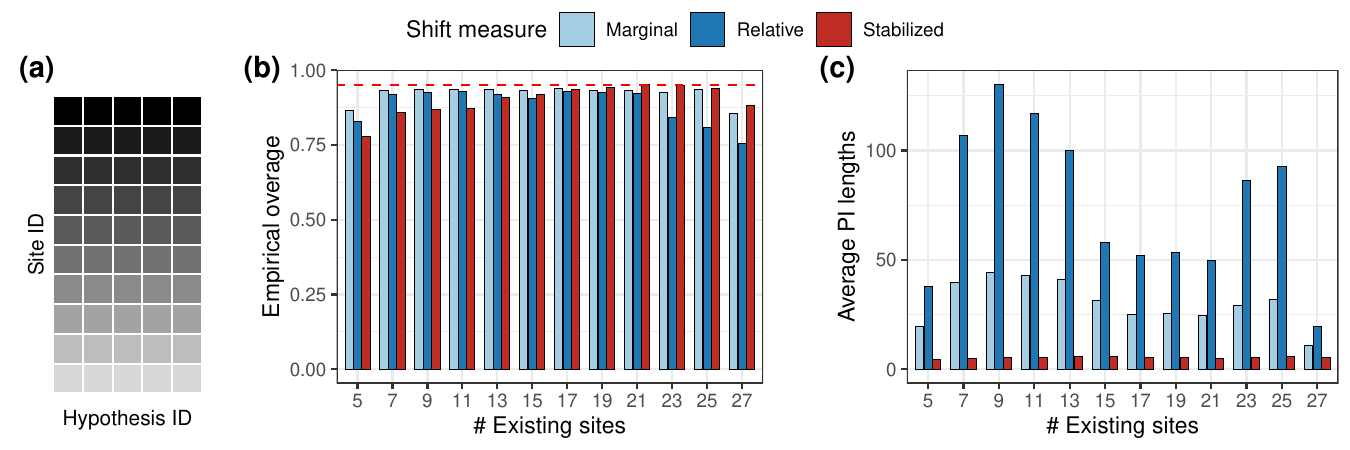}
    \caption{\textit{Data collection order, average coverage and length of prediction intervals for generalization in new sites based on data from the same studies in other sites. Details are otherwise the same as in Figure~\ref{fig:gen_PI_study}.}}
    \label{fig:gen_PI_site}
\end{figure}

Finally, we use data from other sites for other hypotheses to calibrate the upper and lower bounds for the distribution shift measures, which is the same as ``Calibration with other sites and other hypothesis'' in Appendix~\ref{sec:appendix-calibration}. Figure~\ref{fig:gen_PI_both} presents the results for the third scenario. Due to the limited samples, we observe slight undercoverage when only two hypotheses and sites are available. Similar to other scenarios, the stabilized measure leads to much shorter prediction intervals than the other two. 

\begin{figure}[htbp]
    \centering
    \includegraphics[width=\linewidth]{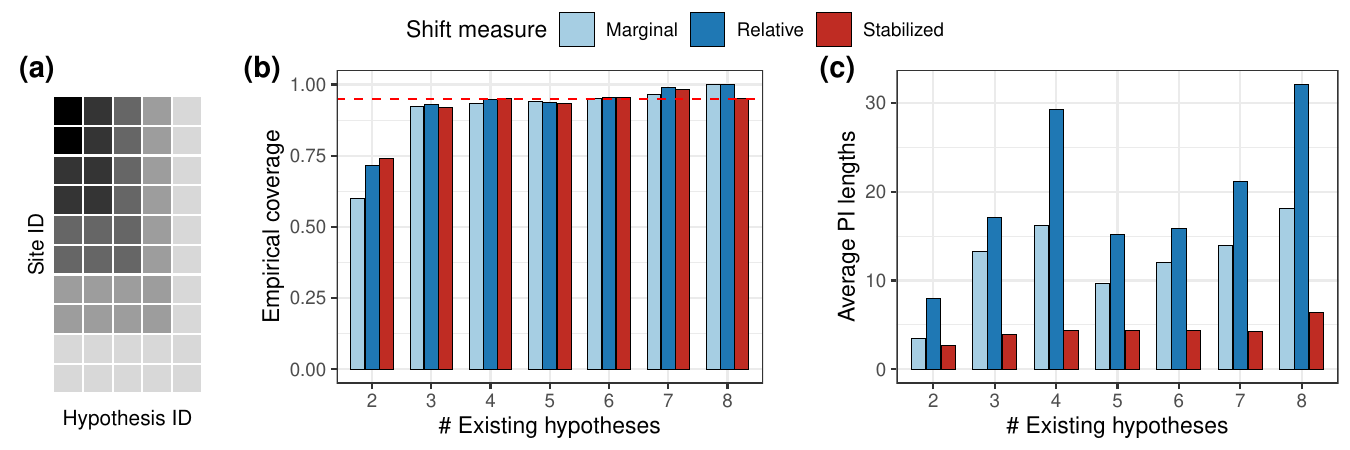}
    \caption{\textit{Data collection order, average coverage and length of prediction intervals for generalization between new sites in new studies based on data for other studies from other sites. Details are otherwise  as in Figure~\ref{fig:gen_PI_study}.}}
    \label{fig:gen_PI_both}
\end{figure}

\section{Technical details and proofs}

\subsection{Proof of distributional CLT}

\begin{proof}
Let $n_Q(M)$ and $n_P(M)$ be sequences of natural numbers such that $n_Q(M)/M$ and $n_P(M)/M$ converge to positive real numbers. In the following, for simplicity, we will suppress the dependence of $n_Q$ and $n_P$ on $M$. We will first show the result for bounded, one-dimensional $\phi$ with $\EE_P[\phi]=0$.
Define $\sigma_M^2 = \frac{1}{M} \sum_{m=1}^M \mathbb{E}_P[\phi| C_m^{(M)}]^2 - \EE_P[\phi]^2 $. As $M \rightarrow \infty$, by assumption we have $\sigma_M^2 \rightarrow \text{Var}_P(\EE_P[\phi|X,U]) =: \sigma^2$. If $\text{Var}_P(\EE_P[\phi|X,U]) = 0$, then $\EE_P[\phi] = \EE_Q[\phi]$, thus only uncertainty due to i.i.d.\ sampling remains and the statement of the theorem is trivial. Thus, in the following we will assume $\text{Var}_P(\EE_P[\phi|X,U]) > 0$. We can use $\EE_P[\phi] = \frac{1}{M} \sum_{m=1}^M \EE_P[\phi|C_m^{(M)}]$ to obtain
\begin{align*}
    &\frac{\sqrt{M}( \EE_Q[\phi] - \EE_P[\phi] )} {\sigma_M \text{sd}(W)/E[W]}  \\
    &= \frac{\frac{ M^{-1/2} \sum_{m=1}^M (W_m - 1/M \sum_{m'} W_{m'}) \EE_P[\phi|C_m^{(M)}] }{1/M \sum_m W_m}  }{\sigma_M \text{sd}(W)/E[W]} \\
   &= \frac{M^{-1/2} \frac{\sum_{m=1}^M (W_m - 1/M \sum_{m'} W_{m'}) \EE_P[\phi|C_m^{(M)}] }{E[W]}  }{\sigma_M \text{sd}(W)/E[W]}  + o_P(1/\sqrt{M})\\ 
   &=  \frac{M^{-1/2} \sum_{m=1}^M (W_m - \EE[W] ) \EE_P[\phi|C_m^{(M)}]    }{\sigma_M \text{sd}(W){}}  + o_P(1/\sqrt{M}). \\
\end{align*}
We will now check Lindeberg's condition with $v_M = \sum_{m=1}^M \mathbb{E}_P[\phi|C_m^{(M)}]^2 - \EE_P[\phi]^2$. By assumption $v_M/M \rightarrow \sigma^2 > 0$. Furthermore, by assumption $| \phi |_\infty \le B$ for some constant $B> 0$. % $W$ is bounded away from zero, that is $W \ge w_0 >0$ for some constant $w_0 > 0$. 
Let $\epsilon > 0$. Then
\begin{align*}
   &\limsup_{M \rightarrow \infty} \frac{1}{v_M} \EE[ \sum_{m=1}^M (W_m - \EE[W] )^2 (\EE_P[\phi|C_m^{(M)}] - \EE_P[\phi])^2 1_{|W_m - \EE[W]| | \EE_P[\phi|C_m^{(M)}] - \EE_P[\phi]|\ge \epsilon \sqrt{v_M} } ] \\
&\le \limsup_{M \rightarrow \infty} \frac{1}{M \sigma^2} \EE[ \sum_{m=1}^M (W_m - \EE[W] )^2 4 B^2 1_{| W_m - \EE[W]| \ge  \sqrt{M} \sigma \epsilon/(4 B) } ] \\
&\le \limsup_{M \rightarrow \infty} \frac{4 B^2}{ \sigma^2} \EE[ (W - \EE[W])^2  1_{ |W - \EE[W]| \ge  \sqrt{M} \sigma \epsilon/(4B)}  ]
\end{align*}
% Let $\epsilon > 0$. Then
% \begin{align*}
%    &\limsup_{M \rightarrow \infty} \frac{1}{v_M} \EE[ \sum_{m=1}^M (W_m - \EE[W] )^2 (\EE_P[\phi|C_m^{(M)}] - \EE_P[\phi])^2 1_{|W_m - \EE[W]| | \EE_P[\phi|C_m^{(M)}] - \EE_P[\phi]|\ge \epsilon v_M } ] \\
% &\le \limsup_{M \rightarrow \infty} \frac{1}{M \sigma^2} \EE[ \sum_{m=1}^M (W_m - \EE[W] )^2 (\EE_P[\phi|C_m^{(M)}] - \EE_P[\phi])^2 1_{| \EE_P[\phi|C_m^{(M)}] - \EE_P[\phi]| \ge  \sqrt{M} \sigma \epsilon/(2w_0) } ] \\
% &\le \limsup_{M \rightarrow \infty} \frac{1}{M \sigma^2} \sum_{m=1}^M \text{Var}(W) (\EE_P[\phi|C_m^{(M)}] - \EE_P[\phi])^2 1_{ |\EE_P[\phi|C_m^{(M)}] - \EE_P[\phi]| \ge  \sqrt{M} \sigma \epsilon/(2w_0)} 
% \end{align*}
By dominated convergence, this term is zero. Thus, by Lindeberg's CLT,
\begin{align*}
   \frac{\sqrt{M}( \EE_Q[\phi] - \EE_P[\phi] )} {\sigma_M\text{sd}(W)/E[W]} &\stackrel{d}{\rightarrow} \mathcal{N}(0,1).
\end{align*}
 By Slutsky, we get the distributional CLT
\begin{equation*}
    \frac{\sqrt{M}( \EE_P[\phi] - \EE_Q[\phi] )} {\text{sd}_P(\EE_P[\phi|X,U] )\text{sd}(W)/E[W]} \stackrel{d}{\rightarrow} \mathcal{N}(0,1).
\end{equation*}
We will now combine this result with uncertainty due to i.i.d.\ sampling. Recall that we consider the case where sampling uncertainty and distributional uncertainty are of the same order, i.e.\ $n_Q/M$ and $n_P/M$ converge to some positive constants.
\begin{align*}
    &\hat{\EE}_P[\phi(T,D)] - \hat{\EE}_Q[\phi(T,D)] \\
    & = \underbrace{\hat{\EE}_P[\phi(T,D)] -\mathbb{E}_P[\phi(T,D)]}_{\text{use standard CLT for i.i.d.\ data}} - \underbrace{(\hat{\EE}_Q[\phi(T,D)]  - \mathbb{E}_Q[\phi(T,D)] )}_{\text{use Berry-Esseen}} + \underbrace{\mathbb{E}_P[\phi(T,D)] - \mathbb{E}_Q[\phi(T,D)]}_{\text{use distributional CLT}}
\end{align*}
We can apply a standard CLT to the first term, since $P$ is fixed and the sample mean $\hat{\EE}_P[\phi(T,D)]$ is independent of the remaining terms. For the remaining terms, one complication is that the distribution $Q$ is not fixed, but shifts randomly.

Recall that for now we focus on bounded $\phi$, which implies bounded third moments. We will now use Berry–Esseen. For any $x \in \mathbb{R}$, conditionally on the random shift $(W_m)_{m=1,\ldots,M}$, 
\begin{equation}
  \sup_x \left| P \left( \frac{\sqrt{n_Q}}{\text{sd}_Q(\phi) } (\hat{\EE}_Q[\phi(T,D)]  - \mathbb{E}_Q[\phi(T,D)] )  \le x 
 \Bigg| (W_m)_{m=1,\ldots,M} \right) - \Phi(x) \right| \le \frac{7.59 \EE_Q[|\phi|^3]}{ \text{sd}_Q(\phi)^3 \sqrt{n_Q}}
\end{equation}
By assumption $\phi$ is bounded, and by the distributional CLT above $\text{sd}_Q(\phi) \stackrel{P}{\rightarrow} \text{sd}_P(\phi) > 0$ for $M \rightarrow \infty$. Thus, conditionally on the random shift $(W_m)_{m=1,\ldots,M}$,
\begin{equation*}
    \sup_x \left| P \left( \frac{\sqrt{n_Q}}{ \text{sd}_P(\phi)} (\hat{\EE}_Q[\phi(T,D)]  - \mathbb{E}_Q[\phi(T,D)] )  \le x \Bigg| (W_m)_{m=1,\ldots,M} \right) - \Phi(x)  \right| \rightarrow 0
\end{equation*}
Define
\begin{align*}
    Z := \mathbb{E}_P[\phi(T,D)] - \mathbb{E}_Q[\phi(T,D)], \\
    Z' := \hat{\EE}_Q[\phi(T,D)] - \mathbb{E}_Q[\phi(T,D)].
\end{align*}
Let $\delta_M^2 = \frac{1}{M} \frac{\text{Var}(W)}{E[W]^2}$. By assumption, $n_Q/M \rightarrow \rho$ for some constant $\rho > 0$. Thus $n_Q \delta_K^2 \rightarrow \rho \text{Var}(W)/E[W]^2$. As $M \rightarrow \infty$, for any $x \in \mathbb{R}$
\begin{align*}
   & P \left( \left( \frac{\text{Var}_P(\phi)}{n_Q} + \sigma^2\delta_M^2 \right)^{-1/2} (Z + Z') \ge x \right)  \\
   &= E \left[P \left( \frac{\sqrt{n_Q}}{\text{sd}_P(\phi)} Z' \ge \frac{\sqrt{n_Q}}{\text{sd}_P(\phi)} \left( \frac{\text{Var}_P(\phi)}{n_Q} + \sigma^2 \delta_M^2 \right)^{1/2} x - \frac{\sqrt{n_Q}}{\text{sd}_P(\phi)} Z \Bigg| (W_m)_{m=1,\ldots,M} \right) \right] \\
   &\stackrel{\text{Berry-Esseen}}{=} E \left[ 1- \Phi \left( \left(1 + \rho \frac{\sigma^2 \text{Var}(W)}{\text{Var}_P(\phi) E[W]^2} \right)^{1/2} x - \frac{\sqrt{n_Q}}{\text{sd}_P(\phi)} Z \right) \right] + o(1)
\end{align*}
In the third line, we used that $\text{sd}_Q(\phi) \stackrel{P}{\rightarrow} \text{sd}_P(\phi)$. Here, $\Phi$ is the cdf of a standard Gaussian random variable. Using weak convergence of $\sqrt{M} Z \stackrel{d}{\rightarrow} \mathcal{N}(0, \sigma^2 \text{Var}(W)/E[W]^2)$, and using that $\Phi$ is a continuous bounded function we get
\begin{align*}
    & E \left[ 1- \Phi \left( \left(1 + \rho \frac{\sigma^2 \text{Var}(W)}{\text{Var}_P(\phi) E[W]^2} \right)^{1/2} x - \frac{\sqrt{n_Q}}{\text{sd}_P(\phi)} Z \right) \right] \\
    \rightarrow & E \left[1- \Phi \left( \left(1 + \rho \frac{\sigma^2 \text{Var}(W)}{\text{Var}_P(\phi) E[W]^2} \right)^{1/2} x - \sqrt{\rho} \cdot \frac{\sigma \text{sd}(W)}{\text{sd}_P(\phi) E[W]} G \right) \right], 
\end{align*}
where $G$ is a standard Gaussian random variable. With constant $L =\sqrt{\rho} \sigma \text{sd}(W) / (\text{sd}_P(\phi) E[W])$ we can re-write this as
\begin{equation*}
    E[1-\Phi(\sqrt{1+L^2} x - L G)] = P(G' \ge \sqrt{1+L^2} x - L G) = P( \frac{G' + LG }{\sqrt{1+L^2}} \ge x) = 1- \Phi(x),
\end{equation*}
where $G'$ is a standard Gaussian random variable, independent of $G$. To summarize,
\begin{equation*}
    P \left( \left( \frac{\text{Var}_P(\phi)}{n_Q} + \sigma^2\delta_M^2 \right)^{-1/2} (Z + Z') \ge x \right) \rightarrow 1- \Phi(x).
\end{equation*}
Since the data from $P$ is independent of the perturbation and the data from $Q$,
\begin{equation}\label{eq:clt-bounded-phi}
  \left(  \text{Var}_P(\phi) \left( \frac{1}{n_P} + \frac{1}{n_Q} \right) + \sigma^2\delta_M^2  \right)^{-1/2}   \left(\hat{\EE}_P[\phi(T,D)] - \hat{\EE}_Q[\phi(T,D)] \right) \stackrel{d}{\rightarrow} \mathcal{N}(0,1).
\end{equation}
We will now extend this result from bounded functions $\phi$ to square-integrable functions $\phi \in L^2(P)$.
Define the bounded function
\begin{equation*}
    \phi_b = \phi 1_{|\phi| \le b} + 1_{|\phi| \ge b} \EE_P \left[\phi \Bigg| |\phi| \ge b \right].
\end{equation*}
We have $\EE_P[\phi] = \EE_P[\phi_b]$.
Applying Chebychev conditionally on the random perturbation,
\begin{equation*}
    P \left( |\hat{\EE}_Q[\phi_b-\phi] -\hat{\EE}_Q[\phi_b-\phi]| \ge \epsilon \Bigg| (W_m)_{m=1,\ldots,M} \right) \le \frac{\text{Var}_Q(\phi_b - \phi)}{ \epsilon^2 } \left( \frac{1}{n_P} + \frac{1}{n_Q} \right)
\end{equation*}
Take expectations over the random perturbation $(W_m)_{m=1,\ldots,M}$ we obtain that for all $\epsilon> 0$,
\begin{equation*}
    P (|\hat{\EE}_Q[\phi_b-\phi] - \EE_Q[\phi_b-\phi]| \ge \epsilon ) \le \frac{\EE_P[(\phi-\phi_b)^2]}{\epsilon^2 } \left( \frac{1}{n_P} + \frac{1}{n_Q} \right)
\end{equation*}
Similarly, since $W_m \ge w_0$,
\begin{equation*}
  | \EE_Q[\phi - \phi_b] |=  |\sum_{m=1}^M \frac{W_m}{\sum_{m'} W_{m'}} \EE_{P}[\phi - \phi_b|C_m] | \le \frac{1}{M w_0} |\sum_{m=1}^M W_m  \EE_{P}[\phi - \phi_b|C_m] |.
\end{equation*}
Applying Chebychev and using that $\EE_P[\phi - \phi_b] = 0$, we get
\begin{equation*}
    P( | \EE_Q[\phi - \phi_b]  | \ge \epsilon )  \le \frac{ \sum_{m=1}^M (\EE_P[\phi-\phi_b|C_m])^2}{ w_0^2 \epsilon^2 M^2} \le \frac{ \text{Var}_P(\phi-\phi_b)}{w_0^2 \epsilon^2 M}.
\end{equation*}
In the last equation, we used Jensen's inequality. Combining the two applications of Chebychev,
\begin{align*}
    P( | \hat{\EE}_Q[\phi - \phi_b]  | \ge \epsilon ) &\le P( | \EE_Q[\phi - \phi_b]  | \ge \epsilon/2 ) + P( | \hat{\EE}_Q[\phi - \phi_b] -\EE_Q[\phi - \phi_b]  | \ge \epsilon/2 )  \\
    &\le \frac{4}{\epsilon^2} \left( \frac{1}{ n_Q} + \frac{1}{n_P} + \frac{1}{w_0^2 M} \right) \text{Var}_P(\phi - \phi_b)
\end{align*}
Since by assumption $n_P(M) \sim n_Q(M) \sim M$, for any $\epsilon'>0$, $\epsilon> 0$, we can choose a bounded function $\phi_b$ such that $ P(\sqrt{M} |\hat{E}_Q[\phi-\phi_b] - E_P[\phi-\phi_b]| \ge \epsilon) \le \epsilon' $ for $M \rightarrow \infty$.  Let
\begin{equation*}
\sigma_{b,M}^2 = \left( \text{Var}_P(\phi_b) \left( \frac{1}{n_Q} + \frac{1}{n_P} \right) + \text{Var}_P(\EE_P[\phi_b|X,U]) \delta_M^2 \right). 
\end{equation*}
Then, for any $\epsilon' > 0$ there exists a $\epsilon > 0$ such that as $ M \rightarrow \infty$, we have
\begin{equation}\label{eq:small-residual}
 P \left(  \sigma_{b,M}^{-1} |\hat{E}_Q[\phi-\phi_b] - E_P[\phi-\phi_b]| \ge \epsilon   \right) \le \epsilon'.
\end{equation}
For any $\delta > 0$, for $b > 0$ large enough
\begin{equation}
   \text{Var}_P(\phi) \left( \frac{1}{n_P} + \frac{1}{n_Q} \right) + \text{Var}_P(\EE_P[\phi|X,U]) \delta_M^2    \le (1 + \delta)^2 \sigma_{b,M}^2.
\end{equation}
Then, for $ M \rightarrow \infty$,
\begin{align*}
    & \limsup_{M \rightarrow \infty} P \left( \left(  \text{Var}_P(\phi) \left( \frac{1}{n_P} + \frac{1}{n_Q} \right) + \text{Var}_P(\EE_P[\phi|X,U]) \delta_M^2 \right)^{-1/2} \left( \hat{\EE}_P[\phi] - \hat{\EE}_Q[\phi] \right) \le x   \right) \\
    &\le \limsup_{M \rightarrow \infty} P \left( \sigma_{b,M}^{-1} \left( \hat{\EE}_P[\phi] - \hat{\EE}_Q[\phi] \right) \le \max(x(1+\delta),x)   \right) \\
    &\le \limsup_{M \rightarrow \infty} P \left( \sigma_{b,M}^{-1} \left( \hat{\EE}_P[\phi_b] - \hat{\EE}_Q[\phi_b] \right) \le \max(x(1+\delta),x) + \epsilon \right) \\
    & + \limsup_{M \rightarrow \infty} P \left( \sigma_{b,M}^{-1} \left| \hat{\EE}_P[\phi - \phi_b] - \hat{\EE}_Q[\phi - \phi_b] \right|  \ge \epsilon \right) \\
    &\le \Phi(\max(x(1+\delta),x) + \epsilon)  +  \epsilon'.
\end{align*}
In the last line, we used equation~\eqref{eq:clt-bounded-phi} and equation~\eqref{eq:small-residual}. Since $\delta >0$, $\epsilon'>0$ and $\epsilon > 0$ can be chosen arbitrary small, 
\begin{equation*}
    \limsup_{ M \rightarrow \infty} P \left( \left(  \text{Var}_P(\phi) \left( \frac{1}{n_P} + \frac{1}{n_Q} \right) + \text{Var}_P(\EE_P[\phi|X,U]) \delta_M^2 \right)^{-1/2} \left( \hat{\EE}_P[\phi] - \hat{\EE}_Q[\phi] \right) \le x   \right) \le \Phi(x).
\end{equation*}
With an analogous argument,
\begin{equation*}
    \liminf_{ M \rightarrow \infty} P \left( \left(  \text{Var}_P(\phi) \left( \frac{1}{n_P} + \frac{1}{n_Q} \right) + \text{Var}_P(\EE_P[\phi|X,U]) \delta_M^2 \right)^{-1/2} \left( \hat{\EE}_P[\phi] - \hat{\EE}_Q[\phi] \right) \le x   \right) \ge \Phi(x).
\end{equation*}
Thus, as $M \rightarrow \infty$,
\begin{equation*}
    \left( \text{Var}_P(\phi) \left( \frac{1}{n_P} + \frac{1}{n_Q} \right) + \text{Var}_P(\EE_P[\phi|X,U]) \delta_M^2 \right)^{-1/2} \left( \hat{\EE}_P[\phi] - \hat{\EE}_Q[\phi] \right)  \stackrel{d}{\rightarrow} \mathcal{N}(0,1).
\end{equation*}
This completes the proof for one-dimensional $\phi$. The result for a vector of functions $\phi$ follows by applying the Cram\'er-Wold device.

\end{proof}

\subsection{Estimation for conditional variances}
\label{app:subsec_est_variance}

In this part, we detail the estimation of the conditional variances $\Var_P(\phi_P(X))$ and $\Var_P(\phi(X,Y,T)-\EE_P[\phi(X,Y,T)\given X])$ in the construction of our distribution shift measures, which is omitted from Appendix~\ref{app:subsec_est_bound}.

Recall that $P$ is the underlying distribution of the ``source'' site, and $Q$ is that of the ``target'' site. We write the influence function in the most general from $\phi(X,Y,T)$, though in the datasets it is a function of only $(Y,T)$. 
We will use cross-fitting~\citep{chernozhukov2018double} with machine learning models for fitting the conditional mean functions. %For completeness, we will restate our estimation of $\theta(Q_X\times P_{Y|X})$. 
   
The variance estimation only needs data from the source site,  $\cD_1\iid P$. We randomly split $\cD_1$ into two folds  $\cD_1^{(1)}\cup \cD_1^{(2)}$. 
For $k=1,2$, we use $\cD_1^{(k)} $ to fit the conditional mean function $\hat\phi^{(k)}(\cdot)$ for $\phi_P(\cdot):=\EE_P[\phi(X,Y,T)\given X=\cdot]$. 
Then, writing   $\hat\varphi(X_i)=\hat\varphi^{(k)}(X_i)$ for $i\in \cD_1\backslash\cD_1^{(k)}$, and $\hat\varphi(X_j) = \hat\varphi^{(k)}(X_j)$ for  $j\in \cD_2\backslash \cD_2^{(k)}$, we let 
\$
\hat{s}_{Y|X}^2 &= \frac{1}{n_1}\sum_{i\in \cD_1} \big(\phi(X_i,Y_i,T_i)-\hat\varphi(X_i)\big)^2,\\ %\phi(D_i)^2 - \hat\varphi(X_i) (2\phi(D_i)-\hat\varphi(X_i)),\\
\hat{s}_X^2 &= \frac{1}{n_1}\sum_{i\in \cD_1} \hat\varphi(X_i) \cdot \big(2\phi(X_i,Y_i,T_i) - \hat\varphi(X_i) \big) - \bigg(\frac{1}{n_1}\sum_{i\in \cD_1} \phi(X_i,Y_i,T_i)\bigg)^2.
\$ 
We define $\sigma(x) = \Var_P(\phi(X,Y,T)\given X=x)$. 
Our estimators converge in $n^{-1/2}$ rate under standard slow convergence rates of nuisance components. 
\begin{theorem}\label{thm:normal}
% Let $n = \min\{n_1,n_2\}$. 
Suppose  $\|  \hat\varphi^{(k)}-\varphi \|_{L_2(P)} = o_P(n^{-1/4})$, 
 and $\|\sigma\cdot  (\hat{\varphi}^{(k)}-\varphi) \|_{L_2(P)} = o_P(1)$ for $k=1,2$. Then,
\$
\begin{pmatrix}
    % \hat\Delta_{Y|X} \\ 
    % \hat\Delta_X \\ 
    \hat{s}_{Y|X} \\ 
    \hat{s}_{X} 
\end{pmatrix} = \begin{pmatrix}
    %  \Delta_{Y|X} \\ 
    % \Delta_X \\ 
    {s}_{Y|X} \\ 
     {s}_{X} 
\end{pmatrix}  +\frac{1}{n_1}\sum_{i\in \cD_1} \psi_1(X_i,Y_i,T_i)  + o_P(1/\sqrt{n_1 }  )
\$
for some fixed function  $\psi $ with mean zero. As a result, each element is consistent and asymptotically $\sqrt{n}$-normal, and the asymptotic variances can all be consistently estimated.  
\end{theorem}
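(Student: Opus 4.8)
The plan is to first establish an i.i.d.\ influence-function expansion for the \emph{squared} estimators $\hat{s}_{Y|X}^2$ and $\hat{s}_X^2$, and only then transfer it to $\hat{s}_{Y|X}$ and $\hat{s}_X$ by the delta method for $x\mapsto\sqrt{x}$. Writing $\Delta^{(k)}:=\hat\varphi^{(k)}-\varphi$ for the nuisance error on fold $k$ (with $\varphi:=\phi_P$), the organizing observation is that both building blocks are \emph{Neyman-orthogonal}: expanding $(\phi-\hat\varphi)^2$ and $\hat\varphi(2\phi-\hat\varphi)$ around $\varphi$, the first-order perturbation in $\Delta$ is $-2(\phi-\varphi)\Delta$ and $+2(\phi-\varphi)\Delta$ respectively, each of which has conditional mean zero given $X$ because $\EE_P[\phi-\varphi\mid X]=0$. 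This is exactly what lets a slow $\op(n^{-1/4})$ nuisance rate suffice.

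The substantive step is controlling $\Delta$. I would condition fold by fold on $\hat\varphi^{(k)}$, which cross-fitting makes independent of the evaluation observations it acts on, so those observations are fresh i.i.d.\ draws. For $\hat{s}_{Y|X}^2$, decompose $(\phi_i-\hat\varphi(X_i))^2=(\phi_i-\varphi(X_i))^2-2(\phi_i-\varphi(X_i))\Delta(X_i)+\Delta(X_i)^2$. Averaged over the fold, the cross term has conditional mean zero and conditional variance of order $n^{-1}\,\EE_P[\sigma(X)\Delta(X)^2]$ with $\sigma(X)=\Var_P(\phi\mid X)$; Cauchy--Schwarz gives $\EE_P[\sigma\Delta^2]\le\|\sigma\Delta\|_{L_2(P)}\|\Delta\|_{L_2(P)}=\op(1)\cdot\op(1)$, so the term is $\op(n^{-1/2})$. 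The quadratic term equals $\|\Delta\|_{L_2(P)}^2+\op(\cdot)$, which is $\op(n^{-1/2})$ by the $\op(n^{-1/4})$ rate. The $\hat\varphi(2\phi-\hat\varphi)$ piece of $\hat{s}_X^2$ is handled identically, and the remaining $(\tfrac1{n_1}\sum_i\phi_i)^2$ piece carries no nuisance and is expanded by the ordinary delta method for $x\mapsto x^2$. Collecting the surviving averages yields
\[
\begin{pmatrix}\hat{s}_{Y|X}^2\\ \hat{s}_X^2\end{pmatrix}
=\begin{pmatrix}s_{Y|X}^2\\ s_X^2\end{pmatrix}
+\frac{1}{n_1}\sum_{i\in\cD_1}\tilde\psi(X_i,Y_i,T_i)+\op(n^{-1/2}),
\]
with an explicit mean-zero $\tilde\psi=(\tilde\psi_1,\tilde\psi_2)^\top$, e.g.\ $\tilde\psi_1=(\phi-\varphi)^2-s_{Y|X}^2$ and an analogous closed form for $\tilde\psi_2$ built from $2\varphi(X)\phi-\varphi(X)^2$ and $2\,\EE_P[\phi]\,(\phi-\EE_P[\phi])$.

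I would then pass to the square roots: assuming $s_{Y|X},s_X>0$, a first-order Taylor expansion of $\sqrt{\cdot}$ gives the stated expansion with $\psi=\big(\tilde\psi_1/(2 s_{Y|X}),\,\tilde\psi_2/(2 s_X)\big)^\top$, which is mean zero and has finite variance under a mild moment condition ($\phi\in L^4(P)$, needed because $\tilde\psi$ involves second moments). The multivariate CLT applied to the i.i.d.\ average delivers joint $\sqrt{n_1}$-asymptotic normality with covariance $\Var_P(\psi)$, and consistent variance estimation follows by plugging $\hat\varphi$ for $\varphi$ and $\hat{s}$ for $s$ into $\psi$ and averaging, justified by the same cross-fitting argument. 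The hard part is entirely in the first step: killing the nuisance bias without a parametric model. The two enabling ingredients are cross-fitting (which licenses the conditioning and a conditional Chebyshev bound) and Neyman orthogonality (which removes the first-order term). The one subtlety to watch is that $\sigma(X)$ may be \emph{unbounded}, which is precisely why the theorem imposes the extra condition $\|\sigma\cdot(\hat\varphi^{(k)}-\varphi)\|_{L_2(P)}=\op(1)$ alongside $\|\hat\varphi^{(k)}-\varphi\|_{L_2(P)}=\op(n^{-1/4})$; together they bound the cross term via Cauchy--Schwarz, and everything downstream is routine.
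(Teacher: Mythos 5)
Your proposal is correct and follows essentially the same route as the paper's proof: the same three-term expansion of the squared estimators around $\varphi$, the same cross-fitting-plus-conditional-Chebyshev argument showing that the quadratic term and the conditionally mean-zero cross term are $o_P(n^{-1/2})$, and the same delta-method passage from $\hat{s}^2$ to $\hat{s}$, yielding the same influence function. The only cosmetic differences are that you invoke Cauchy--Schwarz to bound $\EE_P[\sigma\Delta^2]$ where the paper uses the assumption $\|\sigma\cdot(\hat\varphi^{(k)}-\varphi)\|_{L_2(P)}=o_P(1)$ directly, and that you explicitly flag the $L^4$ moment condition implicitly needed for the final CLT.
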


\begin{proof}[Proof of Theorem~\ref{thm:normal}]

For simplicity, we write $D_i=(X_i,Y_i,T_i)$. By definition,
\$ 
& \hat{s}_{Y|X}^2 = \frac{1}{n_1}\sum_{i\in \cD_1}\big\{ \phi(D_i) - \hat\varphi(X_i)  \big\}^2 \\ 
&= \frac{1}{n_1}\sum_{i\in \cD_1} \big\{ \phi(D_i) - \varphi(X_i))\big\}^2
+ \frac{1}{n_1}\sum_{i\in \cD_1}  (\hat\varphi(X_i)- \varphi(X_i) )^2   - \frac{2}{n_1}\sum_{i\in \cD_1}  (\hat\varphi(X_i)- \varphi(X_i) )\cdot ( \phi(D_i) - \varphi(X_i) ).
\$
Since $\|\hat\varphi^{(k)}-\varphi\|_{L_2(P)} = o_P(n^{-1/4})$,  and by Markov's inequality, we know that for any fixed $\epsilon>0$, 
\$
&\PP\Bigg[ \bigg| \frac{1}{n_1/2}   \sum_{i\notin \cD_1^{(k)}}  (\hat\varphi(X_i)- \varphi(X_i) )^2  \bigg| >\epsilon \Bigggiven \cD_1^{(k)}\Bigg] \\
&\leq  \frac{2}{ \epsilon^2} \EE\Big[(\hat\varphi(X_i)- \varphi(X_i) )^2   \Bigggiven \cD_1^{(k)}\Big]  = 2\|\hat\varphi^{(k)}-\varphi\|_{L_2(P)} ^2 / \epsilon^2 =o_P(n^{-1/2}).
\$
This implies 
\$
\frac{1}{n_1}\sum_{i\in \cD_1}  (\hat\varphi(X_i)- \varphi(X_i) )^2 = o_P(1/\sqrt{n}).
\$
Also, conditional on $\cD_1^{(k)}$, note that  $ (\hat\varphi(X_i)- \varphi(X_i) )\cdot ( \phi(D_i) - \varphi(X_i) )$ is i.i.d.~with mean zero for all $i\notin \cD_1^{(k)}$. Thus, by Markov's inequality, we have 
\$
&\PP\Bigg[ \bigg| \frac{1}{n_1/2}   \sum_{i\notin \cD_1^{(k)}}   (\hat\varphi(X_i)- \varphi(X_i) )\cdot ( \phi(D_i) - \varphi(X_i) ) \bigg| >\epsilon \Bigggiven \cD_1^{(k)}\Bigg] \\
&\leq  \frac{1}{ \epsilon^2} \EE\Bigg[\bigg( \frac{1}{n_1/2}   \sum_{i\notin \cD_1^{(k)}}   (\hat\varphi(X_i)- \varphi(X_i) )\cdot ( \phi(D_i) - \varphi(X_i) ) \bigg)^2   \Bigggiven \cD_1^{(k)}\Bigg]  \\
&= \frac{4}{\epsilon^2 n_1^2}\sum_{i\notin \cD_i^{(k)}} \EE\Big[ (\hat\varphi(X_i)- \varphi(X_i) )\cdot ( \phi(D_i) - \varphi(X_i) ) ^2\Biggiven \cD_1^{(k)}\Big] \\ 
&= \frac{4}{\epsilon^2 n_1} \|\sigma\cdot(\hat\varphi^{(k)}-\varphi)\|_{L_2(P)}^2.
\$
Given that $\|\sigma\cdot(\hat\varphi^{(k)}-\varphi)\|_{L_2(P)} = o_P(1)$, we know  
\$
&\frac{1}{n_1}\sum_{i\in \cD_1}  (\hat\varphi(X_i)- \varphi(X_i) )\cdot ( \phi(D_i) - \varphi(X_i) )  = o_P(1/\sqrt{n}).
\$
This means 
\$
\hat{s}_{Y|X}^2 - s_{Y|X}^2 = \frac{1}{n_1}\sum_{i\in \cD_1} \big\{ (\phi(D_i) - \varphi(X_i))^2 - s_{Y|X}^2\big\}  + o_P(1/\sqrt{n}).
\$
Similarly, by definition, and due to the fact that $\frac{1}{n_1}\sum_{i\in \cD_1} \phi(D_i) -\EE_P[\phi] = O_P(1/\sqrt{n})$, 
\$
\hat{s}_X^2 &= \frac{1}{n_1}\sum_{i\in \cD_1} \hat\varphi(X_i) (2\phi(D_i) - \hat\varphi(X_i)) - \Big(\frac{1}{n_1}\sum_{i\in \cD_1} \phi(D_i)\Big)^2 \\ 
&= \frac{1}{n_1}\sum_{i\in \cD_1}  \varphi(X_i) (2\phi(D_i) -  \varphi(X_i)) + \frac{2}{n_1}\sum_{i\in \cD_1}  (\hat\varphi(X_i) - \varphi(X_i) ) ( \phi(D_i) - \varphi(X_i)) \\
&\qquad - \frac{1}{n_1}\sum_{i\in \cD_1}  (\hat\varphi(X_i) - \varphi(X_i) )^2 - \Big(\frac{1}{n_1}\sum_{i\in \cD_1} \phi(D_i)\Big)^2 \\ 
&= \frac{1}{n_1}\sum_{i\in \cD_1}  \varphi(X_i) (2\phi(D_i) -  \varphi(X_i))- \Big(\frac{1}{n_1}\sum_{i\in \cD_1} \phi(D_i)\Big)^2 +o_P(1/\sqrt{n}) \\
&= \frac{1}{n_1}\sum_{i\in \cD_1}  \varphi(X_i) (2\phi(D_i) -  \varphi(X_i))- (\EE_P[\phi])^2 - 2\EE_P[\phi]\Big(\frac{1}{n_1}\sum_{i\in \cD_1} \phi(D_i) - \EE_P[\phi]\Big)  +o_P(1/\sqrt{n}) \\ 
&= \frac{1}{n_1}\sum_{i\in \cD_1} \big\{ \varphi(X_i) (2\phi(D_i) -  \varphi(X_i)) -2\EE_P[\phi] \phi(D_i) + (\EE_P[\phi])^2 \big\}  +o_P(1/\sqrt{n}) ,
\$
which further implies 
\$
\hat{s}_X^2 - s_X^2 = \frac{1}{n_1}\sum_{i\in \cD_1} \big\{ \varphi(X_i) (2\phi(D_i) -  \varphi(X_i)) - \EE_P[\phi^2] -2\EE_P[\phi] (\phi(D_i) - \EE_P[\phi])  \big\}  +o_P(1/\sqrt{n}).
\$
By Delta method, the above two results imply 
\$
\hat{s}_{Y|X} - s_{Y|X} &= \frac{1}{n_1}\sum_{i\in \cD_1} \frac{1}{2s_{Y|X}}\big\{ (\phi(D_i) - \varphi(X_i))^2 - s_{Y|X}^2\big\}  + o_P(1/\sqrt{n}),\\
\hat{s}_X - s_X &= \frac{1}{n_1}\sum_{i\in \cD_1} \frac{1}{2s_X}\big\{ \varphi(X_i) (2\phi(D_i) -  \varphi(X_i)) - \EE_P[\phi^2] -2\EE_P[\phi] (\phi(D_i) - \EE_P[\phi])  \big\}  +o_P(1/\sqrt{n}).
\$
These give us the desired asymptotic expansion of the resulting estimators, with 
\$
\psi(X_i,Y_i,T_i) &= \begin{pmatrix}
    % -w(X_i)(\phi(D_i) - \varphi(X_i)\\
    % w(X_i)(\phi(D_i) - \varphi(X_i)) - \phi(X_i) + \EE_P[\phi]\\
    \frac{1}{2s_{Y|X}}\big\{ (\phi(D_i) - \varphi(X_i))^2 - s_{Y|X}^2\big\} \\ 
    \frac{1}{2s_X}\big\{ \varphi(X_i) (2\phi(D_i) -  \varphi(X_i)) - \EE_P[\phi^2] -2\EE_P[\phi] (\phi(D_i) - \EE_P[\phi])  \big\} 
\end{pmatrix}.
% ,\\[2ex]
% & \text{and} \quad 
% \psi_2(X_i)   = \begin{pmatrix}
% \phi(X_j) - \EE_Q[\phi] - \varphi(X_j) + \EE_Q[\varphi] \\ 
%  \varphi(X_j) - \EE_Q[ \varphi(X)]  \\ 
%  0 \\ 
%  0
% \end{pmatrix}.
\$
We thus conclude the proof of Theorem~\ref{thm:normal}.
\end{proof}

\end{document}